\documentclass{article}

\usepackage{amsmath}
\usepackage{amssymb}
\usepackage{amsfonts}
\usepackage{algpseudocode}
\usepackage{authblk}

\usepackage{graphicx}
\usepackage{textcomp}
\usepackage{xcolor}

\usepackage{tikz}
 \usetikzlibrary{trees}
 \usetikzlibrary{shapes}
 \usetikzlibrary{fit}
 \usetikzlibrary{shadows}
 \usetikzlibrary{backgrounds}
 \usetikzlibrary{arrows,automata}

\tikzset{
   n/.style= {circle,fill,inner sep=1.5pt,node distance=2cm}
  ,acc/.style={circle,draw,inner sep=3pt,node distance=2cm}
  ,phantom/.style={circle},
  ,arr/.style={->, >=stealth, semithick, shorten <= 3pt, shorten >= 3pt}
}

\usepackage{xspace}
\usepackage{paralist}
\usepackage{hyperref}
\usepackage[linesnumbered,ruled,vlined]{algorithm2e}

\SetKw{Break}{break}
\SetCommentSty{mycommfont}

\def\BibTeX{{\rm B\kern-.05em{\sc i\kern-.025em b}\kern-.08em
    T\kern-.1667em\lower.7ex\hbox{E}\kern-.125emX}}

\usepackage[appendix=inline]{apxproof} % leave in place

\newcommand{\pair}[1]{\ensuremath{\langle {#1} \rangle}\xspace}
\newcommand{\set}[1]{\ensuremath{\{ {#1} \}}\xspace}
\newcommand{\games}[4]{\ensuremath{\mathit{#1}({#2},{#3},{#4})}\xspace}
\newcommand{\streett}[3]{\games{Streett}{#1}{#2}{#3}}
\newcommand{\rabin}[3]{\games{Rabin}{#1}{#2}{#3}}
\newcommand{\parity}[3]{\games{Parity}{#1}{#2}{#3}}
\newcommand{\el}[3]{\games{EL}{#1}{#2}{#3}}
\newcommand{\attr}{\ensuremath{{\mathsf{Attr}}}\xspace}
\newcommand{\cpre}{\ensuremath{{\mathsf{CPre}}}\xspace}
\newcommand{\Win}{\ensuremath{{\mathit{Win}}}\xspace}

\newcommand{\incr}{\ensuremath{{\mathit{upd}}}\xspace}
\newcommand{\ind}{\ensuremath{{\mathit{ind}}}\xspace}

\newcommand{\child}{\ensuremath{{\mathit{child}}}\xspace}
\newcommand{\leaf}{\ensuremath{{\mathit{leaf}}}\xspace}
\newcommand{\grandchild}{\ensuremath{{\mathit{grandchild}}}\xspace}
\newcommand{\inft}{\mathit{inf}}
\newcommand{\subarena}{{S\!A}}

\SetKwComment{Comment}{$\triangleright$\ }{}

\newtheoremrep{theorem}{Theorem}
\newtheoremrep{lemma}[theorem]{Lemma}
\newtheoremrep{corollary}[theorem]{Corollary}
\newtheoremrep{definition}[theorem]{Definition}

\newtheoremrep{example}[theorem]{Example}
\newtheoremrep{remark}[theorem]{Remark}

\newlength{\defaulttextfloatsep}
\newlength{\defaultfloatsep}
\title{Solving Streett and Emerson-Lei Games \\ with 
Universal Trees
}

\author[1]{Daniel Hausmann}
\author[2]{Marcin Jurdzinski}
\author[3]{Nir Piterman}

\affil[1]{University of Liverpool, Liverpool, UK}
\affil[2]{University Warwick, Coventry, UK}
\affil[3]{University of Gothenburg and Chalmers University of Technology, Gothenburg, Sweden}

\begin{document}

\maketitle

\begin{abstract}
    Nearly a decade ago, Calude et al.~showed that parity games can be solved in quasi-polynomial time.
    This result is now understood in terms of \emph{universal trees}.
    By reduction to parity games, the quasi-polymonial result can benefit all $\omega$-regular games. 
    However, beyond such reductions—and with the exception of Rabin games—our understanding of the role of universal trees in direct solutions is still quite limited.
    In this work, we refute the common view that universal trees are relevant only for games that admit memoryless winning strategies. We contribute a full understanding of how universal trees interact with 
    Zielonka trees for the solution of Streett and Emerson-Lei games.

    As a consequence, we show that winning regions and strategies in Streett games with $n$ vertices, $m$ edges, and $k$ pairs can be computed in time $O(mk\log(k)k!|U(n,k)|)$, where $U(n,k)$ is a universal tree for $n$ leaves and depth $k$.
    This improves upon the best previously known complexity result for Streett games, which relied on reduction to parity games and their quasi-polynomial solution.
    
    Furthermore, we show that winning regions and strategies for Emerson-Lei games with $n$ vertices, $m$ edges, and $c$ colors can be computed in time $O(mc\log(c)c!|U(n,c/2)|)$, again improving over reductions to parity games. 
    Notably, our approach yields memory-optimal strategies, in contrast to those obtained via reductions to parity games.
    Finally, we show how universal trees can be used to bound the recursion tree of the Zielonka-McNaughton algorithm for Emerson-Lei games. This leads to a symbolic algorithm that replaces the factor $n^c$ in the time complexity of existing symbolic approaches with $|U(n,c)|$.     
\end{abstract}

\section{Introduction}
\label{sec:intro}

Infinite duration two-player games have been established 
as a central formalism for the solution of decision problems for various temporal logics.
They encode the interaction of systems (player 0) with their environment (player 1),
and qualitative temporal properties are evaluated by classifying infinite sequences of interactions as winning or losing for
a player. Such qualitative properties subsume safety, reachability, or general liveness
properties, and are typically given in the form of $\omega$-regular winning conditions.

Parity conditions are a well-known example as they 
are relatively simple, yet expressive enough to encode any $\omega$-regular winning condition. In parity games, good and bad events are prioritized
relatively to each other, and the
system player aims to ensure that the highest priority event that occurs infinitely often is good.
Further typical examples include Rabin~\cite{DBLP:journals/jsyml/ElgotR66}, Streett~\cite{DBLP:conf/stoc/Streett81}, and Emerson-Lei~\cite{DBLP:journals/scp/EmersonL87}
conditions that allow to express
more involved combinations of events that should, or should not, occur infinitely often.
Streett conditions correspond to strong fairness and allow to produce strong fair strategies. 
In addition, counter strategies are an important tool to debug unrealizable specifications requiring to solve Streett conditions even if the main interest is a Rabin condition. 
Emerson-Lei conditions have recently attracted
particular interest~\cite{DBLP:conf/atva/RenkinDP20,DBLP:conf/fossacs/HausmannLP24,DBLP:conf/ijcai/AminofGRV25,KR2025-78} as they are general, succinct, and closed under disjunction and conjunction, which enables
modular transformations and constructions of automata and games.

Deciding games with $\omega$-regular winning conditions then amounts to computing the regions in which one of the players
has a strategy to win, that is, to ensure that their objective is satisfied by all interactions that can
arise using that strategy; a full solution includes also the computation of a witnessing winning strategy.

A prominent field in which two-player games are of central importance is \emph{reactive synthesis}~\cite{DBLP:conf/icalp/PnueliR89},
where the task is to decide the realizability of the specification of a desired input/output behaviour, typically
given as an LTL formula, and -- if possible -- to provide a witnessing (synthesized) system. 
This problem is of high interest as it enables the automatic fabrication of safe-by-construction systems. 
The most efficient translations of realizability to games result in Emerson-Lei games \cite{DBLP:journals/corr/abs-1709-02102,DBLP:conf/atva/MajorBSSZ19,DBLP:conf/cav/LiTFVZ22,DBLP:journals/fmsd/RenkinSDP22}, but can also be reduced to the solution of parity games~\cite{DBLP:conf/cav/MeyerSL18,DBLP:conf/atva/RenkinDP20}. 
Synthesized systems are obtained by extracting winning strategies.

In more detail, the winning condition in game reductions for reactive synthesis is derived from the input LTL specification.
The LTL specification is converted to a deterministic automaton, which is then typically
\emph{paritized}, using a latest-appearance-record (LAR) construction to assign priorities to sequences of events, thereby incurring blowup factorial in the number of events~\cite{DBLP:conf/stoc/GurevichH82}. 
An alternative reduction has recently been proposed~\cite{DBLP:journals/theoretics/CasaresCFL24}, using Zielonka trees to transform Muller automata to history-deterministic Rabin automata or
deterministic parity automata, but still incurring factorial blowup in the worst case.
%\np{I think that factorial in $c/2$ is still factorial?}
%\dan{yes}
Games with Streett or Emerson-Lei winning conditions are more succinct and in principle
sideline the paritization by enabling more direct evaluation.
While this direct game solution approach eliminates the need to go through parity (or Rabin)
conditions, it comes with the caveat that the analysis of Streett and Emerson-Lei games is more involved. 

Breakthrough progress on the solution of parity games has shown that 
they can be solved in time quasi-polynomial in the number of game vertices and the
number of priorities~\cite{DBLP:conf/stoc/CaludeJKL017}. 
Subsequently, this result has been consolidated (e.g.~\cite{DBLP:conf/lics/JurdzinskiL17,DBLP:journals/lmcs/LehtinenB20}), and the essential ingredient unifying all known quasi-polynomial methods has been identified and abstracted, leading to the notion of \emph{universal trees}~\cite{DBLP:conf/soda/CzerwinskiDFJLP19}; 
tight upper and lower quasi-polynomial bounds on the size of universal trees have been derived~\cite{DBLP:conf/soda/CzerwinskiDFJLP19}.
While recent work has shown how (colored) universal trees can be used to solve Rabin games~\cite{ DBLP:conf/tacas/MajumdarST24}, further progress has been hindered by the non-positionality of winning conditions that go beyond Rabin conditions: winning strategies in Streett and Emerson-Lei games may require memory as analyzed in detail in~\cite{DBLP:conf/lics/DziembowskiJW97}, using the concept of Zielonka trees. 
Up to now it has remained unclear whether universal trees can be used directly to solve such games.

In this context, our technical contributions are as follows:

\begin{compactitem}
\item[--] \emph{Streett games.}
We introduce a notion of rankings for Streett games based on universal
trees and employ these rankings within a rank-lifting algorithm to compute winning regions
and winning strategies. 
By using quasi-polynomial universal trees instead of exponentially large ones, 
we improve the known upper bound on time complexity for the computation of winning regions and strategies in Streett games
with $n$ vertices, $m$ edges, and $k$ pairs from 
$O(mk!)(nk!)^{1+o(1)}$ to $O\left(k\log(k)mk!\right) n^{\log{k}+O(1)}$.
The proposed algorithm is
\emph{asymmetric} (focused on one player) and \emph{enumerative} (operating on individual vertices).
\item[--] \emph{Emerson-Lei games.}
We extend this approach by introducing universal-tree-based rankings and a rank-lifting algorithm for the more general class of Emerson-Lei games.
This extension is technically more involved and relies on Zielonka trees
to guide the construction of rankings. 
The resulting algorithm again is asymmetric and enumerative,
and improves the upper bound on time complexity
for the computation of winning regions and strategies in Emerson-Lei games with $n$ vertices, $m$ edges, and $c$ colors from 
$O(mc!) (nc!)^{1+o(1)}$ to $O\left(c\log(c)mc!\right) n^{\log(c/2) + O(1)}$.
Importantly, extracted strategies use optimal memory. 
As far as we are aware, this is the first definition of rankings and a rank lifting algorithm for Emerson-Lei games.
\item[--] \emph{Zielonka-McNaughton algorithm.}
Finally, we show how universal trees can be used to bound the recursion tree of the
Zielonka-McNaughton algorithm for Emerson-Lei games. 
Using quasi-polynomial universal trees yields
a decision algorithm with time complexity
$O\left(mcc!\right) n^{2\log(c/2)+O(1)},$
improving over existing approaches.
This algorithm is \emph{symmetric} (computing winning regions of both players simultaneously) and \emph{symbolic} (operating on sets of vertices),
and it generalizes previous quasi-polynomial Zielonka-style algorithms for parity games.
As special cases, it also improves the symbolic computation for Streett and Rabin conditions. 
\end{compactitem}
Thus our algorithms yield factorial improvements in the time complexity for solving Streett (resp.~Emerson-Lei) games, breaking the quadratic factorial $(k!)^2$ (resp.~$(c!)^2$) dependence on the number of pairs (resp.~colors) in previous algorithms.

On a conceptual level, our results show how universal trees and Zielonka trees can be combined to analyze
Streett and Emerson-Lei games. Thus we lift the quasi-polynomial methods (and the associated improved 
worst case runtime guarantees) from parity and Rabin games
to the more general setting of Streett and Emerson-Lei games. Our results also provide additional insight into how (rankings for) these more involved
conditions are composed by a careful interleaving of (rankings of) individual memory-free conditions; 
Zielonka trees serve as precise recipes for this interleaving.

We note that our results apply also to Muller games, where the size of the Zielonka tree is proportional to the explicit representation of the winning condition. We leave the computation of exact bounds in this case as future work. 

\emph{Related work:} 
Progress measure and rank-lifting algorithms have been developed for various types of games, including
parity~\cite{DBLP:conf/stacs/Jurdzinski00}, Rabin~\cite{DBLP:phd/us/Klarlund90}, and Streett~\cite{DBLP:conf/lics/PitermanP06} games.
For parity and Rabin games, it has been shown that progress can be measured by leaves in (colored) universal trees~\cite{DBLP:conf/lics/JurdzinskiL17, DBLP:conf/tacas/MajumdarST24}.
For parity games, quasi-polynomial Zielonka-McNaughton-style algorithms have been proposed, bounding the recursion using
universal trees~\cite{DBLP:conf/mfcs/Parys19,DBLP:journals/lmcs/LehtinenPSW22}.
For Rabin, Streett and Muller games, Liang et al.~\cite{DBLP:conf/mfcs/LiangK025} trade
the factorial dependence on the number of pairs (or colors) for an exponential dependence on arena size,
\emph{deciding} Streett and colored Muller games in time $O(n(kn+m2^n))$ and $O(cm2^n)$, respectively.
In the extreme case $k = \Omega(n)$ (resp.~$c (\log c)^2 = \Omega(n/\log n)$),
this is asymptotically better than our upper bounds for Streett (resp.~Emerson-Lei) games.

Universal graphs were introduced as a unifying framework for games with qualitative and quantitative objectives, including mean-payoff games~\cite{DBLP:conf/mfcs/FijalkowGO20}. This line of work was subsequently extended to study memory requirements for more general objectives~\cite{DBLP:journals/lmcs/ColcombetFGO22,DBLP:journals/theoretics/Ohlmann23,DBLP:conf/icalp/CasaresO23}.

Universal trees and graphs have also been applied to compute nested fixpoints~\cite{DBLP:conf/tacas/HausmannS21, DBLP:conf/csl/ArnoldNP21}, extending 
quasi-polynomial techniques beyond classic parity games to settings with richer branching structures, covering quantitative aspects.

Calude et al. have also shown that assuming the Exponential Time Hypothesis, Emerson-Lei games cannot be solved in time 
% $O(c^c.n^{o(1)})$ 
$c^c \cdot n^{O(1)}$~\cite{DBLP:journals/siamcomp/CaludeJKLS22}. 
Our results are consistent with this lower bound as the size of the universal tree is poly-logarithmic in~$n$.

\emph{Structure:} Section~\ref{sec:prelim} recalls notions relating to games, universal trees, and Zielonka trees. 
We then focus on Streett games in Sections~\ref{sec:zielonka mcnaughton} and~\ref{sec:ranking},
introducing universal-tree-based rankings and a rank-lifting solution algorithm.
Building on this, universal-tree-based rankings and a rank-lifting algorithm for Emerson-Lei games are introduced in Sections~\ref{sec:el} and~\ref{sec:elranking}.
Finally, Section~\ref{sec:symbolic} shows how universal trees can be used to bound the recursion tree of the Zielonka-McNaughton algorithm for Emerson-Lei games.

\section{Preliminaries}
\label{sec:prelim}

\subsection{Arenas, Games, and Winning Conditions}
An \emph{arena} is $A=\pair{V,V_0,V_1,E}$, where $V=V_0\uplus V_1$ is a set of vertices partitioned into $V_0$, the set of vertices owned by player~0, and $V_1$, the set of vertices owned by player~1.
The set $E\subseteq V\times V$ is a \emph{total} set of edges, that is, 
for all $v\in V$, there is some $v'\in V$ such that $(v,v')\in E$. 
Let $E(v)=\{v' ~|~ (v,v') \in E\}$ and $E^{-1}(v)=\{v' ~|~ (v',v) \in E\}$.
A \emph{play} in $A$ is an infinite sequence $\rho=v_1,v_2,\ldots$ such that for all $i\geq 1$ we have $v_{i+1}\in E(v_i)$.
A play starts in $v$ if $v_1=v$.
The set $\inft(\rho)$ is the set of vertices appearing infinitely often in $\rho$.

A \emph{game} is $G=\pair{A,\alpha}$, where $A$ is an arena and $\alpha \subseteq V^\omega$ is a \emph{winning condition}.
The condition $\alpha$ is an \emph{Emerson-Lei} (EL) condition if there is a set $C$ of colors, a map $\gamma:V\rightarrow 2^C$, and a condition $\beta \in \mathbb{B}(GF(\mathbb{B}(C)))$, where $\mathbb{B}(T)$ is the Boolean closure of $T$, such that $w\in \alpha$ iff $\gamma(w)\models \beta$.  
Here, $\gamma(w) \models GF\, \delta$ if there are infinitely many $i$ such that $\gamma(w_i)\models \delta$, for $c\in C$ we have $\gamma(w_i)\models c$ if $c\in \gamma(v_i)$ and Boolean connectives are interpreted as expected.
Notice that we allow vertices to be colored by multiple colors and arbitrary Boolean combinations of colors nested in $GF$ and $FG$. This increases succinctness of EL conditions, is a better match to their symbolic nature (and usage for synthesis), and does not change the algorithms involved. 
The condition $\alpha$ is a \emph{Streett} condition if $C=\{g_1,r_1,\ldots, g_k,r_k\}$ and $\alpha = \bigwedge_{j} (GF\,r_j\to GF\,g_j)$.
We define $G_j=\gamma^{-1}(g_j)$ and $R_j=\gamma^{-1}(r_j)$ and denote the Streett condition by $\{\pair{R_1,G_1},\ldots, \pair{R_k,G_k}\}$.
\emph{Rabin} conditions are also defined over $k$ pairs of colors and \emph{parity} conditions are defined over $k$ colors. Their definition is standard and omitted. 

Given a Streett or Rabin condition over $k$ pairs or a parity condition over $k$ colors, we define $|\alpha|=k$.
We denote by $\el nmc$ the class of Emerson-Lei games whose arenas have $n$ vertices, $m$ edges, and whose winning conditions have $c$ colors. 
Similarly, we denote by \streett nmk, \rabin nmk, and \parity nmk the classes of Streett, Rabin, and parity games, respectively, whose arenas have $n$ vertices, $m$ edges, and whose winning conditions have size $k$.

A \emph{strategy} for player~0 is a function $\sigma:V^* \cdot V_0\rightarrow V$ such that for every sequence $wv\in V^*\cdot V_0$ we have $\sigma(wv)\in E(v)$.

We define the amount of memory that a strategy uses as the minimal size of a transition
system that encodes the memory updates in the strategy according to the colors that are visited in plays
(see, e.g., ~\cite{DBLP:conf/lics/DziembowskiJW97} for details). 
A play $\rho$ is \emph{compatible} with $\sigma$ if for every prefix $v_1,\ldots, v_n$ of $\rho$ such that $v_n\in V_0$ we have $v_{n+1} = \sigma(v_1,\ldots, v_n)$.
A strategy is \emph{winning} if every play compatible with $\sigma$ satisfies the winning condition.
Player~0 wins from vertex $v$ if there exists a strategy $\sigma$ such that every play starting in $v$ and compatible with $\sigma$ is winning.
Given a game $G$, we denote by $\Win_0(G)$ the set of vertices winning for player~0 in $G$.
The notions of strategies and winning for player~1 are defined dually. 
All games considered here are \emph{determined}, that is, we always have $\Win_0(G)\cap \Win_1(G)=\emptyset$ and $\Win_0(G)\cup \Win_1(G)=V$.
\emph{Deciding} a game is the computation of $\Win_0(G)$ or $\Win_1(G)$.
\emph{Solving} a game computes also a winning strategy for player~0.

\begin{example}\label{ex:el}
Let $C=\{a,b,c,d\}$ and consider the condition
$\alpha_1 = (GF\, a \to GF\,  b) \land (GF\, a \to GF\,  c)$, encoding a Streett condition with two pairs $\pair{R_1,G_1}$ and $\pair{R_2,G_2}$
(with $R_1=R_2=\gamma^{-1}(a)$, $G_1=\gamma^{-1}(b)$ and $G_2=\gamma^{-1}(c)$), and
the Emerson-Lei condition $\alpha_2 = \alpha_1 \lor (FG\, \neg d \,\land\, GF\, b)$
which is the disjunction $\alpha_1$ with a Rabin
condition consisting of the single pair $\pair{F_1,E_1}$ (with $F_1=\gamma^{-1}(d)$ and $E_1=\gamma^{-1}(b)$).
Then consider the following game arena (denoting player $0$ nodes by circles,
and player $1$ nodes by boxes).
\begin{center}
\tikzset{every state/.style={minimum size=12pt}}
\begin{footnotesize}
  \begin{tikzpicture}[
    % Default arrow tip
    %-&gt;,&gt;=stealth',shorten &gt;=1pt,
		auto,
    % Default node distance
    node distance=0.8cm,
    % Edge stroke thickness: semithick, thick, thin
    semithick
    ]
     \node[state, label={above:$a,d$}] (0) {$v$};
     \node (yo1) [left of=0] {};
     \node (yo3) [right of=0] {};
     \node[state, label={above: $c$}] (1) [left of=yo1] {$u$};
     \node (yo2) [below of=1] {};
     \node[state, rectangle] (3) [right of=yo2] {$y$};
     \node[state, rectangle, label={above: $d$}] (4) [left of=yo2] {$x$};
     \node (yo) [right of=3] {};
     \node[state, label={above: $a,b$}] (2) [right of=yo]
     {$z$};
     \node[state, rectangle, label={above: $b$}] (5) [right of=yo3] {$w$};
     \path[->] (0) edge [bend right=15] node [pos=0.3,left] {} (1);
     \path[->] (1) edge [bend right=15] node [pos=0.3,left] {} (0);
     \path[->] (2) edge [bend right=15] node [pos=0.3,right] {} (3);
     \path[->] (3) edge [bend right=15] node [pos=0.3,right] {} (2);
     \path[->] (0) edge [bend right=15] node [pos=0.3,right] {} (5);
     \path[->] (5) edge [bend right=15] node [pos=0.3,right] {} (0);
   
     \path[->] (4) edge [bend right=15] node [pos=0.3,left] {} (3);
     \path[->] (1) edge node [pos=0.3,left] {} (4);
     \path[->] (3) edge [bend right=15] node [pos=0.3,left] {} (4);
     \path[->] (3) edge node [pos=0.3,left] {} (0);
     \path[->] (1) edge node [pos=0.3,left] {} (3);
     
  \end{tikzpicture}
\end{footnotesize}
\end{center}

\noindent For condition $\alpha_1$, player $0$ wins the top row
using the strategy $\sigma_1$ that always moves from $u$ to $v$, and alternatingly moves from $v$ to $w$
and from $v$ to $u$; this strategy uses \emph{memory}, that is, it is not positional.
Any play starting in the top row and compatible with $\sigma_1$ visits all colors infinitely often and hence satisfies $\alpha_1$. 
Player $1$ wins the bottom row using the (positional) strategy that always moves from $y$ to $z$, which ensures that
only the colors $a,b$ are visited infinitely often.

Regarding $\alpha_2$, player $0$ again wins the top row using strategy $\sigma_1$.
Player $1$ again wins the bottom row, but this time has to use a strategy with memory:
any play visiting just the color $d$ or just the colors $a,b$ infinitely often satisfies $\alpha_2$.
Let $\sigma_2$ be the strategy that alternates between moving from
$y$ to $x$ and from $y$ to $z$. Any play that starts in the bottom row and is compatible with this strategy
does not satisfy $\alpha_2$ since it
visits the colors $a$ and $d$ infinitely often, but does not visit colors $b$ or $c$ infinitely often.
Notice that for Emerson-Lei conditions (such as $\alpha_2$), both players may require memory to win.
\end{example}

We formally define universal trees and Zielonka trees later. 
For now, we cite and state the following results:

\begin{theoremrep}
Let $u(n,k)$ be the size of the smallest universal tree of depth $\lceil k\rceil$ for $n$ leaves.
Let $z_\alpha$ denote the number of leaves of the Zielonka tree of a winning condition $\alpha$.
\begin{enumerate}
    \item
    We have $u(n,k) \leq {{\lfloor \log(n)\rfloor+k-1}\choose{k-1}}$, which is $O(n^{\log(k)})$ \cite{DBLP:conf/soda/CzerwinskiDFJLP19}. If $k$ is $o(\log(n))$, then $u(n,k)$ is $O(n^{1+o(1)})$ \cite{DBLP:conf/lics/JurdzinskiL17}.
    \item
    \parity nm{2k} games can be solved in time $O(m \cdot u(n,k))$ \cite{DBLP:conf/stoc/CaludeJKL017,DBLP:conf/lics/JurdzinskiL17}
    and
	\rabin nmk games can be solved in time $O(mk!\cdot u(n,k))$ \cite{DBLP:conf/tacas/MajumdarST24}.
    \item
    Deciding Streett games is co-\textsc{NP}-complete \cite{DBLP:conf/focs/EmersonJ88,DBLP:journals/siamcomp/EmersonJ99}.
    \item 
    \streett nmk games can be decided in time $O(mk!\cdot u(n,k))$ \cite{DBLP:conf/tacas/MajumdarST24}
and solved in time $O(mk!\cdot u(nk!,k))$ \cite{DBLP:conf/stoc/Safra92,DBLP:conf/lics/JurdzinskiL17}, which is $O(mk!(nk!)^{1+o(1)})$.
    \item
    Deciding EL games is \textsc{PSpace}-complete \cite{DBLP:conf/mfcs/HunterD05}.
    \item 
    For EL condition $\alpha$ over $c$ colors, we have $z_\alpha \leq c!$ \cite{DBLP:conf/lics/DziembowskiJW97}. 
    \item
    \el nmc games can be solved in time $O(mz_\alpha\cdot u(nz_\alpha,c/2))$, which is $O(m z_\alpha(nz_\alpha)^{1+o(1)})$ when $z_\alpha$ is exponential in $c$ \cite{DBLP:conf/stoc/GurevichH82,DBLP:conf/lics/JurdzinskiL17}.
    The resulting strategy uses memory size at most $z_\alpha$ \cite{DBLP:journals/theoretics/CasaresCFL24}.
 \end{enumerate}
 \label{theorem:existing results}
\end{theoremrep}
We include further explanations for (4) and (7) below. 

\begin{proof}
    For (4), a \streett nmk game can be decided by deciding the dual Rabin game with (3) and
    solved %For (6), a \streett nmk game can be solved 
    by constructing a \parity{nk!}{mk!}{2k} game using the index appearance record \cite{DBLP:conf/stoc/Safra92} and applying (2).
    For (7), an \el nmc game can be solved by constructing an equivalent \parity {nc!}{mc!}{c} game using a variant of the later appearance record \cite{DBLP:conf/stoc/GurevichH82} and applying (2). 
    Using the leaves of the Zielonka tree as in \cite{DBLP:journals/theoretics/CasaresCFL24} is always more efficient and replaces $c!$ by $z_\alpha$. 
    Another alternative for (7) is to convert an \el nmc game to a \rabin {nm^+_\alpha}{mm^+_\alpha}{\lceil c/2\rceil} game, where $m^+_\alpha$ is the worst case memory requirement for the condition $\alpha$ and is always bounded by $z_\alpha$. 
    This is done by using a history-deterministic Rabin automaton for the winning condition \cite{DBLP:journals/theoretics/CasaresCFL24}. Using the approach in (2) to solve the Rabin game would yield the time complexity $O(mm^+_\alpha(\lceil c/2\rceil)!u(nm^+_\alpha,c/2))$. 
\end{proof}

Our main technical result is replacing $u(nk!,k)$ by $u(n,k)$ in (4) and replacing $u(nz_\alpha,c/2)$ by $u(n,c/2)$ in (7),
thereby improving the solution time of both Streett and EL games (by a factor of $k!$ and $c!$).%
\footnote{Even if we consider a combination of \cite{DBLP:journals/theoretics/CasaresCFL24,DBLP:conf/tacas/MajumdarST24} a non-linear factor of $m^+_\alpha$ is removed, recalling that $m^+_\alpha$ could be $(c/2)!$.}
Furthermore, the memory used by the strategy that we produce is optimal rather than $z_\alpha$ as in the reduction to parity games as in (7).

\subsection{Trees, Tree Embeddings, and Universal Trees}

A tree $T$ is a prefix-closed subset of $\mathbb{N}^*$.
That is, if $wv\in T$ for $w\in \mathbb{N}^*$ and $v\in \mathbb{N}$, then $w\in T$ as well.
We call $\epsilon$ the root of the tree, $wv$ is a child of $w$ and given $w'\in \mathbb{N}^+$, $ww'$ is a descendant of $w$.
An element $w\in T$ is a node and if there is no $v$ such that $wv\in T$ then $w$ is a leaf. 
Given a tree $T$, let $\leaf(T)$ denote the set of leaves of $T$.
An $A$-labeled tree is $\pair{T,\tau}$, where $T$ is a tree and $\tau:T\rightarrow A$.
Notice that trees have \emph{nodes} and games have \emph{vertices}.

We order the nodes in trees and whole trees using lexicographic order as follows. 
Let $t=t_1\cdots t_k$ and $s=s_1\cdots s_k$ be two nodes.
We write $t<s$ if there is a $j$ such that for every $i\leq j$ we have $t_i=t_i$ and $t_{j+1}<s_{j+1}$ or $t_{j+1}$ does not exist (i.e., $t$ is an ancestor of $s$).
We have $t\leq s$ if $t<s$ or $t=s$.
We use prefix comparisons between nodes using the notation $<_j$ and $\leq_j$ as follows.
We say $t<_j s$ if $t_1\cdots t_j < s_1 \cdots  s_j$ and similarly $t \leq_j s$.
Notice that for all nodes, we have $t\leq_0 s$ and $s \leq_0 t$.
Indeed $\leq_0$ compares the 0 length prefix of both, which is $\epsilon$. 
Given trees $T_1, T_2$ we write $T_1<T_2$ if for all $l_1\in T_1$, $l_2\in T_2$ we have $l_1<_1 l_2$. 
We do not use such comparisons if $|T_1|<2$ or $|T_2|<2$. 

A \emph{depth-$k$ tree} ($k$-tree) is a tree $T$ such that $\mathit{leaf}(T)\subseteq \mathbb{N}^k$.
An \emph{$n$-leaf depth-$k$ tree} ($(n,k)$-tree) is a $k$-tree with at most $n$ leaves.
Note that the only $0$-tree is a root, i.e., a $(1,0)$-tree. 
We freely use the notation $k$-tree for non-integer $k$ to mean $\lceil k\rceil$-tree.
Given two $k$-trees $T_1$ and $T_2$, $T_1$ \emph{embeds} into $T_2$ if there is an order respecting embedding $e:\leaf(T_1) \rightarrow \leaf(T_2)$.
That is, for every $l_1,l_2\in \leaf(T_1)$ and every index $i\leq k$ we have $l_1<_il_2$ iff $e(l_1)<_ie(l_2)$.

\begin{definition}
  A $k$-tree is $(n,k)$-universal if it embeds every $(n,k)$-tree.
\end{definition}

It is well known that %One can show that the set 
$\{1,\ldots, n\}^{\leq k}$ is an $(n,k)$-universal tree; it is of exponential size.
Following the seminal result of \cite{DBLP:conf/stoc/CaludeJKL017}, Jurdzinski and Lazic \cite{DBLP:conf/lics/JurdzinskiL17} constructed a quasi-polynomial universal tree \cite{DBLP:conf/soda/CzerwinskiDFJLP19}.

\begin{theorem}[{\rm \hspace{-0.1pt}\cite{DBLP:conf/lics/JurdzinskiL17}}]
For every $n$ and $k$, there exists an $(n,k)$-universal tree $U(n,k)$ such
that $|U(n,k)|$ and $|\leaf(U(n,k))|$ are $O(n^{\log(k)})$. 
\end{theorem}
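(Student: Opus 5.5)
The plan is the Jurdziński–Lazić recursive construction: define $U(n,k)$ by recursion on $k$ and $n$, prove universality by induction, and bound the size by a binomial coefficient. For $k=0$ let $U(n,0)$ be the root alone; for $n=1$ let $U(1,k)$ be a single path of length $k$. For $n\geq 2$ and $k\geq 1$, build $U(n,k)$ as a root with three kinds of subtrees under ordered children. On the left are copies of $U(\lfloor n/2\rfloor,k)$ with their roots removed, meaning the children of their roots are hung directly under the new root. In the middle is one child whose subtree is $U(n,k-1)$. On the right are again copies of $U(\lfloor n/2\rfloor,k)$ with the root stripped. Equivalently, the root's children list is $L\cdot c\cdot R$, where $L$ and $R$ are the top-level children of $U(\lfloor n/2\rfloor,k)$ (with their subtrees) and $c$ is a new child carrying $U(n,k-1)$.

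Universality is proved by induction on $n+k$. Take an $(n,k)$-tree $T$ with root children $c_1<\dots<c_j$, and let $T_i$ be the subtree of $c_i$, with $n_i$ leaves and $\sum_i n_i\leq n$. Choose the unique index $p$ such that $\sum_{i<p}n_i\leq n/2$ and $\sum_{i>p}n_i\leq n/2$; this is the weighted median. The forest $c_1,\dots,c_{p-1}$ forms an $(\lfloor n/2\rfloor,k)$-tree under a common root, so by induction it embeds into $U(\lfloor n/2\rfloor,k)$, and hence into the left part $L$. The same argument sends $c_{p+1},\dots,c_j$ into $R$. The subtree $T_p$ is an $(n,k-1)$-tree, so it embeds into $U(n,k-1)$ under $c$. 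Because $L<c<R$ at the top level and the embeddings preserve the order relations $<_i$ within each part, the combined map preserves $<_i$ for all $i\leq k$. One detail needs care here. The left embedding maps distinct top-level children of $T$ to distinct top-level children of $L$, and it preserves the first-coordinate order exactly, since the induction hypothesis gives $<_1$-preservation inside $U(\lfloor n/2\rfloor,k)$. Comparisons between parts are decided at level $1$.

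For the size bound, let $\ell(n,k)$ be the number of leaves. The construction gives $\ell(n,k)\leq 2\ell(\lfloor n/2\rfloor,k)+\ell(n,k-1)$, with $\ell(1,k)=1$ and $\ell(n,0)=1$. Writing $h=\lfloor\log n\rfloor$, we show by induction that $\ell(n,k)\leq 2^{h}\binom{h+k-1}{k-1}$, or more simply $\ell(n,k)\leq n\binom{h+k}{k}$. This follows from Pascal's rule: $2\cdot(n/2)\binom{h-1+k}{k}+n\binom{h+k-1}{k-1}=n\binom{h+k}{k}$. The node count $|U(n,k)|$ is at most $k+1$ times the leaf count, since each leaf has depth $k$, and the extra factor $k$ is absorbed.

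It remains to estimate $n\binom{\log n+k}{k}$. If $k\leq\log n$, the bound $\binom{\log n+k}{k}\leq(e(\log n+k)/k)^k\leq(2e\log n/k)^k$ combined with the standard calculation gives $n^{O(\log(k))}$, and more precisely $n^{\log k+O(1)}$ via $\binom{a+b}{b}\leq 2^{a}\cdot\ldots$ arguments. If $k>\log n$, use the symmetric form $\binom{\log n+k}{\log n}\leq(e(1+k/\log n))^{\log n}=n^{\log(e(1+k/\log n))}=O(n^{\log k+O(1)})$. I expect this final estimate, getting exactly the form $O(n^{\log(k)})$ as stated and not merely $n^{\log k+O(1)}$, to be the fiddliest step. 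If needed, I would follow the sharper recurrence from \cite{DBLP:conf/lics/JurdzinskiL17}, which omits the factor $n$ by counting leaves more carefully and gives $\binom{\lfloor\log n\rfloor+k-1}{k-1}$ up to linear factors, and then absorb the remaining constants as that paper does.
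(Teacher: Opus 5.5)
Your construction is the paper's $U(n,k)$, except that you use the Jurdzi\'nski--Lazi\'c base case (a single root for $k=0$) instead of the paper's $U(n,1)=\{1,\ldots,n\}$. Your weighted-median split is exactly the argument in the paper's proof that $U(n,k)$ is hierarchical, where the split index is the maximal $j$ with $\sum_{i<j}n_i\le\lfloor n/2\rfloor$. Such an index always exists, though it is not unique in general. So the universality part is correct and follows the paper. The paper does not prove the size bound; it only cites it. Your quantitative argument has two problems.

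First, the bound $\ell(n,k)\le 2^h\binom{h+k-1}{k-1}$ is false for your tree. For $n\ge 2$ your base case gives $\ell(n,1)=2\ell(\lfloor n/2\rfloor,1)+1=2^{h+1}-1>2^h$. Drop it. The bound $\ell(n,k)\le n\binom{h+k}{k}$ via Pascal's rule is correct, and your two estimates of it do give $n^{\log k+O(1)}$.

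Second, the step you defer to the end cannot be carried out as you describe. No sharper count omits the factor $n$ and gives $O\bigl(\binom{\lfloor\log n\rfloor+k-1}{k-1}\bigr)$ leaves. Every $(n,k)$-universal tree has at least $n$ leaves, because the embedding of an $(n,k)$-tree with $n$ leaves is injective, whereas that binomial equals $1$ for $k=1$. For the same reason the bound read literally as $O(n^{\log k})$ fails in several places:
\begin{itemize}
\item at $k=1$, where $n^{\log 1}=1$;
\item at $k=2$, where your tree has $\Theta(n\log n)$ leaves;
\item in the node count, where the factor $k+1$ is absorbed only when $k$ is polynomially bounded in $n$ (e.g.\ $|U(2,k)|=(k+1)^2$, while $2^{\log k}=k$).
\end{itemize}
The statement has to be read as $n^{\log k+O(1)}$, with $k$ polynomially bounded in $n$ for the node count. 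This is the form the paper itself uses in its introduction, and your argument already proves exactly that. Say so, instead of promising to absorb the constants ``as that paper does''.

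Where the literal form does hold, you need nothing beyond your own bound. For $k\ge 3$, a less lossy estimate of $n\binom{h+k}{k}$ gives $O(n^{\log k})$ leaves. For instance, for $k\ge\log n$ your computation actually yields the exponent $\log k+\log(4e)-\log\log n$, which is at most $\log k$ once $\log\log n\ge\log(4e)$.
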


For a tree $T$ we define $T{\Downarrow_1} = \set{n_1 ~|~ n_1\cdots n_k \in T}$ and $|T|_1=|T{\Downarrow_1}|$.
If $|T_1|_1=1$ we put $T{\Uparrow_1}= \{n_2\cdots n_k ~|~ n_1\cdot n_2\cdots n_k \in T\}$. 

Given a tree $U$, let $U_1,\ldots U_r \subseteq U$ be such that $\bigcup_{i}U_i=U$, and for all $i< j$ we have that $U_i$ is a tree, $|U_i|_1=1$, $|U_i \cup U_j|_1=2$, and $U_i<U_j$.
That is, if $n_1,\ldots, n_r$ are the children of the root, and $S_i$ is the subtree of $U$ rooted at $n_i$, then $U_i=\{\epsilon\} \cup S_i$. 

We base our technical development on the following central concept of
hierarchical trees.

\begin{definition}
  An $k$-tree $U$ is \emph{$(n,k)$-hierarchical}, if $k=1$ and it has at least $n$ leaves or, for $k>1$, 
  if for every partition $n_1$, $\ldots$, $n_r$ of $n$, i.e., $\sum_i n_i=n$, there exist $i_1,\ldots, i_r$ such that $U_{i_j}\Downarrow_1$ is $(n_j,k-1)$-hierarchical and $U_{i_j}<U_{i_{j'}}$ for $j<j'$.
\end{definition}

The hierarchical property above expresses the existence of embeddings in a recursive way. Thus, it can be seen as an alternative, slightly more strict formalization of universality:

\begin{lemma}\label{lemma:hierarchicalisuniversal}
    Every $(n,k)$-hierarchical tree is $(n,k)$-universal.
\end{lemma}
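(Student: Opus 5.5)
We argue by induction on $k$ that every $(n,k)$-hierarchical tree $U$ embeds every $(n,k)$-tree $T$; we fix an arbitrary $(n,k)$-tree $T$ and construct an order-respecting map $e:\leaf(T)\to\leaf(U)$ from the recursive structure of $T$.

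\emph{Base case $k=1$.} Here $T$ is a root with at most $n$ leaf children, and $U$ has at least $n$ leaves, all of depth one. Mapping the leaves of $T$ in increasing order to the first $|\leaf(T)|$ leaves of $U$ is strictly monotone. The condition ``$l_1<_1 l_2$ iff $e(l_1)<_1 e(l_2)$'' is then exactly strict monotonicity, and the condition for $i=0$ holds trivially. The degenerate case where $T$ has no leaves is immaterial.

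\emph{Inductive step $k>1$.} Decompose $T$ into $T_1<\dots<T_r$ according to the children of its root, and let $n_j=|\leaf(T_j)|\ge 1$, so that $\sum_j n_j\le n$. Add the slack $n-\sum_j n_j$ to $n_r$ to obtain a genuine partition of $n$; enlarging $n_r$ is harmless, since an $(n_r,k-1)$-tree is also an $(n_r',k-1)$-tree for $n_r'\ge n_r$. Hierarchicality of $U$ yields indices $i_1,\dots,i_r$ with $U_{i_1}<\dots<U_{i_r}$ such that each subtree below the root child of $U_{i_j}$ is $(n_j,k-1)$-hierarchical. We read the operator in the definition as the $\Uparrow_1$ operation; the $\Downarrow_1$ in the definition appears to be a typo, and it is exactly this kind of notational point that must be settled before the induction can go through. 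The subtree of $T_j$ below its root child, namely $T_j{\Uparrow_1}$, is an $(n_j,k-1)$-tree, so by the induction hypothesis there is an embedding $e_j$ of it into $U_{i_j}{\Uparrow_1}$. Define $e$ on a leaf $a\cdot w$ of $T_j$ by $e(a\cdot w)=b_j\cdot e_j(w)$, where $b_j$ is the root child of $U_{i_j}$.

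\emph{Verifying that $e$ respects order.} Take leaves $l\in T_j$ and $l'\in T_{j'}$. If $j\ne j'$, say $j<j'$, then $l<_i l'$ holds for every $i\ge1$, and since $b_j<b_{j'}$ also $e(l)<_i e(l')$ for every $i\ge1$; for $i=0$ both relations fail, so the equivalence holds at every index. If $j=j'$, then the first coordinates of $l,l'$ agree, as do those of their images, so for $i\ge1$ we have $l<_i l'$ iff $w<_{i-1}w'$, where $w,w'$ are the parts of $l,l'$ below the root child. By the embedding property of $e_j$ this holds iff $e_j(w)<_{i-1}e_j(w')$, which in turn holds iff $e(l)<_i e(l')$. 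Injectivity of $e$ follows from these equivalences at index $k$.

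The only delicate point is bookkeeping: padding the partition when $T$ has fewer than $n$ leaves, and correctly interpreting $\Downarrow_1$ versus $\Uparrow_1$ when passing to depth $k-1$. The rest is a direct unfolding of the definitions.
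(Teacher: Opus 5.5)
Your proof is correct and follows the paper's recursive construction: map the children of $T$'s root, in order, to the ordered root-subtrees of $U$ supplied by hierarchicality, and recurse into them. Padding the partition fixes a point the paper glosses over, since it asserts the leaf counts sum to $n$ although an $(n,k)$-tree has at most $n$ leaves; reading $\Downarrow_1$ as $\Uparrow_1$ and checking order-preservation explicitly also make precise what the paper leaves implicit.
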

\begin{proof}
Let $U$ be an $(n,k)$-hierarchical tree. We have to show that every $(n,k)$-tree embeds into $U$. 
Let $T$ be an $(n,k)$-tree. 
Define the required embedding recursively as follows. 
Map the root of $T$ to the root of $U$. 
If $k=1$ and $U$ has at least $n$ leaves, then we are done. 
Otherwise, let $t_1, .., t_r$ be the children of the root of $T$. 
For each $i$, let $n_i$ denote the number of leaves below $t_i$. Then the sum of all $n_i$ is $n$ (since $T$ has $n$ leaves). 
Since $U$ is $(n,k)$-hierarchical, there are children $u_1, .., u_r$ of the root of $U$ (ordered in the same way as the children $t_1, .., t_r$ are ordered in $T$) such that for all $i$, $u_i$ is the root of an $(n_i,k-1)$-hierarchical tree. 
For all $i$, continue the recursive construction of the embedding with (the subtrees rooted at) $t_i$ and $u_i$.
\end{proof}

We recall the construction of 
the universal tree $U(n,k)$ of Jurdzinski and Lazic, and show that this tree is hierarchical. 
Simple trees for $n=1$ or $k=1$ are constructed directly.
For larger trees, we take inductively two copies of $U(\lfloor n/2\rfloor,k)$ and one copy of $U(n,k-1)$.
We attach them at the root putting the first copy of $U(\lfloor n/2\rfloor,k)$ on the left, adding to the right of it a child of the root under which we put $U(n,k-1)$, and finally putting the second copy of $U(\lfloor n/2\rfloor,k)$ to the right of this new child. Formally, we have the following. 

Given a node $n=n_1\cdots n_k$ we denote by $n+i$ the node $(n_1+i)\cdot n_2\cdots n_k$.
That is, increment the first value in $n$ by $i$. 
Given a tree $T$ we denote by $T+i$ the tree $\{n+i ~|~ n\in T\}$.
That is, we increment the first value in all nodes in $T$ by $i$.
We denote by $i\cdot n$ the node $i\cdot n_1\cdots n_k$ and by $i\cdot T$ the tree $\{i\cdot n ~|~ n\in T\}$.
Notice that if $T$ is a $k$-tree then $i\cdot T$ is a $(k+1)$-tree.

Let $U(1,k)={1^k}$ and $U(n,1)=\{1,\ldots, n\}$.
For $n>1$ and $k>1$, the $(n,k)$-universal tree $U(n,k)$ is obtained from combining the trees $U(\lfloor n/2 \rfloor,k)$ and $U(n,k-1)$ as follows.
Put $U_l = U(\lfloor n/2\rfloor, k)$, $U_m = (|U_l|_1+1)\cdot U(n,k-1)$, and $U_r = U(\lfloor n/2 \rfloor, k)+|U_l|_1+1$.
Finally, $U(n,k)=U_l \cup U_m \cup U_r$. 
Clearly, $U_l<U_m<U_r$.%, where $T_1<T_2$ if for all $l_1\in T_1$, $l_2\in T_2$ and $i\leq k$ we have $l_1<_i l_2$. 
    
\begin{lemma}\label{lemma:hierarchical}
    The tree $U(n,k)$ is $(n,k)$-hierarchical.
\end{lemma}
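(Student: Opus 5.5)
The plan is to argue by induction on $k$, and for fixed $k$ by induction on $n$, following the recursive construction of $U(n,k)$. First I would read the condition in the definition of hierarchical trees as a condition on the subtree below the $i$-th child of the root, that is, on $U_{i}{\Uparrow_1}$. I would also assume the parts $n_j$ of a partition are positive, since zero parts can simply be dropped.

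\textbf{Monotonicity.} Before the main induction I would prove an auxiliary claim: if $U$ is $(n,k)$-hierarchical and $n'\le n$, then $U$ is $(n',k)$-hierarchical. The proof is by induction on $k$. For $k=1$ it is immediate, since $U$ has at least $n\ge n'$ leaves. For $k>1$, take a partition $n'_1,\ldots,n'_r$ of $n'$ and enlarge its last part to get the partition $n'_1,\ldots,n'_{r-1},n'_r+(n-n')$ of $n$. This partition yields ordered children; the last of them is $(n'_r+n-n',k-1)$-hierarchical, and hence $(n'_r,k-1)$-hierarchical by the induction hypothesis on $k$.

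\textbf{Base cases.} For $k=1$, $U(n,1)=\{1,\ldots,n\}$ has $n$ leaves. For $n=1$, the only partition is $(1)$. Here $U(1,k)=\{1^k\}$ has a single root child, and the subtree below it is $U(1,k-1)$, which is $(1,k-1)$-hierarchical by induction on $k$.

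\textbf{Inductive step ($n,k>1$).} Fix a partition $n_1,\ldots,n_r$ of $n$. Choose $j$ minimal such that $n_1+\cdots+n_j>\lfloor n/2\rfloor$; such a $j$ exists because $n>\lfloor n/2\rfloor$. Then two bounds hold:
\[
\sum_{i<j}n_i\le\lfloor n/2\rfloor, \qquad \sum_{i>j}n_i<n-\lfloor n/2\rfloor=\lceil n/2\rceil,
\]
so also $\sum_{i>j}n_i\le\lfloor n/2\rfloor$. The three groups of parts are placed as follows.
\begin{itemize}
\item[(a)] The parts $n_1,\ldots,n_{j-1}$ go into $U_l=U(\lfloor n/2\rfloor,k)$. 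By the induction hypothesis on $n$, this tree is $(\lfloor n/2\rfloor,k)$-hierarchical. By monotonicity it is also $(\sum_{i<j}n_i,k)$-hierarchical, so it supplies ordered root children for these parts.
\item[(b)] The part $n_j$ goes to the middle child $|U_l|_1+1$. The subtree below it is $U(n,k-1)$, which is $(n,k-1)$-hierarchical by the induction hypothesis on $k$, hence $(n_j,k-1)$-hierarchical by monotonicity.
\item[(c)] The parts $n_{j+1},\ldots,n_r$ go into the children of $U_r$. This tree is a shifted copy of $U(\lfloor n/2\rfloor,k)$, and the shift does not change the subtrees.
\end{itemize}
Since $U_l<U_m<U_r$, the chosen children appear in the required left-to-right order.

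The main obstacle I expect is the bookkeeping rather than any deep step. Specifically, I need to verify that the split index $j$ satisfies both bounds when $n$ is odd, which is where the strict inequality giving $\lceil n/2\rceil$ is used. I also need the monotonicity claim to handle the fact that the side groups need not sum exactly to $\lfloor n/2\rfloor$; this is why I prove it first.
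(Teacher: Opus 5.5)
Your proof is correct and follows the same route as the paper: lexicographic induction on $(k,n)$, with the split at the part whose partial sum first exceeds $\lfloor n/2\rfloor$ (the paper's ``maximal $j$ with $\sum_{i<j}n_i\le\lfloor n/2\rfloor$'' is the same index), sending the earlier parts into $U_l$, that part into $U(n,k-1)$, and the later parts into $U_r$. Your explicit monotonicity claim, your separate $n=1$ base case, and your check that the later parts sum to at most $\lfloor n/2\rfloor$ make rigorous several steps that the paper's proof leaves implicit.
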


\begin{proof}
    The proof is by induction on $(k,n)$, ordered lexicographically.
    If $k=1$, then there is nothing to show. % since all $(n,1)$-trees are $(n,1)$-hierarchical. 
    For the inductive step, assume that $U(n',k-1)$ is $(n',k-1)$-hierarchical for all $n'\leq n$.
    Consider a partition $n_1,\ldots, n_r$ of $n$. Let $j$ be the maximal index such that $\sum_{i<j}n_i\leq \lfloor n/2\rfloor$.
    By assumption, $U_l$ and $U_r$ are $(\lfloor n/2 \rfloor, k)$-hierarchical trees; this hierarchical
    property of $U_l$ and $U_r$ yields, for each $i\neq j$, a hierarchical tree $U_{i}$ such that $U_{i}$ is $(n_i,k-1)$-hierarchical.
    For $U_{j}$, we use $U(n,k-1)$ which is $(n,k-1)$-hierarchical (hence $(n_j,k-1)$-hierarchical) by assumption.
    By construction, all the subtrees identified under $U_r$ are to the left of the subtrees identified under $U_m$ and $U_l$ and those under $U_m$ is to the left of those under $U_r$. 
\end{proof}

We note that other universal trees, namely $\{1,\ldots, n\}^{\leq k}$ and Parys' universal tree \cite{DBLP:conf/mfcs/Parys19}, are also hierarchical.
This follows directly from their recursive definition.
One simple universal tree that is not necessarily hierarchical (but very large) is the tree obtained by taking all $(n,k)$-trees and merging them at the root.
The resulting tree is clearly universal but not necessarily hierarchical. 
We conjecture that every ``reasonably sized'' universal tree is hierarchical. 

We also note that universal trees as presented above, are slightly different from the definition of Jurdzinski and Lazic~\cite{DBLP:conf/lics/JurdzinskiL17} in that the tree $U(n,1)$ in their definition is obtained by a recursive decomposition from $U(1,0)$, which is the tree with a single node, namely, the root. 
In the worst case, where $n=2^r$, our $U(2^r,1)$ above has $2^r$ leaves while the tree in~\cite{DBLP:conf/lics/JurdzinskiL17} has $2^{r+1}-1$ leaves. 
This does not change the estimate on the total size of the tree. 

\subsection{Zielonka Trees of Emerson-Lei conditions}

Let $\alpha$ be an EL condition over set $C$ of colors.
We say that $C'\subseteq C$ satisfies $\alpha$ if for every play $\rho$ such that $\inft(\rho)=C'$, we have $\rho\in \alpha$, and denote this situation by $C'\models \alpha$.
We say that $C'\subseteq C$ does not satisfy $\alpha$ if for every play $\rho$ such that $\inft(\rho)=C'$, we have $\rho\not\in \alpha$, denoted $C'\not\models \alpha$.
For a set $C'$ such that $C'\models \alpha$, define 
$$\textstyle \max_1(C')=\left \{
C'' \left | 
\begin{array}{l}
C'' \subsetneq C', C''\not\models \alpha,\mbox{ and}\\
\forall D~.~ C''\subsetneq D \subsetneq C' \mbox{ implies } D \models \alpha
\end{array}
\right . \right \}.$$

For a set $C'$ such that $C'\not\models\alpha$, define
$$
\textstyle \max_0(C')=\left \{ C'' 
\left | 
\begin{array}{l}
C'' \subsetneq C', C''\models \alpha, \mbox{ and}\\
\forall D~.~ C'' \subsetneq D\subsetneq C' \mbox{ implies } D \not\models \alpha
\end{array}
\right .
\right \}.$$

Given an EL condition we define its \emph{Zielonka tree} \cite{DBLP:conf/lics/DziembowskiJW97} to be the minimal labeled tree $Z(\alpha)=\pair{T,\tau}$, where $\tau:T\rightarrow 2^C$ satisfies $\tau(\epsilon)=C$ and for every node $n\in T$:
\begin{compactitem}
    \item[--] if $\tau(n)\models \alpha$ and $\max_1(\tau(n))=\{C_1,\ldots, C_m\}$ then $n\cdot 1,\ldots, n\cdot m \in T$ and $\tau(n\cdot i)=C_i$;
    \item[--] if $\tau(n)\not\models \alpha$ and $\max_0(\tau(n))=\{C_1,\ldots, C_m\}$ then $n\cdot 1,\ldots, n\cdot m \in T$ and $\tau(n\cdot i)=C_i$.
\end{compactitem} 
For a node $t\in T$ we denote by $\child(t)$ the set of children of $t$ and by $\grandchild(t)$ the set of grandchildren of $t$, 
that is, $\grandchild(t)=\bigcup_{t'\in \child(t)}\child(t')$.
Notice that if $n\cdot(i+1)$ is a node in $T$, then $n\cdot i$ is also a node in $T$. We denote by $z_\alpha$ the number of leaves of $Z(\alpha)$, $d_\alpha$ for its maximal depth, and
refer to nodes $t$ such that $\tau(t)\models\alpha$ as \emph{winning}, and to the remaining nodes as \emph{losing}.
We note that the size of $Z(\alpha)$, its number of leaves ($z_\alpha$), and its depth ($d_\alpha$) are independent of the order chosen for children. 
Similarly, correctness and upper bounds for algorithms computations do not depend on this order. 

In the special case that $\alpha$ is a Streett condition, $Z(\alpha)$ does not branch at losing nodes:
assuming that $G_i$ is visited finitely often, player 0 can only win if they eventually avoid $R_i$ forever.

\begin{example}\label{ex:ziel}
The Zielonka trees for the objectives $\alpha_1$ and $\alpha_2$ from Example~\ref{ex:el} are as follows.
\begin{center}
\tikzset{every state/.style={minimum size=17pt}}
\begin{footnotesize}
 \begin{tikzpicture}[
    % Default arrow tip
    %-&gt;,&gt;=stealth',shorten &gt;=1pt,
		auto,
    % Default node distance
    node distance=1cm,
    % Edge stroke thickness: semithick, thick, thin
    semithick
    ]
     \node[state, rectangle, label={left: $a,b,c,d$}] (0) {$\epsilon$};
     \node (yoy) [left of=0] {};
     \node (yoi) [above of=yoy] {};
     \node (yoz) [left of=yoi] {\small{$Z(\alpha_1):$}};
     \node (yo) [below of=0] {};
     \node[state, label={left: $a,b$}] (1) [left of=yo] {$1$};
     \node[state, label={left: $a,c$}] (2) [right of=yo] {$2$};
     \node[state, rectangle, label={left: $b$}] (3) [below of=1] {$1.1$};
     \node[state, rectangle, label={left: $c$}] (4) [below of=2] {$2.1$};
     \node (yo1) [below of=3] {};
     \node (yo2) [below of=yo1] {};

     \path[->] (0) edge node [pos=0.3,left] {} (1);
     \path[->] (0) edge node [pos=0.3,left] {} (2);
     \path[->] (1) edge node [pos=0.3,right] {} (3);
     \path[->] (2) edge node [pos=0.3,right] {} (4);

     \node (yoyb) [right of=0] {};
     \node (yoyb1) [right of=yoyb] {};
     \node (yoyb2) [right of=yoyb1] {};
     \node (yoyb3) [right of=yoyb2] {};
     \node (yozar) [above of=yoyb2] {\small{$Z(\alpha_2):$}};
     \node[state, rectangle, label={left: $a,b,c,d$}] (0a) [right of=yoyb3] {$\epsilon$};
     \node (yoa) [below of=0a] {};
     \node (yo2a) [right of=yoa] {};
     \node[state, label={left: $a,b,d$}] (1a) [left of=yoa] {$1$};
     \node[state, label={right: $a,c,d$}] (2a) [right of=yo2a]
     {$2$};
     \node (yo1a) [below of=1a] {};
     \node[state, rectangle, label={right: $a,b$}] (3a) [right of=yo1a] {$1.2$};
     \node[state, rectangle, label={right: $b,d$}] (4a) [left of=yo1a] {$1.1$};
     \node[state, rectangle, label={right: $c,d$}] (5a) [below of=2a] {$2.1$};
     \node[state, label={right: $a$}] (6a) [below of=3a] {$1.2.1$};
     \node[state, rectangle, label={right: $\emptyset$}] (7a) [below of=6a] {$1.2.1.1$};
     \path[->] (0a) edge node [pos=0.3,left] {} (1a);
     \path[->] (0a) edge node [pos=0.3,right] {} (2a);
     \path[->] (1a) edge node [pos=0.3,left] {} (3a);
     \path[->] (1a) edge node [pos=0.3,left] {} (4a);
     \path[->] (2a) edge node [pos=0.3,left] {} (5a);
     \path[->] (3a) edge node [pos=0.3,left] {} (6a);
     \path[->] (6a) edge node [pos=0.3,left] {} (7a);
    
  \end{tikzpicture}
\end{footnotesize}
\end{center}
Nodes that satisfy the respective objective are denoted by boxes and
the remaining nodes by circles; the labels
are given next to nodes, e.g. in $Z(\alpha_2)$ we have $\tau(1)=\{a,b,d\}$ so that $\tau(1)\not\models \alpha_2$.
The strategy $\sigma_1$ for condition $\alpha_2$ from Example~\ref{ex:el} intuitively uses
the nodes $1$ and $2$ as memory values, encoding that the current objective
is to next visit color $c$ or $b$, respectively; by alternating between
both memory values, $\sigma_1$ ensures that
all colors $a,b,c,d$ (satisfying $\alpha_2$) are visited infinitely often.
Similarly, strategy $\sigma_2$ uses nodes $1.1$ and $1.2$ as memory values.
We point out that $Z(\alpha_2)$ has leaves at different depths.
Notice that we diverge from the normal representation of Zielonka trees in the literature (e.g., \cite{DBLP:journals/theoretics/CasaresCFL24}) in that we choose to denote nodes satisfying the objective by boxes and those not satisfying it by circles. 
The intuition for this choice is that 
for nodes satisfying the objective, player $0$ needs to win according to the sub-objectives encoded by \emph{all} subtrees of the node.
Dually, in a node not satisfying the objective player $0$ needs to win according the sub-objective encoded by \emph{some} subtree.
\end{example}

The Zielonka tree captures the maximal memory required to win a game with winning condition $\alpha$ and is intertwined with the solution of EL games as we discuss below. 
Regarding memory, we follow~\cite{DBLP:conf/lics/DziembowskiJW97} and match each node $t$ in the Zielonka tree $Z_\alpha=\pair{T,\tau}$ to the maximal memory for both players required to win a game with winning condition $\alpha$ colored only by the set of colors $\tau(t)$.
For a leaf $l$ we put $m^+_\alpha(l)=m^-_\alpha(l)=1$. 
For a winning node $t$ (such that $\tau(t)\models \alpha$) we put $m^+_\alpha(t)=\sum_{t' \in \child(t)} m^+_\alpha(t')$ and $m^-_\alpha(t) = \max_{t'\in \child(t)}(m^-_\alpha(t'))$.
For a losing node $t$ (such that $\tau(t)\not\models\alpha$) we put $m^+_\alpha(t)=\max_{t'\in \child(t)}(m^+_\alpha(t'))$ and $m^-_\alpha(t) = \sum_{t'\in\child(t)}m^-_\alpha(t)$.
We denote $m^+_\alpha(\epsilon)$ by $m^+_\alpha$ and similarly for $m^-_\alpha$.

\begin{theorem}[\rm \hspace{-0.1pt}\cite{DBLP:conf/lics/DziembowskiJW97}]
Given an Emerson-Lei game $G$ with objective $\alpha$, 
there is a strategy for player 0 that uses 
memory $m^+_\alpha$ and wins every game vertex in $\Win_0(G)$, and
there is a strategy for player 1 that uses
memory $m^-_\alpha$ and wins every game vertex in $\Win_1(G)$.
Furthermore, we have $\max\{m^+_\alpha,m^-_\alpha\}\leq z_\alpha$, 
and there is a series of EL games in which winning strategies
for player 0 (resp.~1) require at least memory $m^+_\alpha$ (resp. $m^-_\alpha$).
\end{theorem}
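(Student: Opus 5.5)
We plan to prove all parts of the theorem by structural induction on the Zielonka tree $Z(\alpha)=\pair{T,\tau}$, proceeding bottom-up. For each node $t$ we consider games whose vertices are colored only with colors in $\tau(t)$, and we show that player~0 (resp.~1) wins her winning region with memory $m^+_\alpha(t)$ (resp.~$m^-_\alpha(t)$). For a leaf, every play has an infinity set inside $\tau(t)$, and no subset is of the opposite type. Hence one player wins everywhere: either vacuously, or after reducing to a reachability/safety-type argument via attractors. Memory $1$ suffices in this case.

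For the inductive step at a winning node $t$ with children $t_1,\ldots,t_m$, we follow the McNaughton--Zielonka recursion. \textbf{Player~1's side.} Let $D_i=\tau(t)\setminus\tau(t_i)$. Player~1's winning region is computed as a nested union of attractor traps. On each trap, some $D_i$ is avoided, so the game lives in $\tau(t_i)$. Player~1 switches to the strategy from the induction hypothesis for $t_i$ there, and plays attractor strategies elsewhere. Each piece uses one child's strategy, and memory states can be reused across pieces, so player~1 needs $\max_i m^-_\alpha(t_i)$. \textbf{Player~0's side.} On her winning region $W$, player~0 uses memory $\sum_i m^+_\alpha(t_i)$. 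She holds a pointer $i$ to the current child and plays the child-$i$ strategy, obtained from the induction hypothesis on the subgame $W\setminus\attr_0(D_i)$, until a color of $D_i$ is reached, at which point she increments $i$ cyclically. Every play then either eventually stays within one child's subgame, where it is won by induction, or visits every $D_i$ infinitely often. In the latter case its infinity set is not contained in any $\tau(t_i)$. By the definition of $\max_1$, every superset of $\tau(t_i)$ that is strictly contained in $\tau(t)$ satisfies $\alpha$, so the play is winning. Losing nodes are handled dually, with the roles of sum and max exchanged.

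The bound $\max\{m^+_\alpha,m^-_\alpha\}\le z_\alpha$ follows by an easy induction: both sums and maxima over children are bounded by the sum of children's leaf counts, which is the leaf count of $t$.

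The main obstacle is the matching lower bound: for each node we must build a game in which player~0 genuinely needs $m^+_\alpha(t)$ memory states. We would construct these games recursively. At a winning node, player~1 first chooses freely, at a hub vertex, which child subgame $G_{t_i}$ to enter, and each gadget also allows returning to the hub. A strategy with fewer than $\sum_i m^+(t_i)$ states must, by pigeonhole, conflate memory states that are needed in distinct subgames. We would then argue that player~1 can exploit this to confine the play to some set $C''\subsetneq D\subsetneq\tau(t)$ with $C''\not\models\alpha$ inside one child. At a losing node, player~0 chooses the child, so the maximum suffices and is necessary. Making the pigeonhole argument rigorous, namely showing that merged memory states can be exploited uniformly by player~1, is the delicate step. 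For it we would rely on the detailed gadget construction of~\cite{DBLP:conf/lics/DziembowskiJW97}.
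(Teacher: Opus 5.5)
The paper does not prove this statement; it cites it from \cite{DBLP:conf/lics/DziembowskiJW97}. Judged as a proof, your proposal has a genuine gap in the lower bound. The only non-trivial step there is showing that at a winning node player~0 cannot share memory states between different children. You defer this to \cite{DBLP:conf/lics/DziembowskiJW97}, i.e.\ to the theorem itself, so this half is assumed rather than shown.

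Your sketch also does not make the pigeonhole step plausible. The gadgets $G_{t_i}$ are separate subgames whose entry vertex identifies $i$. With the arena-dependent memory of \cite{DBLP:conf/lics/DziembowskiJW97}, player~0 may therefore reinitialise her memory on entry and reuse the same states in every gadget. For instance, if the return edges belong to player~0 and she wins each $G_{t_i}$ from its entry, she never returns and wins with $\max_i m^+_\alpha(t_i)$ states. Whether the sum is forced depends on who owns the return edges and how they are coloured, and you leave both open. Moreover, your recursive invariant (``a game in which player~0 needs $m^+_\alpha(t)$ states'') has no instance at a losing leaf $t$, since player~0 wins no game coloured within $\tau(t)$. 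As it stands, only the upper bounds and $\max\{m^+_\alpha,m^-_\alpha\}\le z_\alpha$ are established.

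Your upper-bound half is the classical McNaughton--Zielonka induction, which is the right approach. Unrolled, player~0's memory is a current child for each winning node on the branch, advanced cyclically and reset below. This is essentially the odd-direction memory $\Pi_\alpha$ that the paper extracts from good EL rankings in Corollary~\ref{cor:strategy memory EL}. There the choice at losing nodes is read off $r_2(v,\pi)$; in your construction it is read off the piece containing the vertex.

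One step, however, is justified incorrectly. When the pointer moves infinitely often, the infinity set $I$ meets every $D_i$, but it need not contain any $\tau(t_i)$. So ``every superset of $\tau(t_i)$ strictly inside $\tau(t)$ satisfies $\alpha$'' says nothing about $I$. Use maximality the other way round:
\begin{itemize}
\item if $I\not\models\alpha$, then $I\neq\tau(t)$;
\item so $I$ lies in an inclusion-maximal losing proper subset of $\tau(t)$;
\item by the definition of $\max_1$, that subset is some $\tau(t_i)$, contradicting $I\cap D_i\neq\emptyset$.
\end{itemize}

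Three smaller points:
\begin{itemize}
\item State that on $\attr_0(D_i)$ player~0 plays the attractor strategy; otherwise a colour of $D_i$ need not be reached.
\item Justify why she wins all of $W\setminus\attr_0(D_i)$ in the subgame. A vertex won there by player~1 would be won by player~1 in $G$: player~0 cannot leave $W\setminus\attr_0(D_i)$ into $\attr_0(D_i)$, and leaving $W$ leads into $\Win_1(G)$.
\item Resetting memory when the play enters a new piece requires memory updates that read the current vertex. That is the memory model of \cite{DBLP:conf/lics/DziembowskiJW97}, not the purely colour-driven one suggested by the paper's informal definition.
\end{itemize}
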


The numbers of nodes and leaves in Zielonka trees are at most factorial in the number of colors
of their objective,
while their height is directly bounded by the number of colors:

\begin{lemmarep}\label{lem:zielonkaSize}
    For an EL condition $\alpha$ over $C$ with Zielonka tree $Z(\alpha)=\pair{T,\tau}$, we have %the following
    $|T|\leq e|C|!$ \cite{DBLP:conf/fossacs/HausmannLP24}, $z_\alpha\leq |C|!$, and $d_\alpha\leq |C|$, where $e$ is Euler's number.
\end{lemmarep}

\begin{proof}
    It is simple to see that $z_\alpha \leq |C|!$ as every leaf corresponds to a 
    sequence of containments $C=C_0\subsetneq C_1 \subsetneq \cdots \subsetneq C_d'$.
    Every order putting colors in $C_i\setminus C_{i+1}$ before all colors in $C_{i+1}$ is consistent only with that leaf.
    Also $d_\alpha\leq |C|$ follows from the decrease in the sets along every edge in the tree. 
\end{proof}

We note that these are the worst case estimates that are guaranteed to suffice for all conditions that use $|C|$ colors. 
In many cases (as evidenced, for instance, in Example~\ref{ex:ziel}), the size of the Zielonka tree of a condition is much smaller than the worst case
(cf. Remark~\ref{rem:complexity}).
 
\section{Streett Games and Hierarchical Trees}
\label{sec:zielonka mcnaughton}

We start our technical development by recalling the Zielonka-McNaughton algorithm for solving Streett games. Based on this algorithm, we then analyze the structure of winning regions in Streett games.
We use this structural analysis to encode the winning regions using mappings to appropriate hierarchical trees. 

\subsubsection*{Zielonka-McNaughton Algorithm for Solving Streett Games}
We consider Algorithm~\ref{alg:Zielonka}, Horn's version of the Zielonka-McNaughton algorithm~\cite{DBLP:journals/apal/McNaughton93,DBLP:journals/tcs/Zielonka98} for solving Streett games \cite{DBLP:journals/ipl/Horn07}.
The algorithm works on a set of vertices $\subarena$ that constitute the sub-arena to work with and a set $\beta$ of pairs representing part of the Streett condition.
Initially, these parameters are the full set of vertices $V$ and the full winning condition $\alpha$.
Given a Streett game with $k$ pairs, the algorithm intuitively computes, for each pair $\pair{R_i,G_i}$, regions where player $0$ can
eventually force a visit to $G_i$ (line 3), or otherwise avoid visits to $R_i$ (line 6). Regions for the latter case contain neither
$R_i$ nor $G_i$, so they are solved recursively by solving a Streett game with $k-1$ pairs (line 7).
A winning strategy can be constructed from the results of the computation of the algorithm by keeping a record of intermediate computation results. 
Formally, let $\cpre_i^{W}(Z)$ compute the controllable (w.r.t.~player $i\in\{0,1\}$) predecessors of $Z\subseteq V$ within the sub-arena induced by $W\subseteq V$:
$$
\begin{array}{l c l}
\cpre_i^{W}(Z) & =& \{v\in V_i \cap W ~|~ E(v)\cap W \cap Z \neq \emptyset\} \,\cup\\ &&\{ v\in V_{1-i} \cap W ~|~ E(v) \cap W \subseteq Z \}.
\end{array}
$$
Building on $\cpre$, we use $\attr^{W}_i(Z)$ to denote the \emph{attractor} of player~$i$ to the set $Z$ of vertices in the sub-arena induced by the set $W\subseteq V$;
formally, we define $\attr^{W}_i(Z)$  to be the least fixpoint of the function that maps $X \subseteq W$ to $Z\cup \cpre_i^{W}(X)$.

\begin{algorithm}[bt]
\caption{\label{alg:Zielonka}$\\
\textsc{SolveStreett}(\subarena,\beta = \{\pair{R_1,G_1},\ldots, \pair{R_d,G_d}\})$}
\DontPrintSemicolon
\ForAll{$(i \in \{1,2,\ldots, d\})$}{
   $W_{{-}1}=\subarena$\;
   $W_0=\attr^{\subarena}_0(G_i)$\;
   $\mathsf{Residue} \gets \subarena \setminus W_0$\;
   \For{$(j\gets 0  ~;~ W_{2j-1}\neq \emptyset ~;~ j{+}{+})$}{
       $\mathsf{No}_{R_i} \gets \mathsf{Residue} \setminus \attr^{\mathsf{Residue}}_1(R_i)$ \;
       $W_{2j+1} \gets \mbox{\sc SolveStreett}(\mathsf{No}_{R_i},\beta \setminus \{\pair{R_i,G_i}\})$\;
       $W_{2j+2} \gets \attr^{\mathsf{Residue}}_0(W_{2j+1})$\;
       $\mathsf{Residue} \gets \mathsf{Residue} \setminus W_{2j+2}$\;
   }
   \If{$(\mathsf{Residue}\neq \emptyset)$}{
       \Return{$\mbox{\sc SolveStreett}(\subarena\setminus \attr^{\subarena}_1(\mathsf{Residue}),\beta)$}
    }
}
\Return{$\subarena$}
\end{algorithm}

\begin{theorem}[{\rm \hspace{-0.1pt}\cite{DBLP:journals/apal/McNaughton93,DBLP:journals/tcs/Zielonka98,DBLP:journals/ipl/Horn07}}]
 Algorithm~\ref{alg:Zielonka} solves Streett games.
\end{theorem}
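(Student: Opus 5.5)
We prove the statement by induction on the number $d$ of pairs in $\beta$. Within a fixed $d$ we use a second induction on $|\subarena|$, since the recursive call on line 11 keeps $\beta$ but shrinks the arena. The invariant is: $\subarena$ is a trap for player~1 in the ambient arena, and \textsc{SolveStreett}$(\subarena,\beta)$ returns exactly the winning region of player~0 in the Streett game on $\subarena$ with condition $\beta$. The base case $d=0$ is immediate, because the empty conjunction holds on every play and the algorithm returns $\subarena$. We also record two standard attractor facts. First, the complement of a player-$i$ attractor is a trap for player~$i$. Second, the attractor comes with a positional attractor strategy that forces a visit to the target set.

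\textbf{Case 1: the algorithm returns on line 11 for some $i$.} Let $A=\attr^{\subarena}_1(\mathsf{Residue})$. We show that player~1 wins from every vertex of $\mathsf{Residue}$ inside $\subarena$. $\mathsf{Residue}$ avoids $W_0$, so it is a trap for player~0 on which $G_i$ does not occur. When the inner loop terminates, $\mathsf{Residue}$ lies outside every $W_{2j+2}$. By the induction hypothesis on $d-1$, the final call on $\mathsf{No}_{R_i}$ returned $\emptyset$, so player~1 wins all of $\mathsf{No}_{R_i}$ for $\beta\setminus\{\pair{R_i,G_i}\}$ and hence also for $\beta$. Player~1's strategy on $\mathsf{Residue}$ has two modes. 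Inside $\attr^{\mathsf{Residue}}_1(R_i)$, player~1 plays the attractor strategy towards $R_i$. Inside $\mathsf{No}_{R_i}$, player~1 plays the recursive winning strategy, and restarts it each time the play re-enters $\mathsf{No}_{R_i}$. A play that visits the attractor infinitely often sees $R_i$ infinitely often and $G_i$ never, so it violates pair $i$. Otherwise the play eventually stays in $\mathsf{No}_{R_i}$, which is a trap for player~0 inside $\mathsf{Residue}$ whenever player~0 leaves it via player-0 edges only into the attractor part. Its suffix is then consistent with a single run of the sub-strategy and is losing for player~0. Player~1 extends this strategy to $A$ by attracting. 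Since $\subarena\setminus A$ is a player-1 trap, the induction hypothesis on $|\subarena|$ applies to the recursive call. A player-0 winning region in $\subarena\setminus A$ is also winning in $\subarena$, because player~1 cannot escape it. Conversely, any vertex the recursive call declares losing is losing in the trap, and player~0 escaping into $A$ only helps player~1.

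\textbf{Case 2: the algorithm returns $\subarena$.} For every $i$ the inner loop exhausts $\subarena$, so $\subarena=W_0\cup\bigcup_j W_{2j+2}$. This gives a layered structure for each $i$. Player~0 builds a strategy with memory holding a current index $i$ and cycles through $i=1,\ldots,d$. In phase $i$, player~0 plays as follows. In $W_0$, attract to $G_i$; on reaching $G_i$, move to phase $i+1$. In $W_{2j+2}\setminus W_{2j+1}$, attract to $W_{2j+1}$. In $W_{2j+1}$, play the recursively obtained winning strategy for $\beta\setminus\{\pair{R_i,G_i}\}$ inside $\mathsf{No}_{R_i}$ of that iteration. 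Player~1 can leave $W_{2j+1}$ only towards lower layers, i.e.\ $W_0,\ldots,W_{2j}$, or into the $R_i$-attractor region, which lies in lower layers since it was removed later. Hence the layer index never increases within a phase. If a play changes phase infinitely often, it visits every $G_i$ infinitely often and wins. Otherwise it eventually stays in some phase $i$ and its layer index eventually stabilises at some $W_{2j+1}$. There the play avoids $R_i$, and it is consistent with the recursive strategy, so by the induction hypothesis it satisfies all the other pairs. Hence it satisfies $\beta$.

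\textbf{Main obstacle.} The delicate step is the trap and monotonicity bookkeeping. Within a phase, the layer index must be non-increasing, and restarting sub-strategies after leaving a layer must not break winning along suffixes. This prefix-independence argument is where we need care. The rest is routine attractor reasoning.
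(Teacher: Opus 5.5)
The paper does not reprove this theorem; it cites McNaughton, Zielonka and Horn. Your argument is the standard correctness proof for this algorithm, and your Case~2 is exactly the reading of Lemma~\ref{lem:partition_properties} that the paper sketches right after that lemma. The overall structure works, but two steps are misstated and need repair.

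\textbf{The invariant is not preserved on line~7.} The claim ``$\subarena$ is a trap for player~1 in the ambient arena'' fails for the recursive call on line~7. A player-1 vertex of $\mathsf{No}_{R_i}$ may have edges into $W_0\cup W_2\cup\cdots\cup W_{2j}$, which lie outside $\mathsf{Residue}$. So $\mathsf{No}_{R_i}$ is not a player-1 trap in $\subarena$, let alone in $V$, yet you apply the induction hypothesis to it in both cases. You do not need this invariant. It suffices that every call is made on a sub-arena (each vertex keeps a successor inside), so that ``the game on $\subarena$'' is well defined; the paper checks exactly this at the start of the proof of Lemma~\ref{lem:partition_properties}. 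The trap facts you actually use hold locally and are enough: $\mathsf{Residue}$ is a player-0 trap in $\subarena$, and $\subarena\setminus\attr^{\subarena}_1(\mathsf{Residue})$ is a player-1 trap in $\subarena$.

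\textbf{The escape claim in Case~2 is wrong.} You say player~1 can leave $W_{2j+1}$ ``into the $R_i$-attractor region, which lies in lower layers since it was removed later''. This is wrong on both counts. The set $\attr^{\mathsf{Residue}}_1(R_i)$ of iteration $j$ consists of vertices still in $\mathsf{Residue}$, so they end up in $W_{2j+2}\setminus W_{2j+1}$ or in later layers, not earlier ones. If player~1 could move there from $W_{2j+1}$, the monotonicity your phase argument relies on would fail.

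The correct reason is that player~1 cannot move there at all. A player-1 vertex of $\mathsf{Residue}$ with an edge into $\attr^{\mathsf{Residue}}_1(R_i)$ belongs to that attractor, hence not to $\mathsf{No}_{R_i}$. Inside $\mathsf{No}_{R_i}$, player~1 cannot leave $W_{2j+1}$, because player~0's winning region of a subgame is closed under player-1 moves. So every successor in $\subarena$ of a player-1 vertex of $W_{2j+1}$ lies in $W_0\cup\cdots\cup W_{2j+1}$; this is the sixth item of Lemma~\ref{lem:partition_properties}.

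\textbf{Smaller points.} In Case~1, $\mathsf{No}_{R_i}$ is not a trap for player~0, since player~0 may step into the attractor. All you need there is that a play which eventually stays in $\mathsf{No}_{R_i}$ has a suffix consistent with a single run of the sub-strategy. Also make determinacy of the subgames explicit where you turn ``the call returns $\emptyset$'' into ``player~1 wins everywhere''. With these repairs your proof goes through.
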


When Algorithm~\ref{alg:Zielonka} terminates it returns through line~12. It then outputs the winning region for player $0$ in the input Streett game, which is also the sub-arena last analyzed.
We restrict attention to cases where the Algorithm returns though line~12.
In such a cases, the algorithm computes for the entire sub-arena and for each pair $i\in \{1,\ldots, d\}$, the sets $W_0,W_2,\ldots, W_{2l}$,
which form a partition of this sub-arena (and the winning region).
This is illustrated in Figure~\ref{fig:illustration}.
When a set $W_{2j}$ is the result of an attractor computation, namely, $W_{2j}=\attr^W_i(Z)$ for some $W$, $i$ and $Z$, then we assume that we also have access to the partition of $W_{2j}$ according to the distance from $Z$.
That is, $W_{2j}=W_{2j}^0\cup W_{2j}^1 \cup \cdots \cup W_{2j}^l$, where $W_{2j}^0=Z$ and 
$W_{2j}^{p+1}=\cpre_i^{W}(\textstyle\bigcup_{q\leq p} W_{2j}^{q})\setminus \textstyle\bigcup_{q\leq p} W_{2j}^{q}$.
Clearly, $W_0^0=G_i\cap \subarena$. 
The partition satisfies the following properties.

\begin{figure}[bt]

\begin{center}

\begin{tikzpicture}
  % Parameters
  \def\H{.5}   % height of each large rectangle
  \def\W{6.1}   % total width (4 + 1 + 1 + 1)

  % Partition x-coordinates
  \def\xA{4}
  \def\xB{4.7}
  \def\xC{5.4}

  % Draw 5 stacked rectangles with shared borders
  \foreach \i in {0,...,4} {

    % Vertical position
    \pgfmathsetmacro{\y}{-\i*\H}

    % --- Shading ---
    % Light gray shading for first wide cell of the top row only
    \ifnum\i=0
      \fill[gray!15] (0,\y) rectangle (\xA,\y+\H);
    \fi

    % Gray shading for the three narrow cells (all rows)
    \fill[gray!30] (\xA,\y) rectangle (\xB,\y+\H);
    \fill[gray!30] (\xB,\y) rectangle (\xC,\y+\H);
    \fill[gray!30] (\xC,\y) rectangle (\W,\y+\H);

    % --- Borders ---
    % Outer rectangle
    \draw (0,\y) rectangle (\W,\y+\H);

    % Internal partitions
    \draw (\xA,\y) -- (\xA,\y+\H);
    \draw (\xB,\y) -- (\xB,\y+\H);
    \draw (\xC,\y) -- (\xC,\y+\H);

  }
    \node at ({0.6},{0.23}) {\tiny{$G_{i}=W^0_0$}};
    \node at ({1.8},{0.23-0.5}) {\tiny{$W_{1}=W_2^0=\Win(\alpha\setminus\{G_i,R_i\})$}};
    \node at ({1.8},{0.23-1}) {\tiny{$W_{3}=W_4^0=\Win(\alpha\setminus\{G_i,R_i\})$}};
    \node at ({1.8},{0.23-1.5}) {\tiny{$\ldots$}};
    \node at ({2},{0.23-2}) {\tiny{$W_{2l-1}=W_{2l}^0=\Win(\alpha\setminus\{G_i,R_i\})$}};

    \node at ({4.35},{0.23}) {\tiny{$W^1_{0}$}};
    \node at ({4.35},{0.23-0.5}) {\tiny{$W^1_{2}$}};
    \node at ({4.35},{0.23-1}) {\tiny{$W^1_{4}$}};
    \node at ({4.35},{0.23-1.5}) {\tiny{$\ldots$}};
    \node at ({4.35},{0.23-2}) {\tiny{$W^1_{2l}$}};

    \node at ({5.05},{0.23}) {\tiny{$W^2_{0}$}};
    \node at ({5.05},{0.23-0.5}) {\tiny{$W^2_{2}$}};
    \node at ({5.05},{0.23-1}) {\tiny{$W^2_{4}$}};
    \node at ({5.05},{0.23-1.5}) {\tiny{$\ldots$}};
    \node at ({5.05},{0.23-2}) {\tiny{$W^2_{2l}$}};

    \node at ({5.75},{0.23}) {\tiny{$\ldots$}};
    \node at ({5.75},{0.23-0.5}) {\tiny{$\ldots$}};
    \node at ({5.75},{0.23-1}) {\tiny{$\ldots$}};
    \node at ({5.75},{0.23-1.5}) {\tiny{$\ldots$}};
    \node at ({5.75},{0.23-2}) {\tiny{$\ldots$}};

    \node[right] at (\W,0.25) {\}};
    \node[right] at (\W+0.17,0.251) {\tiny{$W_{0}=\attr_0(G_i)$}};
    \node[right] at (\W,0.25-0.5) {\}};
    \node[right] at (\W+0.17,0.25-0.52) {\tiny{$W_{2}=\attr_0(W_1)$}};
    \node[right] at (\W,0.25-1) {\}};
    \node[right] at (\W+0.17,0.25-1.03) {\tiny{$W_{4}=\attr_0(W_3)$}};
    \node[right] at (\W,0.25-2) {\}};
    \node[right] at (\W+0.17,0.25-2.05) {\tiny{$W_{2l}=\attr_0(W_{2l-1})$}};

\end{tikzpicture}

\end{center}
\vspace*{-10pt}
\caption{\label{fig:illustration}\textit{Winning region partition with respect to a single pair $\pair{R_i,G_i}$ as computed by Algorithm~\ref{alg:Zielonka}. The
recursively computed white winning regions do not contain the colors $R_i$ and $G_i$.}}
\vspace*{-15pt}
\end{figure}

\begin{lemmarep}\label{lem:partition_properties}
When Algorithm~\ref{alg:Zielonka} returns from line~12 for sub-arena $\subarena$ and $\beta$ such that $|\beta|=d$, then for all $i\in \{1,\ldots, d\}$, the partition $W_0,\ldots, W_{2l}$ satisfies:
\begin{itemize}
\item[--]
    for every $v\in W^0_0\cap V_0$ we have $v\in G_i$ and there exists $v'\in E(v)$ such that $v'\in \bigcup_{j\geq 0} W_{2j}$;
\item[--]
    for every $v\in W^0_0\cap V_1$ we have $v \in G_i$ and for every $v'\in E(v)\cap \subarena$ we have $v'\in\bigcup_{j\geq 0}W_{2j}$.
\item[--]
    for $p > 0$ and every $v\in W_{2j}^p\cap V_0$ there is $v'\in E(v)$ such that $v'\in W_{2j}^{p-1}$;
\item[--] 
    for $p > 0$ and every $v\in W_{2j}^p\cap V_1$ we have that every $v'\in E(v)\cap\subarena$ satisfies $v'\in \left (\bigcup_{j'<2j} W_{j'} \right )\cup \left (\bigcup_{p'<p} W_{2j}^{p'} \right )$;
\item[--]
    for every $v\in W_{2j+1}\cap V_0=W_{2j+2}^0\cap V_0$ there exists $v'\in E(v)$ such that $v'\in  W_{2j+1}$;
\item[--] 
    for every $v\in W_{2j+1}\cap V_1=W_{2j+2}^0\cap V_1$ we have that every $v'\in E(v)\cap \subarena$ satisfies $v'\in \bigcup_{j'\leq 2j+1} W_{j'}$;
\item[--]
    $\sum_{j} |W_{2j}|=|\subarena|$.
\end{itemize}
\end{lemmarep}

\begin{proof}
    We first show that if the first call of the algorithm starts with $V$ then every call of the algorithm works on a sub-arena. That is, in $\subarena$ every vertex $v$ has at least one successor in the sub-arena. 
    This clearly holds initially for $V$.
    The set $\mathsf{Residue}$ computed in lines~4 and 9 is the result of removing an attractor of player~0.
    It follows that all $V_0$ nodes in $\mathsf{Residue}$ have all their outgoing edges in $\mathsf{Residue}$ and all $V_1$ nodes in $\mathsf{Residue}$ have at least one edge in $\mathsf{Residue}$.
    Similarly, the recursive calls in lines~7 and 11 are applied to a sub-arena minus an attractor for player~1.
    Hence, all recursive calls are applied to a sub-arena.

    As, by assumption, the call to the algorithm returns through line~12, we have $\mathsf{Residue}=\emptyset$ and hence $\subarena=\bigcup_{j\geq 0}W_{2j}$.

    \begin{itemize}
    \item[--] In line~3 we set $W_0^0$ as $G_i\cap \subarena$. As $\subarena$ is a sub-arena, every $v\in W_0^0\cap V_0$ has some successor in the sub-arena. 
    \item[--] As $\subarena$ is partitioned to $W_0,\ldots,W_{2l}$, this follows from restricting attention to $E(v)\cap \subarena$. 
    \item[--] As $W_{2j}^p$ is a layer of an attractor, every $v\in W_{2j}^p\cap V_0$ for $p>0$ has some successor in $W_{2j}^{p-1}$.
    \item[--] As $W_{2j}$ is computed inside the sub-arena, no edges from $W_{2j}^p\cap V_1$ for $p>0$ can go to the $\mathsf{Residue}$. It follows that all edges inside the sub-arena must go to $W_{j'}$ for $j'<2j$ or to $W_{2j}^{p'}$ for $p'<p$. 
    \item[--] In line~7, $W_{2j+1}$ is computed in a sub-arena and, recursively, a similar partition of it is built. It follows that for every $v\in W_{2j+1}\cap V_0$ there exists some successor inside the same sub-arena.
    \item[--]
    Similarly, no edges from $W_{2j+1}\cap V_1$ can go to $\mathsf{Residue}$. It follows that all edges inside the sub-arena must go to $W_{j'}$ for $j'\leq {2j+1}$.
    \item[--] 
    We established that $\subarena = \bigcup_{j}W_{2j}$. 
    For all $j<j'$ we have $W_{2j}\cap W_{2j'}=\emptyset$.
    Indeed, $W_{2j'}$ is computed over a residue that does not contain $W_{2j}$. 
    Hence, $\sum_{j}|W_{2j}|=|\subarena|$.
    \end{itemize}
\end{proof}

Notice that whenever Algorithm~\ref{alg:Zielonka} is called with $0$ pairs, it immediately returns through line 12 and returns the entire sub-arena. In this case, the lemma holds vacuously.

It follows from Lemma~\ref{lem:partition_properties} that
in the dark gray areas in Figure~\ref{fig:illustration}, player 0 always can move one block to the left
while all moves of player 1 go either at least one block to the left or at least one row upwards. Thus player 0 can attract to a white area (or to $G_i$)
and then use a recursively computed winning strategy. Any resulting play either eventually avoids $R_i$ forever,
or recurringly visits $G_i$.
\subsubsection*{Signatures for Streett Games}
Based on the partition that is established by the Zielonka-McNaughton algorithm,
we analyze the internal structure of winning regions in terms of universal trees.

The \emph{signature} of a winning region in a Streett game with $n$ vertices and $k$ pairs
is a collection of $k$ different mappings into $U(n,k)$. 
Each mapping maps some individual vertices into (sets containing single) leaves of $U(n,k)$ and it maps recursively computed winning regions into smaller hierarchical trees, which are subsets of $U(n,k)$. 
Recursively, each of these winning regions is associated with its own signature, leading to $(k-1)!$ mappings into smaller hierarchical trees. This intuitive explanation is made formal in the remainder of this section. 

We use $w_{2j+1}$ and $w_{2j}^p$ to denote $|W_{2j+1}|$ and $|W_{2j}^p|$, respectively.
For $v\in W_{2j}^p$ and $v'\in W_{2j'}^{q}$, we write $v\prec v'$ if $j<j'$ or $j=j'$ and $p<q$.

\begin{lemmarep}
    When Algorithm~\ref{alg:Zielonka} returns from line 12 for sub-arena $\subarena$ and $\beta$ such that $|\beta|=d$ and given a $(|\subarena|,d)$-hierarchical tree $U$,
    there exist $d$ mappings $s_i:\subarena \rightarrow 2^{U}$ ($1\leq i\leq d$) such that:
    \begin{enumerate}
        \item 
        There is a leaf $u_0^0 \in U$ such that 
        for all $v\in W_0^0$ we have $s_i(v)=\{u_0^0\}$.
        \item 
        For all $j$ and all $p>0$ there is a leaf $u_{2j}^p\in U$ such that for all $v\in W_{2j}^p$ we have $s_i(v)=\{u_{2j}^p\}$. 
        \item For each $j>0$ there is a subtree $U_{2j}^0\subseteq U$ that is $(w_{2j}^0,d-1)$-hierarchical and for all $v\in W_{2j}^0$ we have $s_i(v)=U_{2j}^0$. 
        \item For all $v,v'$, $v\prec v'$ implies $s_i(v)<_1 s_i(v')$.
    \end{enumerate}
    \label{lem:streett signature properties}
\end{lemmarep}

Notice that from 4. it follows that the leaves and the trees used above are all disjoint. As before, when $d=0$ the lemma holds vacuously. 

\begin{proof}
    We build the mapping $s_i$. 
    By construction, $\sum_{j\geq 0}\sum_{p\geq 0}w_{2j}^p=|\subarena|$.
    It follows that these values form a partition of $|\subarena|$.
    As $U$ is $(|\subarena|,d)$-hierarchical, 
    for every $j$ and $p$ we can associate a subtree $U_{2j}^p$ such that $U_{2j}^p$ is $(w_{2j}^p,d-1)$-hierarchical, and if $j<j'$ or $j=j'$ and $p<p'$, then we have $U_{2j}^p<U_{2j'}^{p'}$.
   
    In the case that $j=0$ and $p=0$, in which case $W_0^0=G_i$, we associate with every $v\in W_0^0$ the minimal leaf $u_0^0$ in $U_0^0$.
    If $j>0$ and $p=0$, in which case $W_{2j}^0=W_{2j-1}$, we associate with every $v\in W_{2j}^0$ the subtree $U_{2j}^0$.
    If $p>0$, we choose the minimal leaf $u_{2j}^p$ in $U_{2j}^p$ and associate it with every $v\in W_{2j}^p$.

    We establish the items in the lemma.
    We note that the construction associates a different part of the tree with every different $W^p_{2j}$.
    In particular, for every two $v_1,v_2$ such that $v_1 \prec v_2$ we have $s_i(v_1)\cap s_i(v_2)=\emptyset$.
    \begin{enumerate}
        \item This is established by choosing a single leaf for the set $W^0_0$. 
        \item This is established by choosing a single leaf for such a set $W^p_{2j}$.
        \item This is established by the association of the entire tree $U_{2j}^0$ with a set of the form $W_{2j}^0$.
        \item This is established by $U_{2j}^p<U_{2j'}^{p'}$ for $j<j'$ or $j=j'$ and $p<p'$.
    \end{enumerate}
\end{proof}

\section{Solving Streett Games with Streett Rankings}
\label{sec:ranking}

We recall the definition of Streett rankings \cite{DBLP:conf/lics/PitermanP06} and establish that the mappings constructed in the previous section are indeed a Streett ranking.
By the results of Piterman and Pnueli, the direct computation of a Streett ranking in a rank lifting algorithm solves Streett games in time that is proportional to the size of the ranking. 
Our presentation slightly modifies the ranking concept from \cite{DBLP:conf/lics/PitermanP06} to adjust it to our notation and to the usage of universal trees.  

Fix a Streett condition $\alpha = \set{\pair{R_1,G_1},\ldots, \pair{R_k,G_k}}$.
Given a $k$-tree $U$, we denote by $U^\infty$ the set $U\cup \{\infty\}$. 
Recall that for $u_1,u_2\in U$, we denote by $<_j$ and $=_j$ the comparisons of the $j$-length prefixes of $u_1$ and $u_2$, respectively. 
Clearly, $u_1<_j u_2$ implies $u_1<_{j+1}u_2$. 
For all $u\in U$ we define $u<\infty$. 
Concretely, we will later use $U=U(n,k)$.
However, to reuse the results of Piterman and Pnueli, we keep notations generic in the employed tree,
which also highlights that their results (and ours) are more general than using one fixed tree.

\subsubsection*{Memory update for Streett rankings}
To win a Streett game, player 0 has to ensure that for each $i$ such that $R_i$ is visited
infinitely often, $G_i$ is visited infinitely often as well. To incorporate this requirement into rankings,
we use permutations over the pairs as memory, intuitively encoding
the relative importance of the pairs.
Namely, if a pair $\pair{R_i,G_i}$ appears before $\pair{R_j,G_j}$ in a permutation, then
a visit to $G_i$ is more important than a visit to $G_j$, and 
visits to $R_i$ are not possible while pursuing $G_j$. 
Once a visit to $G_i$ occurs, the focus changes to another pair 
that is currently less important than $\pair{R_i,G_i}$, changing the permutation and making the new pair the most important
among those that appear after $\pair{R_i,G_i}$ in the permutation.
At the same time, the previous order between all the pairs that were not more important
than $\pair{R_i,G_i}$ is forgotten and reset to 
the order of their appearance in the winning condition $\alpha$. 
We formalize this using functions $\ind$ and $\incr$ that compute the most important pair
whose $G$-component is visited by a vertex $v$
and shuffle the suffix of the permutation from that pair on,
as explained. Formally, we have the following. 

Let $\Pi(k)$ denote the set of permutations over $\{1,\ldots, k\}$.
Let $\pi=j_1,\ldots ,j_k\in \Pi(k)$ be some permutation.
Given a vertex $v$, let $\ind(v,\pi)$ denote the minimal index $l$ such that $v\in G_{j_l}$ and $k+1$ if no such $l$ exists.
Let $\incr_l(\pi)$ be the permutation $j_1,\ldots ,j_{l-1},j'_l,\ldots, j'_k$, where $j'_l$ is set to $\min\set{j_l,\ldots, j_k}$ if $j_l=\max\set{j_l,\ldots, j_k}$ and $j'_l$ is set to $\min\{j_{l'}>j_l ~|~ l' > l\}$ otherwise. 
Also, $j'_{l+1},...,j'_k$ are the elements in $\set{j_l,\ldots, j_k}\setminus \set{j'_l}$ ordered in increasing order. 
This establishes a cycle between all the pairs $\{j_l,\ldots, j_j\}$ and a strategy will try to ensure that all $G_i$ for all these pairs is visited infinitely often, logging the visits in this cyclic order. 
Thus we have $\incr_k(\pi)=\pi$; we put $\incr_{k+1}(\pi)=\pi$ as well. 
For short, we write $\incr_v(\pi)$ for $\incr_{\ind(v,\pi)}(\pi)$. We point out that 
\incr corresponds to the function $\mathit{incr}$ in \cite{DBLP:conf/lics/PitermanP06}.

\subsubsection*{Progress in Streett Games}
We define a relation between values $u_1,u_2$ in the tree $U$, measuring progress towards various $G$-goals 
according to their importance order in a permutation.
Generally, the $l$-th level in the tree is relevant for the $l$-th pair in the permutation.
Then progress is made between $u_1$ and $u_2$ at level $l$ if the $l{-}1$-prefix of $u_2$ is less
than or equal to the $l{-}1$-prefix of $u_1$.
Strict progress is made at level $l$ if the $l$-prefix of $u_2$ is strictly less
than the $l$-prefix of $u_1$.
Together with a permutation, the sets $G_i$ and $R_i$ that are visited by a game vertex $v$ identify the level $l$ at 
which progress should be made; a visit to $G_{j_l}$ allows progress on level $l$, while strict progress at level $l$ is required if $v$ visits $R_{j_l}$ but not $G_{j_l}$. 
This intuitive explanation is made formal in what follows.

We define, for a given permutation $\pi=j_1,\ldots, j_k$, a vertex $v\in V$, and two values $u_1,u_2\in U$, 
that $u_1>_{v,\pi}u_2$ iff $u_1\geq_{l-1} u_2$,
where $l$ is the minimal index such that $v\in G_{j_l}\cup R_{j_l}$ (if no such index exists, then $l=k+1$). If $l\leq k$ and $v\in R_{j_l}\setminus G_{j_l}$ then we also require $u_1>_l u_2$. 

Finally, put $\infty >_{v,\pi} u$ for every $u\in U$.
The value $\infty$ indicates that one needs to make infinite progress in order to win (which
is not possible).

\begin{definition}
A \emph{Streett ranking} over $V$ and $k$-tree $U$ is a function $r:V \times \Pi(k) \rightarrow U^\infty$.
A Streett ranking is \emph{good} if for every vertex $v$ and $\pi\in \Pi(k)$ such that $r(v,\pi)\neq \infty$ we have the following:
\begin{compactitem}
    \item[--]
        if $v\in V_0$, there is some $v'\in E(v)$ such that $r(v,\pi) >_{v,\pi} r(v',\incr_v(\pi))$;
    \item [--]
        if $v\in V_1$, for all $v' \in E(v)$ we have $r(v,\pi) >_{v,\pi} r(v',\incr_v(\pi))$.
\end{compactitem}
\end{definition}

Intuitively, a Streett ranking is good if it witnesses a strategy for player $0$ that always makes (strict) progress.
For a good ranking $r$, $v\in V_0$ and $\pi$ such that $r(v,\pi)\neq \infty$, player $0$ has a move to some $v'\in E(v)$
such that (strict) progress is made between the value that $r$ assigns to $v$ under memory value $\pi$ and the value
that $r$ assigns to $v'$ under the new memory value $\incr_v(\pi)$ that is obtained by updating $\pi$ according to the
colors visited by $v$. The level at which this progress is required to happen is determined by the colors visited by $v$ together
with $\pi$.
For $v\in V_1$, progress is required for all successors of $v$ instead.
\begin{example}\label{ex:strettrank}
For the objective $\alpha_1$ in Example~\ref{ex:el}, the strategy $\sigma_1$ is obtained from the following
Streett ranking $r$, using $T^\infty$, where $T$ is a tree with two levels, with the root having three children, referred to as 
$u_1,u_2,u_3$ (for readability of this example, we use $T$ in place of a full universal tree). Also let $\pi_1$ and $\pi_2$ denote the permutations $(2,1),(1,2)\in \Pi(2)$, respectively, corresponding to the respective
nodes $1.1$ and $2.1$ in $Z(\alpha_1)$ from Example~\ref{ex:ziel}; the relation between rankings and Zielonka trees will be explained in more detail in Section~\ref{sec:elranking}.
\begin{align*}
r(u,\pi_1)&=u_1 & r(v,\pi_1)&=u_2 & r(w,\pi_1)&=u_3 \\
r(u,\pi_2)&=u_3 & r(v,\pi_2)&=u_2 & r(w,\pi_2)&=u_1
\end{align*}
The ranking $r$ assigns $\infty$ to all other pairs. This ranking is good; for instance for $(u,\pi_1)$,
we have $u\in V_0$, $v\in E(u)$ and
\begin{align*}
r(u,\pi_1)=u_1>_{u,\pi_1} u_2 = r(v,\pi_2)= r(v,\incr_u(\pi_1)),
\end{align*}
pointing out that $\incr_u(\pi_1)=\pi_2$ since $u$ visits color $c$, corresponding to a visit to $G_2$, the
granting set of the Streett pair that is considered most relevant in $\pi_1$. Hence $\incr$
rearranges the order of the pairs to $\pi_2$ when applied to $u$ and $\pi_1$. Furthermore, 
$u_1>_{u,\pi_1} u_2$ since $u$ is colored with $c$ but not with $a$ (corresponding to 
visiting $G_2$ but not $R_2$), and $u_1\geq_0 u_2$ so that progress is made
between $u_1$ and $u_2$ at level $0$.

On the other hand, we cannot assign any leaf from $T$ to, e.g., $(y,\pi_1)$ since
from $y$, player $1$ can indefinitely force color $a$ and avoid color $c$ (corresponding to visiting $R_2$ and avoiding $G_2$, respectively),
which would necessitate an infinitely decreasing chain of leafs assigned by such a ranking,
hence preventing it from being good.
\end{example}

Piterman and Pnueli show that finding a good Streett ranking is a sound way to establish both a winning region and a witnessing winning strategy in a Streett game.
Indeed, player~0 uses a permutation $\pi\in \Pi(k)$ as memory in this strategy.
As long as the memory value is $\pi=j_1 \ldots j_k$, player~0 uses the ranking $r(\cdot,\pi)$ to determine her next move,
intuitively trying to minimize the rank $r(\cdot,\pi)$. 
Whenever a node $v$ such that $v\in G_{j_l}$ (for minimal $l$) is visited, she updates her memory to $\incr_v(\pi)$ and moves to a vertex $v'$ such that $r(v',\incr_v(\pi))$ is not $\infty$ (we say that she updates her memory even if $\ind(v,\pi)\geq k$, which implies that $\incr_v(\pi)=\pi$).
Consider an infinite play where player~0 uses this strategy.
Then none of the sets $G_{j_l}$ are visited as long as the memory does not change.
If the memory stabilizes and remains constant over an infinite suffix of the play, then for all $l$, the set $R_{j_l}$ is visited finitely often. Indeed, every time $R_{j_l}$ is visited there is some decrease in $r(\cdot,\pi)$, and it is never incremented as the only way to increment the rank is if the memory is changed.
Otherwise, there is some $l$ such that the memory changes around position $l$ infinitely often. It follows that all sets $R_{j_l'}$ for $l'<l$ are visited finitely often and all sets $G_{j_l''}$ for $l'' \geq l$ are visited infinitely often. 
Piterman and Pnueli state the following:

\begin{lemma}[{\rm\hspace{-0.1pt}\cite{DBLP:conf/lics/PitermanP06}}]
    Given a good Streett ranking $r$, player~0 wins the Streett game from every state $v$ such that for some permutation $\pi$ we have $r(v,\pi)\neq \infty$. 
    \label{lem:soundness streett ranking}
\end{lemma}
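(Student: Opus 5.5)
We define the strategy $\sigma$ suggested by the informal discussion preceding the statement and show that every compatible play from such a $v$ is winning. Memory values are permutations $\pi\in\Pi(k)$. Start at $v$ with a $\pi_0$ such that $r(v,\pi_0)\neq\infty$. At a vertex $v_t$ with memory $\pi_t$ and $r(v_t,\pi_t)\neq\infty$, set $\pi_{t+1}=\incr_{v_t}(\pi_t)$. If $v_t\in V_0$, player~0 moves to a successor $v'$ with $r(v_t,\pi_t)>_{v_t,\pi_t} r(v',\pi_{t+1})$; goodness guarantees one exists. If $v_t\in V_1$, goodness guarantees the same relation for every successor. Since $\infty>_{v,\pi}u$ only holds with $\infty$ on the left, a finite value is never followed by $\infty$. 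By induction, every compatible play therefore carries finite ranks $u_t=r(v_t,\pi_t)$ with $u_t>_{v_t,\pi_t}u_{t+1}$ for all $t$.

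Fix such a play and let $L$ be the minimal position $l\in\{1,\ldots,k+1\}$ such that the suffix of the permutation from position $l$ on is changed infinitely often. Choose $t_0$ after which positions $1,\ldots,L-1$ of $\pi_t$ are constant, say $j_1,\ldots,j_{L-1}$. Position $l$ can only be rewritten by $\incr_{l'}$ with $l'\le l$. So after $t_0$ no $G_{j_{l'}}$ with $l'<L$ is visited, for otherwise $\incr$ would rewrite a position before $L$.

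\textbf{Step 1: the fixed prefix forces finitely many red visits.} We show that each $R_{j_{l'}}$ with $l'<L$ is visited only finitely often. After $t_0$, the relation $u_t>_{v_t,\pi_t}u_{t+1}$ with level index $l$ gives $u_t\geq_{l-1}u_{t+1}$. Hence the $(L-1)$-prefix of $u_t$ never increases whenever $l\geq L$. When $l<L$, the vertex $v_t$ lies in $R_{j_l}\setminus G_{j_l}$, and then $u_t>_l u_{t+1}$, which strictly decreases the $(L-1)$-prefix. Prefixes of nodes of the finite tree $U$ are well-founded under this lexicographic order. Thus the $(L-1)$-prefix strictly decreases only finitely often, and each $R_{j_{l'}}$ with $l'<L$ is visited finitely often.

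\textbf{Step 2: the infinitely changing suffix forces its greens.} If $L=k+1$, Step~1 already covers all pairs and we are done. Otherwise the set $S=\{j_L,\ldots,j_k\}$ of pairs occupying positions $\geq L$ is constant after $t_0$. Infinitely many updates $\incr_{l}$ with $l\geq L$ occur, and each such update at position $l$ requires a visit to $G_{j_l}$. This step is the main obstacle. We must show, using the cyclic definition of $\incr$, that every $G_i$ with $i\in S$ is visited infinitely often. The argument is an inner induction on positions $L,L+1,\ldots$.

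Let $l^*\geq L$ be the minimal position at which updates $\incr_{l^*}$ occur infinitely often, which exists by the choice of $L$. Between consecutive such updates, no update happens at positions $L,\ldots,l^*-1$, so the entries at those positions are frozen. By minimality of $L$, those entries must themselves be rewritten infinitely often, which only an update at a position $\le l^*-1$ can do. Hence $l^*=L$, so the entry at position $L$ is rewritten infinitely often.

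Each rewrite of position $L$ replaces $j_L$ by the next larger index in $S$, wrapping around from the maximum to the minimum. Repeating this cycle infinitely often, each $i\in S$ occupies position $L$ and is then advanced, which requires a visit to $G_i$. Therefore $G_i$ is visited infinitely often for every $i\in S$.

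\textbf{Conclusion.} Every pair has its index either in $\{j_1,\ldots,j_{L-1}\}$ or in $S$. For the former, $R_i$ is visited finitely often by Step~1; for the latter, $G_i$ is visited infinitely often by Step~2. Hence every Streett pair is satisfied, and player~0 wins from $v$.
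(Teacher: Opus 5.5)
Your argument is fine when $L\le k$, but there is a genuine gap when the memory stabilizes ($L=k+1$). The inference ``after $t_0$ no $G_{j_{l'}}$ with $l'<L$ is visited, for otherwise $\incr$ would rewrite a position before $L$'' assumes that $\incr_{l'}$ always rewrites position $l'$. This holds for $l'\le k-1$, but $\incr_k(\pi)=\pi$, as the paper states explicitly. So with $L=k+1$ the play may visit $G_{j_k}$ infinitely often, with $v_t$ outside every earlier $G_{j_{l'}}\cup R_{j_{l'}}$, while the memory never moves.

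At such a step the progress level is $k<L$, but $v_t\notin R_{j_k}\setminus G_{j_k}$. Goodness then only gives $u_t\geq_{k-1}u_{t+1}$, so the full node, which is your $(L-1)$-prefix, can increase. Both the monotonicity claim and the conclusion of Step~1 therefore fail for $l'=k$. Concretely, take $k=1$ and vertices $a,b$ with edges $a\to b\to a$, $a\in G_1$, $b\in R_1\setminus G_1$. Over a depth-one tree with leaves $1<2$, set $r(a,(1))=1$ and $r(b,(1))=2$. This ranking is good and the memory is constant. The ranks alternate $1,2,1,2,\ldots$, and $R_1$ is visited infinitely often, contradicting Step~1; the play still wins only because $G_1$ recurs. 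A smaller wording point: as literally stated, your $L$ equals $1$ whenever the memory changes infinitely often, since suffixes are nested. What you mean is the least position whose entry changes infinitely often.

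The repair is short. When $L=k+1$, run Step~1 only on the $(k-1)$-prefix; this is legitimate because $\incr_{l'}$ does rewrite position $l'$ for $l'\le k-1$. Then split on $j_k$. Either $G_{j_k}$ is visited infinitely often, or consider the time after its last visit and after the finitely many visits handled by Step~1. From then on every step has progress level $k$ with $v_t\in R_{j_k}\setminus G_{j_k}$, giving $u_t>_k u_{t+1}$, or level $k+1$, giving $u_t\geq_k u_{t+1}$. Hence $R_{j_k}$ is visited finitely often.

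Alternatively, choose $L$ as the paper does in the proof of Lemma~\ref{lem:soundness el ranking}, of which this lemma is the special case. There $L$ is the least \emph{progress level} (least $l$ with $v_t\in G_{j_l}\cup R_{j_l}$) occurring infinitely often, not the least permutation position that changes. With this choice:
\begin{itemize}
\item positions below $L$ stabilize automatically, since the update position is never below the progress level;
\item the $(L-1)$-prefix of the rank is eventually constant;
\item infinitely many occurrences of level $L$ force infinitely many visits to $G_{j_L}$, since otherwise the $L$-prefix would eventually be non-increasing yet strictly decrease infinitely often.
\end{itemize}
For $L=k$ this yields $G_{j_k}$ infinitely often without the memory moving. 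For $L<k$, your cyclic argument from Step~2 finishes the proof.
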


\begin{corollary}[{\rm\hspace{-0.1pt}\cite{DBLP:conf/lics/PitermanP06}}]
    A good Streett ranking induces a winning strategy using at most $k!$ memory values.
    \label{cor:strategy memory streett}
\end{corollary}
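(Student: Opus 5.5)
The statement to prove is Corollary~\ref{cor:strategy memory streett}: a good Streett ranking induces a winning strategy with at most $k!$ memory values. The plan is to read the strategy directly off the construction described before Lemma~\ref{lem:soundness streett ranking}, using $\Pi(k)$ itself as the memory set. Since $|\Pi(k)|=k!$, the memory bound is immediate once the strategy is defined as a finite-memory strategy over $\Pi(k)$. The only remaining task is to check that it is winning, which is Lemma~\ref{lem:soundness streett ranking}.

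\textbf{Step 1: define the memory structure.} The memory states are the permutations $\pi\in\Pi(k)$. The update on visiting a vertex $v$ is $\pi\mapsto\incr_v(\pi)$. This update depends only on the colors of $v$, through $\ind(v,\pi)$, so it matches the paper's notion of memory as a transition system driven by visited colors.

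\textbf{Step 2: choose the initial memory.} For each vertex $v$ with $r(v,\pi)\neq\infty$ for some $\pi$, fix one such $\pi_v$ as the initial memory value. To keep the bound at $k!$ uniformly, I would instead use a single initialization rule. At the start of a play from $v$, pick the least $\pi$ in some fixed order with $r(v,\pi)\neq\infty$. This choice depends only on the first vertex and can be absorbed into the initial update, so it adds no extra memory states.

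\textbf{Step 3: define the next-move function.} At $v\in V_0$ with memory $\pi$ and $r(v,\pi)\neq\infty$, move to some $v'\in E(v)$ with $r(v,\pi)>_{v,\pi}r(v',\incr_v(\pi))$. Such a $v'$ exists by goodness.

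\textbf{Step 4: check the invariant.} Every reachable configuration $(v,\pi)$ should satisfy $r(v,\pi)\neq\infty$. Induction suffices here. The relation $>_{v,\pi}$ never allows a finite value above $\infty$, so player~0's chosen successor has finite rank. The condition for $v\in V_1$ says every successor $v'$ has $r(v',\incr_v(\pi))$ finite, for the same reason.

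\textbf{Step 5: winning.} The strategy is therefore well defined along all compatible plays, and its winning property is exactly Lemma~\ref{lem:soundness streett ranking}.

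The main subtlety is Step 2: the initial permutation must be chosen without enlarging the memory beyond $\Pi(k)$. This is harmless under the standard convention that the initial memory may depend on the starting vertex.
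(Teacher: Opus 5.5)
Your proposal is correct and is essentially the paper's argument: the strategy described before Lemma~\ref{lem:soundness streett ranking} uses the permutations $\Pi(k)$ as memory, updates them by $\incr_v$, and chooses moves using goodness of the ranking, so the bound $|\Pi(k)|=k!$ is immediate. One point of care is that Step~5 relies on the \emph{proof} of Lemma~\ref{lem:soundness streett ranking}, not just its existential statement; that proof shows this particular strategy wins, because it only uses the chain $r(v_i,\pi_i)>_{v_i,\pi_i}r(v_{i+1},\pi_{i+1})$ along compatible plays.
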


Piterman and Pnueli implicitly use rankings over the universal tree $U=\{1,\ldots, n\}^{\leq k}$.
While their proof of Lemma~\ref{lem:soundness streett ranking} is tailored to this tree, it directly carries over to the more general case as presented here.
It also follows as a special case of the proof of Lemma~\ref{lem:soundness el ranking} for Emerson-Lei games below.

We now use the insights from Section~\ref{sec:zielonka mcnaughton} to improve the results from~\cite{DBLP:conf/lics/PitermanP06}
and show that the set $U(n,k)^\infty$ (i.e., $U(n,k)\cup \{\infty\}$) is complete for solving Streett games.

\begin{lemmarep}
    For every Streett game with $n$ vertices and $k$ pairs there exists a good Streett ranking over $U(n,k)$ such that for every vertex $v$ that is winning for player~0 there is some permutation $\pi$ such that $r(v,\pi)\neq \infty$.
    \label{lem:completeness streett ranking}
\end{lemmarep}

\begin{proofsketch}
    The ranking is obtained by combinations of the signatures identified in Section~\ref{sec:zielonka mcnaughton}.
    Given a permutation, every prefix of the permutation identifies a selection of the pairs, which leads to the choice of a signature.
    At some point, the prefix identifies a single leaf rather than a subtree. This is the value associated with all the possible suffixes of this prefix. 
    The properties in Lemma~\ref{lem:streett signature properties} translate directly to the properties of the ranking. 
\end{proofsketch}
\begin{proof}
    We show that the combination of signatures for Streett games established in Section~\ref{sec:zielonka mcnaughton} are, in fact, the required ranking.

    Consider a permutation $\pi=j_1,\ldots, j_k\in \Pi(k)$ and let $\pi|_i$ denote the prefix of $\pi$ of length $i$, namely $j_1,\ldots, j_i$.
    For every permutation $\pi$ and every prefix $\pi|_i$, let $s_{\pi|_i}$ denote the signature that is defined for the choices of pairs as defined by $\pi|_i$ with the expected propagation of hierarchical trees.
    That is, $s_{\pi|_1}$ is the signature obtained for $j_1$ as the first pair chosen in line~1 of Algorithm~\ref{alg:Zielonka} with $U(n,k)$ as the hierarchical tree.
    Then, for every recursively computed region $W_{2j}^0$, $s_{\pi|_2}$ is the signature obtained when the top level call chooses $j_1$ in line~1, and the first recursive call chooses $j_2$ in line~1 using the $(|W_{2j}^0|,d-1)$-hirarchical tree $U_{2j}^0$ identified by $s_{\pi|_1}$.
    Similarly, $s_{\pi|_i}$ corresponds to the chain of recursive calls choosing $j_1$ in line~1 identifying a tree $U_1$, then $j_2$ in line~1 with tree $U_2$, and so on, until finally $j_i$ is chosen in line~1 in the $i$-th call with $U_i$.
    Notice that, as a final case, if the $k$-th call relies on calling {\sc SolveStreett} for some sub-arena with $0$ pairs, then it also identifies a unique leaf (equivalently a $(1,0)$-tree) for that sub-arena. It follows that in any case, the $k$-th call (if reached) \emph{always} associates leaves with all vertices regardless of whether they are part of an attractor or a recursively computed winning region. 

    Consider a node $v$ in the winning region and a permutation $\pi=j_1,\ldots, j_k$ and
    the series of calls to {\sc SolveStreett} with pairs $j_1$, $j_2$, and so on. 
    In every such call, either $v$ is included in a direct attractor computation, in which case the signature associates with it a single leaf, or it is included in a recursively computed winning region, in which case the signature associates with it a subtree (or a leaf at level $k$).
    If $v$ is included in a recursively computed winning region for $j_i$, then it is included in the call for $j_{i+1}$. 
    Dually, if $v$ is included in an attractor then it is not included in the sub-arena of further recursive calls.
    Let $i$ be the last call in which $v$ is included in a sub-arena, but at most $k$. 
    It follows that $v$ is included as part of an attractor (or $i=k$).  
    Let $\pi|_i$ be the corresponding prefix.
    It follows that $v$ is assigned a value in $s_{\pi|_i}$ but not in $s_{\pi|_{i+1}}$.
    This is notably the case, where $v$ is found in $W_{2j}^p$ for some $p>0$ and associated with a single value in $U(n,k)$, or where $i=k$ and the entire sub-arena sent to the $k+1$-th call of {\sc SolveStreett} is associated with a single value in $U(n,k)$. 
    Then, for every permutation $\pi'$ such that $\pi|_i=\pi'|_i$ we put $r(v,\pi')=s_{\pi|_{i}}(v)$.
    For simplicity, we denote this unique value by $s_{\pi|_{i}}(v)$, though, formally, $s_{\pi|_i}(v)$ is a set containing this unique value. 
    Clearly, for every $v\in \Win_0$ and every permutation $\pi\in \Pi(k)$ we have $r(v,\pi)\neq \infty$.

    In order to prove the progress of the ranking, we unpack the definition of $u_1 >_{v,\pi} u_2$.
    Let $\pi=j_1,\ldots, j_k$. Consider a vertex $v$ and a successor $v'$. 
    By definition, we identify the minimal index $l$ such that $v\in G_{j_l}\cup R_{j_l}$. We then require $u_1 \geq_{l-1} u_2$ and if $v\in R_{j_l}$ also $u_1 >_{l} u_2$ (or $k+1$). 
    Consider the indices, ``walking'' from $j_1$ to $j_l$ until ``meeting'' this significant index $l$.
    Starting from $j_1$, we require that $u_1\geq_0 u_2$, which always holds. If $v\in G_{j_1}$ we are done.
    If $v\in R_{j_1}\setminus G_{j_1}$ then, we require in addition that $u_1>_1 u_2$. That is, that the first entry in $u_1$ is larger than the first entry in $u_2$.
    If $1$ is not the significant index, then consider some $i>1$ that is still at most the significant index.
    For $i$, if $v\in G_{j_i}\cup R_{j_i}$, we require that $u_1\geq_{i-1} u_2$ which forces the first $i-1$ entries of $u_1$ to be bounded from below by the first $i-1$ entries of $u_2$.
    If $v\in G_{j_i}$, we are done. If $v\in R_{j_i}\setminus G_{j_i}$, we require, in addition,
    that $u_1>_{i}u_2$, which means that one of the first $i$ entries in $u_1$ is strictly larger than the same entry in $u_2$.
    Again, if $i$ is not the significant index, we continue to $i+1$.
    If we reach the end of the permutation, we found that $v$ is in no $G_i$ and no $R_i$, in that case, we just require that $u_1\geq_k u_2$, that is, that the entire entry $u_1$ is bounded from below by $u_2$. Indeed, they could be equivalent.
    Because $u_1 \geq_{l'} u_2$ implies that $u_1 \geq_{l''} u_2$ for every $l''<l'$,
    as we go along the permutation with the index $i$ ranging over $\{1,\ldots, k+1\}$, we need to enforce that $u_1 \geq_{i-1} u_2$, and then make further checks regarding $G_{j_i}$ and $R_{j_i}$, and then either stop or continue to $i+1$. 
    
    We now establish that:
    \begin{itemize}
    \item 
        For every $v\in V_0\cap \Win_0$ and all $\pi\in\Pi(k)$ there is $v'\in E(v)$ such that $r(v,\pi)>_{v,\pi} r(v',\incr_v(\pi))$.

        Let $\pi=j_1,\ldots, j_k$. 
        We iterate $i$ over $1,\ldots, k+1$.
        We establish, as we iterate over $i$, that the sub-arena $\subarena$ handled at stage $i$ ensures that for every $v' \in E(v)$ such that $v' \in \subarena$ and every permutation $\pi'$ such that $\pi'|_{i-1}=\pi|_{i-1}$ we have $r(v,\pi) \geq_{i-1} r(v',\pi')$.
        In particular, this holds if $\pi'=\incr_v(\pi)$ as getting to iteration $i$ ensures that $v\notin G_{j_{i'}} \cup R_{j_{i'}}$ for every $i'<i$. 
        Initially, {\sc SolveStreett} is called with the entire winning region as the sub-arena, and hence, $E(v)\cap \Win_0\neq\emptyset$ and it is indeed the case that for every $v'\in E(v)\cap \Win_0$ and every permutation $\pi'$ we have $r(v,\pi')\neq \infty$.
        In particular, $r(v,\pi) \geq_0 r(v',\incr_v(\pi))$.

        Consider a call of {\sc SolveStreett} handling the $i$-th pair. In particular $i\leq k$. 
        
        If $v\in W_0^0$ then $v\in G_{j_i}$. 
        Iterating over $i'<i$ we established that, for every $v'\in E(v)$ that is in the sub-arena sent to {\sc SolveStreett}, we have $r(v,\pi)\geq_{i-1} r(v',\pi')$ for every $\pi'$ such that $\pi'|_{i-1}=\pi|_{i-1}$.
        As $v\in G_{j_i}$ we have that $\incr_v(\pi)$ is an updated value. But clearly, $\pi|_{i-1}=\incr_v(\pi)|_{i-1}$. Furthermore, this is sufficient to establish that $r(v,\pi) >_{v,\pi} r(v',\incr_v(\pi))$ for every $v'\in E(v)\cap \subarena$.
        In this case we do not need to continue the iteration.

        Consider the case that $v\in W_{2j}^p$ for some $j$ and $p>0$.
        
        By the definition of attractors, there is some $v'\in E(v)\cap \subarena$ such that $v'$ is in the lower set $W_{2j}^{p-1}$.
        Furthermore, we have $v\notin W_0^0$ so that $v\notin G_{j_1}$. 
        It follows that the shared prefix between $\incr_v(\pi)$ and $\pi$ includes  $j_i$.
        Then, for every permutation $\pi'$ that has the same $i$-prefix as $\pi$ we have that $r(v,\pi)$ and $r(v',\pi')$ is set using the same signature in such a way that $r(v,\pi)>_i r(v',\pi')$. This is particularly the case if $\pi'$ is $\incr_v(\pi)$.
        It implies that $r(v,\pi)>_{v,\pi} r(v',\incr_v(\pi))$, regardless of the index $l$ such that $v\in G_{j_l} \cup R_{j_l}$ (or if $k+1$). As before, we stop the iteration.

        Consider the case that $v\in W_{2j}^0$ for $j>0$. It again follows that $v\notin G_{j_i}$.
        As $v$ is in {\tt No\_R$_{j_1}$}, we conclude $v\notin R_{j_1}$ and $v\notin\attr^{\tt Residue}_1(R_{j_i})$ (so $v$ is not a dead end in the sub-arena defined in line~6 of the top level call of {\sc SolveStreett}).
        Consider now the recursive call of {\sc SolveStreett}.
        For all nodes $v'$ that are computed by {\sc SolveStreett}, we have that their rank is in the same subtree of $U(n,k)$ such that $r(v,\pi)=_i r(v',\pi')$ for all $\pi'$ such that $\pi|_i=\pi'|_i$.
        Hence, in order to establish that 
        $r(v,\pi)>_{v,\pi} r(v',\incr_v(\pi))$ we need to check the rest of the permutation.

        The requirement for the iteration $i+1$ is that $r(v,\pi)=_i r(v',\pi')$, which we have just established. 
        We continue in this iteration until either we get to a value of $i$ where the iteration stops and establishes that $r(v,\pi)>_{v,\pi} r(v',\incr_v(\pi))$ or we get to $i=k+1$.
        In this final case, $v\notin G_j \cup R_{j}$ for all $j$ and $\incr_v(\pi)=\pi$. 
        We have established that for every $v'\in E(v)\cap \subarena$ we have $r(v,\pi) \geq_{k} r(v',\incr_v(\pi))$, which shows that $r(v,\pi)>_{v,\pi} r(v,\incr_v(\pi))$.

        The case of $v\in V_1$, below, requires to consider also the case of choice of transitions outside the sub-arena.
    \item 
        For every $v\in V_1\cap \Win_0$, every $\pi\in \Pi(k)$, and every $v'\in E(v)$ we have $r(v,\pi)>_{v,\pi}r(v',\incr_v(\pi))$.

        Let $\pi=j_1,\ldots, j_k$. 
        We iterate $i$ over $1,\ldots, k+1$.
        We establish, as we iterate over $i$, that the sub-arena $\subarena$ handled at stage $i$ ensures that for every permutation $\pi'$ such that $\pi'|_{i-1}=\pi|_{i=1}$ and all successors $v'$ of $v$ we have either (a) $v'$ is outside the sub-arena and $r(v,\pi) >_{i-1} r(v',\pi')$ or (b) $v'$ is inside the sub-arena and $r(v,\pi) \geq_{i-1} r(v',\pi')$. In particular, this holds when $\pi'$ is $\incr_v(\pi)$. 
        Initially, {\sc SolveStreett} is called with the entire winning region as the sub-arena.
        Hence, $E(v)\subseteq \Win_0$ and for every $v'\in E(v)$ and every $\pi'$ we have $r(v',\pi')\neq \infty$.
        Then, $r(v,\pi)\geq_0 r(v',\incr_v(\pi))$ establishing this requirement. 

        Consider a call of {\sc SolveStreett} handling the $i$-th pair. In particular $i\leq k$. 
        
        Consider a node $v$ and a successor $v'\in E(v)$.
        Iterating over $i'<i$ we established that for every $v'\in E(v)$ that is not in the sub-arena we have $r(v,\pi)>_{i-1} r(v',\incr_v(\pi))$, which is sufficient.

        So, in the rest of this proof, we care only about the case that $v'$ is in the sub-arena. 
        In this case, we established that for every $v'\in E(v)$ that is in the sub-arena we have $r(v,\pi)\geq_{i-1} r(v',\pi')$ for every $\pi'$ such that $\pi'|_{i-1}=\pi|_{i-1}$.

        If $v\in W_0^0$ then $v\in G_{j_1}$. Then, $\incr_v(\pi)$ is an updated value. However, $r(v,\pi)\geq_{i-1}r(v',\incr_v(\pi))$ is sufficient to establish that $r(v,\pi)>_{v,\pi}r(v',\incr_v(\pi))$ and we are done.  We stop the iteration. 

        Consider the case that $v\in W_{2j}^p$ for some $j$ and $p>0$.

        By the definition of attractors, for every $v'\in E(v)$ that is in the sub-arena we have that $v'$ is in a lower set $W_{2j'}^{p'}$ for some $j'<j$ or $j=j'$ and $p'<p$.
        In any case, $r(v,\pi)>_i r(v',\pi')$ for every permutation $\pi'$ that agrees with $\pi$ on the $i$ prefix. As $v\notin G_{j_i}$, we have that $\incr_v(\pi)$ agrees with $\pi$ on the $i$-prefix.
        This implies that $r(v,\pi) >_{v,\pi} r(v',\incr_v(\pi))$ regardless of the index $l$ such that $v\in G_{j_l} \cup R_{j_l}$ (or if $l=k+1$). We stop the iteration. 

        Consider the case that $v\in W_{2j}^0$ for $j>0$.
        We have $v\notin G_{j_i}$.
        The attractor for player~1 to $R_{j_i}$ is removed just before the recursive call to {\sc SolveStreett}. Thus we have
        $v\notin \attr^{\tt Residue}_1(R_{j_i})$.
        In particular, it follows that $v\notin R_{j_i}$.
        Consider a vertex $v'\in E(v)$ in the sub-arena.
        It can be the case that $v' \in W_{2j'}^p$ for $j'<j$ and some $p$.
        In this case $r(v,\pi)>_i r(v',\pi')$ for all permutations $\pi'$ agreeing with $\pi$ on the $i$ prefix. Particularly, for $\incr_v(\pi)$, implying $r(v,\pi)>_{v,\pi}r(v',\incr_v(\pi))$.
        These cases include all the nodes $v'\in E(v)\cap \subarena$ such that $v'\notin W_{2j}^0$.
        It follows that when we call {\sc SolveStreett} on $W_{2j}^0$, for every successor $v'$ of $v$ that is not in $W_{2j}^0$ we have $r(v,\pi)>_i r(v',\pi')$.
        For every successor $v'$ of $v$ that is in $W_{2j}^0$ and all permutations agreeing with $\pi$ on the $i$-th prefix, they are going to be allocated in the same subtree $U_{2j}^0$.
        It follows that they have $r(v,\pi)=_i r(v',\pi')$ for all the permutations $\pi'$ we care about.
        This establishes that when we iterate to $i+1$, the required conditions hold. 
        We also note that $\pi|_i = \incr_v(\pi)|_i$ as $G_{j_1}$ cannot be visited in this region.
        Furthermore, in order to establish that $r(v,\pi)>_{v,\pi} r(v',\incr_v(\pi))$ we need to check the rest of the permutation. 

        The requirement for the iteration $i+1$ was just established. 
        We continue this iteration until either we get to a value of $i$ where the iteration stops and establishes that $r(v,\pi) >_{v,\pi} r(v,\incr_v(\pi))$ or we get to $i=k+1$.
        In this final case $v\notin G_j\cup R_j$ for all $j$ and $\incr_v(\pi)=\pi$. We have established that for every $v'\in E(v)\cap \subarena$ we have $r(v,\pi)\geq_{k} r(v',\incr_v(\pi))$, which shows that $r(v,\pi)>_{v,\pi} r(v',\incr_v(\pi))$. 
    \end{itemize} 
\end{proof}

\subsubsection*{Improved rank-lifting algorithm for Streett games.}
Finally, Piterman and Pnueli show that Streett games can be solved by a direct rank-lifting algorithm that runs in time $O(kmk!|U|)$, where $U$ is the tree used in the ranking. 
By Lemma~\ref{lem:completeness streett ranking}, we can use the quasipolynomial tree $U(n,k)$ for $U$.

\begin{theoremrep}
\label{thm:streett-complexity}
    Streett games with $n$ states, $m$ edges, and $k$ pairs can be solved in time $O(mk\log(k)k!|U(n,k)|)$ and space $O(nk!\log{(n)}\log{(k)})$. 
\end{theoremrep}

To prove Theorem~\ref{thm:streett-complexity}, we propose a rank-lifting algorithm for
Streett games, show it to be correct and analyze its complexity.

 We define a lifting function that updates Streett rankings. For a Streett ranking $r$, $v\in V$ and $\pi\in \Pi(k)$, $\mathsf{Lift}(r,v,\pi)$ is the function that is defined by putting $(\mathsf{Lift}(r,v,\pi))(u,\pi')=r(u,\pi')$ for $u\neq v$ or $\pi'\neq \pi$, and 
    $$
    \begin{array}{l}
    (\mathsf{Lift}(r,v,\pi))(v,\pi) = \textstyle\max_{\leq} \{ r(v,\pi), \mathsf{lift} (r,v,\pi) \}, \mbox{ where}\\[4pt]
     \mathsf{lift}(r,v,\pi)  =
      \left \{
      \begin{array}{l l}
      \textstyle\min_\leq \big \{ \mathsf{update}(r,v,\pi,v') \, \big | \, v'\in E(v) \big \} & v \in V_0  \\
      \textstyle\max_\leq \big \{ \mathsf{update}(r,v,\pi,v') \, \big | \,v'\in E(v) \big \} & v\in V_1 
      \end{array}
    \right .
\\
    \mathsf{update}(r,v,\pi,v') = \\
    \hfill 
    \textstyle\min_\leq \big ( 
      \{\infty\} \cup 
      \{u \in U(n,k) \mid u>_{v,\pi}r(v',\incr_v(\pi))\} 
        \big  ) 
    \end{array}
    $$   

    Crucially, the definition of $\mathsf{Lift}$ encodes the property that is required for a Streett ranking to be good at $v$ and $\pi$. 
    Hence any Streett ranking $r$ that is a fixpoint of $\mathsf{Lift}$ for every $v$ and $\pi$, that is, for which we have $\mathsf{Lift}(r,v,\pi)=r$ for every $v$ and $\pi$, is a good Streett ranking: Let $r$ be a Streett ranking with $\mathsf{Lift}(r,v,\pi)=r$ for every $v$ and $\pi$, and let $v\in V_0$ and $\pi\in \Pi(k)$ such that $r(v,\pi)\neq\infty$. 
    Then we have $r(v,\pi)=\textstyle\min_\leq\{u\in U(n,k)\mid u>_{v,\pi}r(v',\incr_v(\pi))\}$ for some suitable $v'\in E(v)$, showing that
    $r(v,\pi)>_{v,\pi}r(v',\incr_v(\pi))$, as required. 
    The case for $v\in V_1$ is analogous.

   \setlength{\textfloatsep}{1pt}
    
    \begin{algorithm}[bt]
    \caption{\label{alg:rank lifting streett}$ \textsc{RankLifting}$}
    \DontPrintSemicolon
    $r=r_{min}$\;
    \While{{\rm (}$\exists v,\pi ~.~ \mathsf{Lift}(r,v,\pi)\neq r${\rm )}}{
        $r \gets \mathsf{Lift}(r,v,\pi)$\;
    }
    \Return{$r$}
    \end{algorithm}

\setlength{\textfloatsep}{\defaulttextfloatsep}

    The rank-lifting algorithm for Streett games 
    is presented as Algorithm~\ref{alg:rank lifting streett}. The algorithm 
    starts by initializing $r$ to be the minimal ranking $r_{\min}$, assigning the minimal leaf in $U(n,k)$ to all pairs 
    $(v,\pi)\in V\times\Pi(k)$, and then repeatedly applies the lifting function to $r$ for some $v$ and $\pi$ until the ranking stabilizes (cf.~\cite{DBLP:conf/stacs/Jurdzinski00,DBLP:conf/lics/PitermanP06}).
    We point out that once $r(v,\pi)=\infty$ for some $\pi$, then we can put $r(v,\pi')=\infty$ for all permutations $\pi'$. 

\begin{lemma}
    \label{lemma:rank-lifting-correct-streett}
    Algorithm~\ref{alg:rank lifting streett} can compute a good Streett ranking such that for every state $v$ winning for player~0 there is some permutation $\pi$ such that $r(v,\pi)\neq \infty$.
\end{lemma}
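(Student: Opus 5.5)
The plan is the standard Knaster--Tarski argument for rank lifting, as in \cite{DBLP:conf/stacs/Jurdzinski00,DBLP:conf/lics/PitermanP06}. Order Streett rankings pointwise: $r\sqsubseteq r'$ iff $r(v,\pi)\leq r'(v,\pi)$ for all $(v,\pi)$, where $\leq$ is the lexicographic order on $U(n,k)$ extended by $u<\infty$. This is a complete lattice of finite height $|V|\cdot k!\cdot(|U(n,k)|+1)$.

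\textbf{Step 1: monotonicity and inflation.} For fixed $v,\pi$, the map $\mathsf{Lift}(\cdot,v,\pi)$ is monotone and inflationary. Inflation is immediate from the $\max_\leq$ with $r(v,\pi)$. For monotonicity, note that $\mathsf{update}(r,v,\pi,v')$ depends monotonically on $r(v',\incr_v(\pi))$. The reason is that $u>_{v,\pi}x$ is a prefix comparison: $x\leq x'$ implies $x\leq_j x'$ for every $j$, so any $u$ with $u>_{v,\pi}x'$ also satisfies $u>_{v,\pi}x$, and hence the minimum can only grow. Since $\min$, $\max$ and the outer $\max$ are monotone, $\mathsf{Lift}(\cdot,v,\pi)$ is monotone. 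Consequently each iteration of Algorithm~\ref{alg:rank lifting streett} strictly increases $r$, so the loop terminates (after at most the lattice height many lifts). On termination $r$ is a simultaneous fixpoint of all $\mathsf{Lift}(\cdot,v,\pi)$, and therefore good, by the argument given just before the algorithm. The optional shortcut that sets all $r(v,\pi')=\infty$ once some entry is $\infty$ also fits: it is justified once we know that $\infty$ entries are forced on every good ranking below, so I would state it only at the end.

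\textbf{Step 2: the computed ranking is the least simultaneous fixpoint.} Let $r^*$ be any ranking with $\mathsf{Lift}(r^*,v,\pi)=r^*$ for all $v,\pi$. I will show by induction on the iterations that the current $r$ satisfies $r\sqsubseteq r^*$. The base case holds since $r_{\min}\sqsubseteq r^*$. For the inductive step, $r\sqsubseteq r^*$ gives $\mathsf{Lift}(r,v,\pi)\sqsubseteq \mathsf{Lift}(r^*,v,\pi)=r^*$ by monotonicity. Hence the output $r$ is pointwise below every such fixpoint.

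\textbf{Step 3: comparison with the ranking from Lemma~\ref{lem:completeness streett ranking}.} Let $g$ be the good ranking over $U(n,k)$ given by Lemma~\ref{lem:completeness streett ranking}, and take $r^*$ to be the least fixpoint above $g$ (the limit of lifting from $g$). The main obstacle is that $g$ need not itself be a fixpoint: goodness gives $g(v,\pi)>_{v,\pi}g(v',\incr_v(\pi))$, not equality with the minimal such value.

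To handle this, I would show that goodness is preserved under lifting. Concretely, if $r$ is good and $r(v,\pi)\neq\infty$, then $\mathsf{lift}(r,v,\pi)\leq r(v,\pi)$, because $r(v,\pi)$ itself belongs to the set minimized in $\mathsf{update}$ for the witnessing successor (for $v\in V_0$) or for all successors (for $v\in V_1$). Thus $\mathsf{Lift}(g,v,\pi)(v,\pi)=g(v,\pi)$ whenever $g(v,\pi)\neq\infty$. The same holds trivially when $g(v,\pi)=\infty$, since nothing exceeds $\infty$. So $g$ is already a simultaneous fixpoint, and Step 2 applies with $r^*=g$.

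It follows that the computed $r$ satisfies $r\sqsubseteq g$. Therefore, for every $v\in\Win_0$, the permutation $\pi$ with $g(v,\pi)\neq\infty$ also has $r(v,\pi)\neq\infty$. Combined with goodness from Step 1 (and Lemma~\ref{lem:soundness streett ranking} for the converse direction), this proves the lemma.
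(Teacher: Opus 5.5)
Your proof is correct and follows the paper's argument. Both use the pointwise lattice of rankings and the fact that $\mathsf{Lift}$ is monotone and inflationary. Both run an induction showing that the iteration from $r_{\min}$ stays below any fixpoint. Both then observe that a good ranking $r$ satisfies $\mathsf{Lift}(r,v,\pi)\sqsubseteq r$, so the output lies below the ranking of Lemma~\ref{lem:completeness streett ranking}.

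The only difference is cosmetic: you combine that inequality with inflation to see that $g$ is itself a simultaneous fixpoint, instead of going through the least fixpoint of $\mathsf{LIFT}$ as the paper does. Your brief detour about ``the least fixpoint above $g$'' is unnecessary.
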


\begin{proof}
    Given two Streett rankings $r,r'$, we define 
    \begin{itemize}
      \item[--] $r\sqsubseteq r'$ iff for all $v\in V$ and $\pi\in \Pi(k)$, $r(v,\pi)\leq r'(v,\pi)$,
      \item[--] $(r\sqcap r')(v,\pi)=\min_\leq(r(v,\pi),r'(v,\pi))$ for $v\in V$ and $\pi\in \Pi(k)$,
      \item[--] $(r\sqcup r')(v,\pi)=\max_\leq(r(v,\pi),r'(v,\pi))$ for $v\in V$ and $\pi\in \Pi(k)$.
    \end{itemize}
    Recall that $r_{\min}$ is the ranking that assigns the minimal element in $U$ to every $v$ and $\pi$ and define $r_{\max}$ to be the ranking that assigns $\infty$ to every $v$ and $\pi$.  
    Then the set of Streett rankings, partially ordered by $\sqsubseteq$, forms a complete lattice with join $\sqcup$ and meet $\sqcap$
    and the according minimal and maximal elements $r_{\min}$ and $r_{\max}$, respectively.

    Recall the definition of $\mathsf{Lift}$ appearing above. % in Section~\ref{sec:ranking}.
    Then for every $v$ and $\pi$, $\mathsf{Lift}$ with respect to $v$ and $\pi$ is a monotone expansive operator, that is, we have $\mathsf{Lift}(r,v,\pi)\sqsubseteq\mathsf{Lift}(r',v,\pi)$ for any two
    Streett rankings $r,r'$ such that $r\sqsubseteq r'$ and $r\sqsubseteq \mathsf{Lift}(r,v,\pi)$. 
    Define the function $\mathsf{LIFT}$ on Streett-ranking by
    $\mathsf{LIFT}(r)=\textstyle\bigsqcup_{v,\pi} \mathsf{Lift}(r,v,\pi)$, which is also monotone
    and hence has a least fixpoint. 

    The least fixpoint of $\mathsf{LIFT}$ can be computed by arbitrary iteration of the different $\mathsf{Lift}(r,v,\pi)$ operators:
    Clearly, every fixpoint of $\mathsf{LIFT}$ is a fixpoint for $\mathsf{Lift}(\cdot,v,\pi)$ for every $v$ and $\pi$ and vice versa. 
    Furthermore, the height of the lattice is finite and every $\mathsf{Lift}$ is monotone and expansive.
    Hence, every arbitrary iteration of the different operators terminates.
    Let $r_{min}=r_0,\ldots$ denote such arbitrary iteration order with $r_\infty$ its limit 
    and let $r_\mu$ denote the least fixpoint of $\mathsf{LIFT}$.
    Clearly, $r_{min}\sqsubseteq r_\mu$. 
    By induction, if $r_i\sqsubseteq r_\mu$, then, for every $v$ and $\pi$ we have
    $\mathsf{Lift}(r_i,v,\pi) \sqsubseteq \mathsf{Lift}(r_\mu,v,\pi)=r_\mu$. 
    Hence, $r_\infty\sqsubseteq r_\mu$ and as both are fixpoints of $\mathsf{LIFT}$ they are equivalent. 
 
    Thus, Algorithm~\ref{alg:rank lifting streett} computes the least fixpoint of $\mathsf{LIFT}$. 
    In particular, for any good Streett ranking $r$ and every $v$ and $\pi$ we have $\mathsf{Lift}(r,v,\pi) \sqsubseteq r$. 
\end{proof}

This proves correctness. We now turn to efficient implementation of Algorithm~\ref{alg:rank lifting streett} and its complexity analysis. 

\begin{lemma}
    \label{lem:rank-lifting-complexity-streett}
    Algorithm~\ref{alg:rank lifting streett} runs in time $O(mk\log(k)k!|U(n,k)|)$ and in space $O(nk!\log{(n)}\log{(k)})$.
\end{lemma}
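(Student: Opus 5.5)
We will follow the standard worklist analysis of progress-measure lifting (as in Jurdzi\'nski's small progress measures and its adaptation by Piterman and Pnueli), where the cost is charged to individual lifts of pairs $(v,\pi)$. Implement Algorithm~\ref{alg:rank lifting streett} with a worklist of pairs $(v,\pi)$ that may be liftable. Initially every pair is in the worklist. When $r(v,\pi)$ strictly increases, we insert into the worklist all pairs $(v'',\pi'')$ with $v''\in E^{-1}(v)$ and $\incr_{v''}(\pi'')=\pi$; these are the only pairs whose $\mathsf{lift}$ value can change. Each value $r(v,\pi)$ only increases along the linearly ordered set $\leaf(U(n,k))\cup\{\infty\}$ (we may restrict ranks to leaves, since $\mathsf{update}$ returns minimal elements, which are leaves). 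Hence each pair $(v,\pi)$ is lifted at most $|U(n,k)|+1$ times.

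\textbf{Step 1: cost of one evaluation.} Evaluating $\mathsf{Lift}(r,v,\pi)$ requires, for each $v'\in E(v)$, three things. First, computing $\ind(v,\pi)$ and $\incr_v(\pi)$. We precompute for each vertex the sets of pairs it belongs to, and we represent permutations by their rank in $\Pi(k)$ so that $r$ is a table indexed by $(v,\pi)$. Computing $\incr$ and its table index takes $O(k\log k)$ time, e.g.\ via a Fenwick tree for permutation ranking. Second, computing the minimal index $l$ with $v\in G_{j_l}\cup R_{j_l}$, which takes $O(k)$. Third, computing $\mathsf{update}$, i.e.\ the least leaf $u$ with $u>_{v,\pi}r(v',\incr_v(\pi))$. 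The latter is ``the least leaf whose $(l-1)$-prefix is $\geq$ that of the given leaf'' (or strictly greater at level $l$). This is just the given leaf itself, or the successor of its level-$l$ ancestor in lexicographic order. We store with each node of $U(n,k)$ a pointer to the next leaf after its subtree, which makes this step $O(k)$ given the depth-$l$ ancestor, or $O(1)$ with ancestor tables of total size $O(k|U|)$. Since $\incr_v(\pi)$ does not depend on $v'$, we compute it once per evaluation. The evaluation therefore costs $O(k\log k + k|E(v)|)$, i.e.\ $O(k\log k\cdot |E(v)|)$.

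\textbf{Step 2: summation.} A pair $(v'',\pi'')$ is re-inserted only when some successor's rank, under the updated permutation, increases. For each $v$ and each of its $k!$ permutations, this happens at most $|U|+1$ times, and each time we charge $O(k\log k)$ to every edge into $v$ for every preimage permutation. The preimages of $\pi$ under $\incr_{v''}$ may be several, but summed over all $\pi$ they total exactly $k!$ per edge $(v'',v)$, since $\incr_{v''}$ is a function on $\Pi(k)$. Summing the evaluation costs over all edges $(v,v')$, all $k!$ permutations, and all $|U(n,k)|+1$ increases gives $O(m\,k\log(k)\,k!\,|U(n,k)|)$. Finding the preimages efficiently is the main obstacle. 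I would first try to avoid computing $\incr^{-1}$: whenever $(v,\pi)$ is lifted, we scan all pairs $(v'',\pi'')$ with $v''\in E^{-1}(v)$ that were registered as depending on $(v,\pi)$. These dependency lists are built once, during an initial pass computing $\incr_{v''}(\pi'')$ for each $(v'',\pi'')$ and each edge, at cost $O(mk!\,k\log k)$. The lists, however, cost $O(mk!)$ space, which conflicts with the claimed space bound. To meet the space bound, I would instead enumerate the preimages of $\pi$ under $\incr_{v''}$ directly. Fixing the index $l$ at which $\pi$ was updated, the preimages are determined by the inverse cyclic shift on the suffix. There are at most $k$ candidates per $(v'',\pi)$, each checkable in $O(k\log k)$ time. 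This costs an extra factor, which must be absorbed by charging only to genuine preimages; this is the delicate point.

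\textbf{Step 3: space.} We store $r$ as an array of $nk!$ entries, each a leaf index of $U(n,k)$. Since $|U(n,k)|$ is $n^{O(\log k)}$, each entry needs $O(\log n\log k)$ bits. This gives $O(nk!\log(n)\log(k))$ bits. We treat the tree $U(n,k)$ implicitly: leaves are represented as tuples encoding, and successor and prefix operations are computed on the fly by the recursive construction of $U(n,k)$. The worklist is kept as a bitmap over $V\times\Pi(k)$ together with a queue, within the same bound, which avoids storing $|U|$ explicitly.
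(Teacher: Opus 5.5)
Your skeleton matches the paper's. You bound each pair $(v,\pi)$ by $|U(n,k)|+1$ lifts. The dependents of $(v,\pi)$ are the pairs $(v'',\pi'')$ with $v''\in E^{-1}(v)$ and $\incr_{v''}(\pi'')=\pi$, and there are $k!$ of them per edge. Your observation that $\incr_{v''}$ is a function on $\Pi(k)$ is a cleaner substitute for the paper's cut argument. You also charge $O(k\log k)$ per comparison and store $nk!$ ranks of $O(\log n\log k)$ bits, as the paper does.

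\textbf{The time analysis in Step~2 fails.} Your worklist holds pairs that only \emph{may} be liftable, and every pop of $(v'',\pi'')$ rescans all of $E(v'')$ at cost $O(k\log k+k|E(v'')|)$.
\begin{itemize}
\item Such a pair is re-inserted whenever any successor $(v,\incr_{v''}(\pi''))$ increases, i.e.\ up to $|E(v'')|\,(|U(n,k)|+1)$ times.
\item Many of these evaluations leave $r(v'',\pi'')$ unchanged, for example when $v''\in V_0$ and a non-minimizing successor rises.
\item So the number of evaluations of a pair is not bounded by its own number of increases. Charging only $O(k\log k)$ per (edge, preimage, increase) ignores the full scan each re-insertion triggers.
\end{itemize}
What your scheme actually guarantees is $O(k\,k!\,|U(n,k)|\,(m\log k+\sum_v|E(v)|^2))$. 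This can exceed the claimed bound by a factor on the order of the maximum out-degree.

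The missing ingredient is the bookkeeping the paper uses, following Etessami--Wilke--Schuller. For each $(v,\pi)$, store the best successor value and, for $v\in V_0$, the number of successors attaining it. Then:
\begin{itemize}
\item a successor increase is handled by one comparison (against the maximum for $V_1$) or one counter decrement (for $V_0$);
\item only pairs that definitely need lifting enter the worklist;
\item $E(v)$ is rescanned only at an actual lift of $(v,\pi)$.
\end{itemize}
That last property is exactly what your per-increase summation presupposes.

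\textbf{Your preimage enumeration misreads $\incr$.} Since $\incr_l$ resets $j'_{l+1},\ldots,j'_k$ to increasing order, preimages at a fixed level $l$ are not obtained by inverting the cyclic shift. All $(k-l)!$ orderings of the suffix map to $\pi$, as the paper notes. Hence ``at most $k$ candidates per $(v'',\pi)$'' is false. Inserting one candidate per level would miss dependent pairs, so the loop could halt at a ranking that is not a fixpoint of $\mathsf{Lift}$ and hence not good.

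The correct enumeration also removes your ``delicate point''. For each level $l$, test three things:
\begin{itemize}
\item $\pi$ is sorted after position $l$;
\item $v''\notin G_{j'_1}\cup\cdots\cup G_{j'_{l-1}}$;
\item $v''\in G_{j_l}$, where $j_l$ is the cyclic predecessor of $j'_l$ in $\{j'_l,\ldots,j'_k\}$.
\end{itemize}
Also include the identity case, where $\pi$ is its own preimage iff $\ind(v'',\pi)\geq k$. This scan costs $O(k\log k)$ per (edge, $\pi$, increase), which fits the bound. At each successful level, enumerate all orderings of the suffix. Every enumerated ordering is a genuine preimage, so over all $\pi$ and all increases there are at most $k!\,(|U(n,k)|+1)$ of them per edge, by your function argument.
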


\begin{proof}
    It is standard that the number of possible applications of $\mathsf{Lift}$ can be at most the number of possible ranks per node and per permutation.
    Thus, there can be at most $|U(n,k)|$ applications of $\mathsf{Lift}$ per node and per permutation.
    In order to keep the search for the next vertex and permutation to lift efficient we need to keep additional information as follows \cite{DBLP:journals/siamcomp/EtessamiWS05}.
    For every pair $(v,\pi)$ we keep the rank of the ``best'' successors as well as the number of successors with this rank (the number is required only for nodes in $V_0$).
    We keep a list of pairs $(v,\pi)$ that need to be lifted.
    Initially, we add to this list all nodes and permutations $(v,\pi)$ such that $r(v,\pi)$ needs lifting as $v$ belongs to some $R_{j_l}$ for some relevant pair. 
    This can be analyzed in time proportional to $O(nk\log(k)k!)$ ($k\log(k)$ is required to explore the entire length of the permutation).
    We maintain in the list only nodes that definitely need lifting. 
    When lifting a pair $(v,\pi)$, we first explore all successors of $v$, compute the new rank $r(v,\pi)$, and update the number of successors it depends on. 
    We then explore all predecessors $^\backprime v$ of $v$ and permutations $^\backprime \pi$ such that $r(^\backprime v,^\backprime \pi)$ depends on $r(v,\pi)$ in order to remain the same (i.e., $\pi=\incr_{\backprime v}(^\backprime \pi)$), update their counts, and if they need a lift due to reliance on $(v,\pi)$ they are added to the list. 
    A complication is that given a predecessor $^\backprime v$ of $v$ the number of permutations $^\backprime \pi$ such that $r(^\backprime v,^\backprime \pi)$ depends on $r(v,\pi)$ may be very large.
    For every permutation $\psi$, the node $^\backprime v$ identifies a level $l$ that is the minimal value such that $^\backprime v\notin R_{j_{l'}}$ for any $l'<l$ and either $l=k+1$ or $^\backprime v\in G_{j_l}$. It follows that all pairs $(^\backprime v,\psi)$ with a permutation $\psi$ that agrees with $^\backprime \pi$ on the $l$-length prefix depend on the pair $(v,\pi)$.
    Indeed, for all such permutations $\incr_{\backprime v}(\psi)=\pi$. 
    Thus, when updating $r(v,\pi)$ we have to check $(^\backprime v,\psi)$ for all such permutations $\psi$. There are $(k-l)!$ such permutations, which suggests that the complexity of checking a single predecessor of a single node may lead to checking on $k!$ pairs.
    Contrarily, we show that the total number of such updates corresponding to the edge $(^\backprime v,v)$ in the game and \emph{all} permutations $\pi$ (namely, when incrementing $r(v,\pi)$) is bounded by $k!$.\footnote{We note that this issue was glanced over by Piterman and Pnueli \cite{DBLP:conf/lics/PitermanP06}, who just stated that the total amount of work per node is proportional to its out-degree and in-degree leading to overall $O(mk!)$ for all nodes and all permutations.}
    
    Consider a fixed node $^\backprime v$ and all possible permutations over $\{1,\ldots,k\}$ arranged in a tree.
    That is, the permutation $i_1,i_2,\ldots, i_k$ identifies the node $i_1,i_2,\ldots, i_k$ in the tree.
    The nodes in the tree that correspond to the significant levels $l$ with respect to $^\backprime v$ as defined above form a cut in this tree: there is a set $C$ of nodes in the tree such that each branch in the tree has exactly one node in $C$.
    A node in the cut in level $l$ in the tree identifies a dependency of $(k-l)!$ permutations on the same $\incr_{^{\backprime}v}$ permutation.
    Indeed, as $^{\backprime}v \in G_{i_l}$ every permutation that starts with $i_1,\ldots, i_l$ is changed by $\incr_{^{\backprime}v}$ to the same permutation $i_1,\ldots, i'_l,i'_{l+1},\ldots, i'_{k}$, where $i'_l$ is the least element in $i_{l+1},\ldots, i_{k}$ that is larger than $i_l$ or, if $i_l$ is the maximum of $i_{l},\ldots, i_k$, then $i'_l$ is the minimum of $i_{l+1},\ldots, i_k$, and $i'_{l+1},\ldots, i'_k$ are the rest of the elements in $i_l,\ldots, i_k$ organized in increasing order.
    However, the total number of such dependencies in the entire tree is still $k!$.
    Indeed, the total number of dependencies in a subtree rooted at $i_1,\ldots, i_l$ is bounded by $(k-l)!$.
    We show this by induction on the depth of the subtree. 
    For a subtree of depth $1$ or depth $0$, clearly the maximal number of dependencies is 1, as $\incr_{k}(\pi)=\incr_{k+1}(\pi)=\pi$ for every permutation $\pi$. 
    Consider a subtree of depth $l+1$ and a cut in it. 
    If the cut is at the root, then the cut contains exactly the root and corresponds to $(l+1)!$ permutations depending on the single permutation $\incr_{k-l-1}$ obtained from all $(l+1)!$ permutations in the subtree. 
    Otherwise, the cut defines a cut of each of the $l+1$ subtrees rooted at the children of the root of our subtree.
    By induction, the number of dependencies is bounded by $(l+1)l!=(l+1)!$.
    
    Given a node $v$ and a permutation $\pi$, a lift operation applied to $(v,\pi)$ explores all the outgoing edges of $v$ and all its incoming edges and the permutations that depend on $\pi$. 
    Overall, when iterating over all nodes and all permutations, the amount of work required is proportional to $O(mk!)$.
    Every comparison of the ranking takes time proportional to the length of (the representation of) the permutation and the rank, i.e., $O(k\log(k))$. 
    It follows that the total time complexity of the rank lifting algorithm is $O(mk\log(k)k!|U(n,k)|)$.

    The space requirement follows from the need to store $nk!$ ranking values.
    Each ranking value requires to store $k$ entries whose total length is $O(\log(n))$. 
    This can be done either by storing explicitly the separators by storing $\log(n)+k$ ternary values or by tagging every bit by which entry in the $k$-sequence it belongs to, requiring $\log(n)(1+\log(k))$ bits. 
    As extra information, each vertex and each permutation require the best rank of successors and their numbers. 
    Both fit in $O(\log(n)+k)$ or $O(\log(n)(1+\log(k))$.
    It follows that the total space requirement for the ranking is $O(nk!\min{\{\log(n)\log(k),\log(n)+k\}})$.
    In addition, the list of nodes and permutations that need lifting could be maintained by a dedicated bit per vertex and permutation, that is $O(nk!)$ bits. 
\end{proof}

\section{Emerson-Lei Games and Hierarchical Trees}
\label{sec:el}

As for Streett games, we analyze the structure of the winning regions in Emerson-Lei games and connect this structure to suitable hierarchical trees.
We start from a variant of the Zielonka-McNaughton algorithm that uses the Zielonka tree as a guide for solving Emerson-Lei games.

\subsubsection*{Zielonka-McNaughton Algorithm for Solving EL Games}
Recall that an EL game is $G=\pair{A,\alpha}$, where $A=\pair{V,V_0,V_1,E}$ is an arena, $\alpha=\pair{C,\gamma,\beta}$, and $C$ is a set of colors, $\gamma: V \rightarrow 2^C$, and $\beta \in \mathbb{B}(GF(\mathbb{B}(C)))$.
Let $Z(\alpha)=\pair{T,\tau}$ be the Zielonka tree for $\alpha$ and recall that $\tau$ is a function labeling nodes in $T$ with sets of colors
and that $d_\alpha$ is the maximal height of $Z(\alpha)$. Put $k=\lceil d_\alpha/2\rceil$.
%For a node $t\in T$ we denote by $\child(t)$ the set of children of $t$ and by $\grandchild(t)$ the set of grandchildren of $t$, 
%that is, $\grandchild(t)=\bigcup_{t'\in \child(t)}\child(t')$.
Recall that for a node $t\in T$, $\child(t)$ is the set of children of $t$ and $\grandchild(t)$ is the set of grandchildren of $t$. 
For presentation purposes, we assume, without loss of generality, that $C\models \alpha$. 
This can be achieved easily by adding a single new color $g$ to $C$ and changing $\beta$ to $\beta \vee GF \,g$, which replaces $|C|$ by $|C|+1$.
Algorithm~\ref{alg:Zielonka EL} solves EL games,
treating $C$, $\gamma$, $\beta$, and $Z(\alpha)$ as implicit.
An algorithm for the case of $C\not\models \alpha$ requires to reproduce the loop in lines 6-15 as a top level wrapper, which we avoid for space considerations. 
The algorithm works with a set of vertices $\subarena$ that constitute the sub-arena to work with and a winning node $t$ (such that $\tau(t)\models\alpha$) in the Zielonka tree. 
Initially, these parameters are the full set of vertices $V$ and the root of $Z(\alpha)$ which is labelled with $C$.

\begin{algorithm}[bt]
\caption{\label{alg:Zielonka EL}$\textsc{SolveEL}(\subarena,t)$}
\DontPrintSemicolon
\ForAll{$(t' \in \child(t))$}{
   $W_{{-}1}=\subarena$\;
   $\mathsf{Target} = \{v ~|~ \gamma(v)\cap (\tau(t) \setminus \tau(t')) \neq \emptyset \}$\; 
   $W_0=\attr^{\subarena}_0(\mathsf{Target})$ \tcp*[l]{Force visit to $\tau(t)\setminus \tau(t')$}
   $\mathsf{Residue} \gets \subarena \setminus W_0$\;
   \For{$(j\gets 0  ~;~ W_{2j-1}\neq \emptyset ~;~ j{+}{+})$}{
        $W_{2j+1}\gets \emptyset$\;
       \ForAll{$(t'' \in\child(t'))$}{
            $\mathsf{Avoid} = \{ v ~|~ \gamma(v) \cap (\tau(t') \setminus \tau(t'')) \neq \emptyset\}$\;       
            \tcp*[l]{Avoid visits to $\tau(t')\setminus \tau(t'')$}
            $\mathsf{Rem\_Avoid}\gets \mathsf{Residue} \setminus \attr^{\mathsf{Residue}}_1(\mathsf{Avoid})$ \;
            $W_{2j+1} \gets \mbox{\sc SolveEL}(\mathsf{Rem\_Avoid},t'')$\;
            \If{$(W_{2j+1}\neq\emptyset)$}{
            \Break\;
            }
       }
       $W_{2j+2} \gets \attr^{\mathsf{Residue}}_0(W_{2j+1})$\;
       $\mathsf{Residue} \gets \mathsf{Residue} \setminus W_{2j+2}$\;
   }
   \If{$(\mathsf{Residue}\neq \emptyset)$}{
       \Return{$\mbox{\sc SolveEL}(\subarena\setminus \attr^{\subarena}_1(\mathsf{Residue}),t)$}
    }
}
\Return{$\subarena$}
\end{algorithm}

 Intuitively, a call of the algorithm with parameters $\subarena$ and $t$ computes, for each $t'\in \child(t)$,
 regions in $\subarena$ where player~$0$ can eventually force a visit to a (positive) color from $\tau(t)\setminus \tau(t')$ (line 5),
 or there is some $t''\in \child(t')$ such that player~$0$ can avoid visiting (negative) colors
 from $\tau(t')\setminus \tau(t'')$ (line 10). All colors in regions for the latter case are contained in
 $\tau(t'')$; these regions are solved recursively according to $t''$ (line 11).
 
\begin{theorem}[{\rm\hspace{-0.1pt}\cite{DBLP:journals/apal/McNaughton93,DBLP:journals/tcs/Zielonka98}}]
 Algorithm~\ref{alg:Zielonka EL} solves Emerson-Lei games.
\end{theorem}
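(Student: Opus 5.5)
The plan is to prove the statement by well-founded induction on the pair (height of $t$ in $Z(\alpha)$, $|\subarena|$), ordered lexicographically. The invariant is: \textsc{SolveEL}$(\subarena,t)$, called on a sub-arena whose vertices carry only colors from $\tau(t)$ (with $\tau(t)\models\alpha$), returns exactly $\Win_0$ of the game restricted to $\subarena$.

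\textbf{Well-definedness.} First I would check, as in the proof of Lemma~\ref{lem:partition_properties}, that every recursive call receives a genuine sub-arena. Removing a player-0 attractor or a player-1 attractor from a sub-arena leaves a trap for the corresponding player, so totality is preserved. I would also check that the color invariant holds. $\mathsf{Rem\_Avoid}$ contains no vertex with a color in $\tau(t')\setminus\tau(t'')$. It also contains no vertex with a color in $\tau(t)\setminus\tau(t')$, because those vertices lie in $W_0$ and are excluded from $\mathsf{Residue}$. Hence all colors there lie in $\tau(t'')$, and $t''$ is a winning node, being a grandchild of the winning node $t$. Termination follows because the tail call in line~16 strictly shrinks $\subarena$ (since $\mathsf{Residue}\neq\emptyset$), and the inner calls descend two levels in $Z(\alpha)$.

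\textbf{Player~1 part (line~16).} Suppose for some $t'$ the set $\mathsf{Residue}$ is nonempty. $\mathsf{Residue}$ is a player-0 trap inside $\subarena$ and contains no color of $\tau(t)\setminus\tau(t')$, so every play staying in it has $\inft$-colors inside $\tau(t')$, a losing set. The loop stopped because every $t''\in\child(t')$ returned $\emptyset$ on the final residue. By induction, player~1 therefore wins all of $\mathsf{Rem\_Avoid}$ for each $t''$. I would combine these into a player-1 strategy on $\mathsf{Residue}$ that cycles through the children $t''_1,\dots,t''_m$. While focused on $t''_i$, player~1 plays the attractor strategy to $\mathsf{Avoid}_i$ when outside $\mathsf{Rem\_Avoid}_i$, and otherwise plays the winning strategy from the subgame. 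On reaching a vertex of $\mathsf{Avoid}_i$ he switches focus to $i{+}1$. If the focus eventually stabilizes on some $i$, the play stays in a subgame won by player~1. Otherwise the play visits, for every $i$, colors outside $\tau(t''_i)$ infinitely often. Its infinity set $D\subseteq\tau(t')$ is then not contained in any maximal winning subset $\tau(t''_i)$, so by the definition of $\max_0$, $D\not\models\alpha$. Finally, $\attr_1(\mathsf{Residue})$ is won by player~1, and its complement is a player-1 trap, so $\Win_0(\subarena)=\Win_0(\subarena\setminus\attr_1(\mathsf{Residue}))$, which the induction hypothesis (smaller arena) computes.

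\textbf{Player~0 part (line~18).} If the procedure returns $\subarena$, then for each child $t'$ the layers $W_0,W_2,\dots$ partition $\subarena$, and by induction each $W_{2j+1}$ is won by player~0 inside its sub-arena. Player~0 uses memory ranging over $\child(t)$. With focus $t'$, she plays the attractor strategy in attractor layers and the recursive winning strategy inside $W_{2j+1}$. She moves focus to the next child once a color of $\tau(t)\setminus\tau(t')$ is seen. Player~1 can only move to lower-indexed layers, because each layer is a trap relative to the residue. So if the focus stabilizes on $t'$, the play eventually stays in some $W_{2j+1}$ and is won by induction. Otherwise, for every child $t'$, colors outside $\tau(t')$ recur, so $\inft$ is contained in no maximal losing subset of $\tau(t)$ and hence satisfies $\alpha$ by the definition of $\max_1$.

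\textbf{Main obstacle.} I expect the hardest step to be the cycling-strategy arguments. Their limit cases must be matched precisely against $\max_0$/$\max_1$: a set not contained in any child label has the same status as the parent. Care is also needed that the induction hypothesis applies to subgames whose colors lie within $\tau(t'')$, even when $t''$ sits at a different depth. I would state that invariant explicitly to keep this step sound.
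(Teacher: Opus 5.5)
Your proof is correct. The paper gives no argument of its own for this statement: it imports it from McNaughton and Zielonka and then uses it as a black box. For instance, the completeness proof for EL rankings starts from the fact that the final sub-arena is exactly $\Win_0$.

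What you give is the classical correctness proof behind those citations. It combines induction on the height of $t$ and on $|\subarena|$, trap/attractor bookkeeping, and focus-cycling strategies for both players. Their limit behaviour is matched against $\max_1$ at the winning node $t$ and against $\max_0$ at the losing node $t'$.

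The only in-house material that could replace part of your argument is the ranking machinery. A run returning through line~18 yields a good EL ranking on the returned set, which player~1 cannot leave. This goes via Lemma~\ref{lemma:partition}, Lemma~\ref{lem:el signature properties} and the construction in the proof of Lemma~\ref{lem:completeness el ranking}. Lemma~\ref{lem:soundness el ranking} then gives player~0 a winning strategy there with at most $m^+_\alpha$ memory. That route recovers only your player-0 half. Nothing in the paper replaces your player-1 argument that discarding $\attr^{\subarena}_1(\mathsf{Residue})$ loses no player-0 vertices, so your direct proof is the self-contained one.

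Some details should be made explicit when you write this out.
\begin{itemize}
\item Sub-strategies may need memory. Restart them at each (re-)entry into $W_{2j+1}$, resp.\ $\mathsf{Rem\_Avoid}$, and use prefix-independence of $\alpha$: once the focus has stabilized, only the last entry matters.
\item In the player-0 case, player~1 can also stay in the current layer, not only move down. The stabilization step should read as follows. Under a fixed focus the layer index never increases. Staying in $W_0$ forces a visit to $\mathsf{Target}$ and hence a focus change. Inside $W_{2j+2}$ the attractor rank strictly decreases until $W_{2j+1}$ is reached.
\item Treat the degenerate cases separately. For a leaf $t$, every subset of $\tau(t)$ is winning. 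For a childless $t'$, player~1 simply stays in $\mathsf{Residue}$, since every subset of $\tau(t')$ is losing.
\item Your $\max_0$/$\max_1$ step needs every $C'\subseteq C$ to satisfy either $C'\models\alpha$ or $C'\not\models\alpha$. That is, winning must depend only on the set of colors seen infinitely often. This holds for plain EL conditions. With Boolean combinations nested under $GF$, first introduce a fresh color for each such subformula.
\end{itemize}
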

When Algorithm~\ref{alg:Zielonka EL} terminates it returns through line~18. It then outputs the winning region of player $0$ in the input Emerson-Lei game, which is also the sub-arena last analyzed. 
We restrict attention to cases where the Algorithm returns through line~18.
Just like for Streett games, a call of Algorithm~\ref{alg:Zielonka EL} with sub-arena $\subarena$ and  $t\in T$ as parameters establishes,
for each $t'\in \child(t)$ (where $\tau(t')\not\models \alpha$), the partition $W_0,W_2,\ldots, W_{2m}$ of $\subarena$ according
to the sub-objective encoded by $t$, where the attractor $W_{2j}$ has the additional partition $W_{2j}^1,\ldots, W_{2j}^{p}$. 
Notice that for each $j>0$ there is an associated node $t''\in \child(t')$ (where $\tau(t'')\models \alpha$) and that $W_{2j}^0=W_{2j-1}$.
The set $W_{2j-1}$ does not contain vertices labeled with colors from $\tau(t)\setminus \tau(t'')$.
The partition for $t$ and $t'$ has the following properties.

\begin{lemma}\label{lemma:partition}
When Algorithm~\ref{alg:Zielonka EL} returns from line~18 for sub-arena $\subarena$ and $t\in Z_\alpha$, then for every child $t'$ of $t$ the partition $W_0,\ldots, W_{2m}$ satisfies:
\begin{itemize}
\item[--] 
    for every $v\in W^0_0\cap V_0$ we have $\gamma(v) \cap (\tau(t)\setminus \tau(t'))\neq \emptyset$ and there exists $v'\in E(v)$ such that $v'\in \bigcup_{j\geq 0} W_{2j}$;
\item[--]
    for every $v\in W^0_0\cap V_1$ we have $\gamma(v) \cap (\tau(t)\setminus \tau(t'))\neq \emptyset$ and for every $v'\in E(v)\cap \subarena$ we have $v'\in \bigcup_{j\geq 0} W_{2j}$;
\item[--]
    for $p>0$ and every $v\in W_{2j}^p\cap V_0$ there is $v'\in E(v)$ such that $v'\in W_{2j}^{p-1}$;
\item[--] 
    for $p>0$ and every $v\in W_{2j}^p\cap V_1$ we have that every $v'\in E(v)\cap \subarena$ satisfies $v' \in \left (\bigcup_{j'<2j} W_{j'}\right ) \cup \left (\bigcup_{p'<p} W_{2j}^{p'}\right )$;
\item[--]
    for every $v\in W_{2j+1}\cap V_0=W_{2j+2}^0\cap V_0$ there exists $v'\in E(v)$ such that $v' \in  W_{2j+1}$;
\item[--] 
    for every $v\in W_{2j+1}\cap V_1=W_{2j+2}^0\cap V_1$ and all $v'\in E(v)\cap \subarena$ satisfies $v'\in  \bigcup_{j'\leq 2j+1} W_{j'}$;
\item[--]
    $\sum_{j} |W_{2j}|=|\subarena|$.
\end{itemize}
\end{lemma}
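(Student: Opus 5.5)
The plan is to follow the proof of Lemma~\ref{lem:partition_properties} almost verbatim, replacing $G_i$ by $\mathsf{Target}$ and the recursive Streett call by the recursive call on $\mathsf{Rem\_Avoid}$ with node $t''$. The first step is to show by induction on the recursion that every call of Algorithm~\ref{alg:Zielonka EL} is made on a genuine sub-arena, i.e., every vertex of $\subarena$ has a successor in $\subarena$. This holds for $V$. $\mathsf{Residue}$ is always obtained by removing a player-0 attractor from a sub-arena, so player-0 vertices keep all their edges in it and player-1 vertices keep at least one. $\mathsf{Rem\_Avoid}$ and the argument of the call in line~17 are obtained by removing player-1 attractors, which is the dual situation.

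Next I use the assumption that the call returns through line~18. For every child $t'$ the inner loop ends with $\mathsf{Residue}=\emptyset$, since otherwise line~17 would fire. Hence $\subarena=\bigcup_j W_{2j}$. The sets are pairwise disjoint because each $W_{2j+2}$ is computed inside the residue that already excludes $W_0,\ldots,W_{2j}$. Since $W_{2j+1}\subseteq W_{2j+2}$, this gives $\sum_j|W_{2j}|=|\subarena|$.

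The items are then read off the attractor layering, as in the Streett case.
\begin{itemize}
\item[--] $W_0^0=\mathsf{Target}\cap\subarena$, so every $v\in W_0^0$ satisfies $\gamma(v)\cap(\tau(t)\setminus\tau(t'))\neq\emptyset$. A player-0 vertex there has a successor in the sub-arena $\subarena=\bigcup_jW_{2j}$, and every in-arena successor of a player-1 vertex lies in that union.
\item[--] For $p>0$, the definition $W_{2j}^{p}=\cpre_i^{\mathsf{Residue}}(\bigcup_{q<p}W_{2j}^q)\setminus\bigcup_{q<p}W_{2j}^q$ gives the player-0 clause directly (the minimal layer containing a successor is $p-1$). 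For player-1 vertices, every successor inside the current $\mathsf{Residue}$ lies in an earlier layer. Successors in $\subarena\setminus\mathsf{Residue}$ are exactly the previously removed sets $W_{j'}$ with $j'<2j$.
\item[--] For $W_{2j+1}$, the recursive call returns a sub-arena of $\mathsf{Rem\_Avoid}$ (by the sub-arena invariant applied to the recursive call and its returned last analyzed arena). So player-0 vertices have a successor in $W_{2j+1}$. Successors of player-1 vertices inside $\subarena$ either lie in $\mathsf{Residue}\cap W_{2j+1}$, or in the already removed earlier sets.
\end{itemize}

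I expect the main obstacle to be the last player-1 clause for $W_{2j+1}$. A player-1 vertex of $W_{2j+1}$ might have a successor in $\mathsf{Residue}\setminus W_{2j+1}$, for instance inside $\attr_1^{\mathsf{Residue}}(\mathsf{Avoid})$ or in the part of $\mathsf{Rem\_Avoid}$ discarded by the inner recursion. I would handle this as follows. Vertices of $\mathsf{Rem\_Avoid}$ have no player-1 edge into the player-1 attractor, since they were not attracted. The returned set $W_{2j+1}$ is the complement of player-1 attractors iteratively removed within $\mathsf{Rem\_Avoid}$, so its player-1 vertices have all their $\mathsf{Rem\_Avoid}$-successors in it. Consequently every in-$\subarena$ successor lies in $W_{2j+1}$ or in $\bigcup_{j'\le 2j}W_{j'}$, as claimed.
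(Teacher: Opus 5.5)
Your proposal is correct and follows the paper's proof: the sub-arena invariant, then $\mathsf{Residue}=\emptyset$ yielding the partition, then reading each item off the attractor layering. Your explicit treatment of the player-1 clause for $W_{2j+1}$ makes rigorous what the paper only asserts. You argue that such vertices are not attracted into $\attr^{\mathsf{Residue}}_1(\mathsf{Avoid})$ and survive the iterated player-1 attractor removals inside the recursive call. The paper instead writes that no edges from $W_{2j+1}\cap V_1$ go to $\mathsf{Residue}$, which should read $\mathsf{Residue}\setminus W_{2j+1}$. Your minimal-layer argument for the $W_{2j}^{p-1}$ clause is likewise a genuine sharpening.
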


\begin{proof}
    As in the case of Streett games, if the first call of the algorithm starts with $V$ then every call of the algorithm works on a sub-arena. 
    This clearly holds initially for $V$.
    The set $\mathsf{Residue}$ computed in lines~5 and 15 is the result of removing an attractor of player~0.
    It follows that all $V_0$ nodes in $\mathsf{Residue}$ have all their outgoing edges in $\mathsf{Residue}$ and all $V_1$ nodes in $\mathsf{Residue}$ have at least one edge in $\mathsf{Residue}$.
    Similarly, the recursive calls in lines~11 and 17 are applied to a sub-arena minus an attractor for player~1.
    Hence, all recursive calls are applied to a sub-arena.

    As, by assumption, the call to the algorithm returns through line~18, we have $\mathsf{Residue}=\emptyset$ and hence $\subarena=\bigcup_{j\geq 0}W_{2j}$.

    \begin{itemize}
    \item[--] In line~4 we set $W_0^0$ as $\mathsf{Target}\cap \subarena$, where $\mathsf{Target}$ is exactly the set of nodes that have $\gamma(v)\cap (\tau(t)\setminus \tau(t'))\neq \emptyset$. 
    As $\subarena$ is a sub-arena, every $v\in W_0^0\cap V_0$ has some successor in the sub-arena. 
    \item[--] As $\subarena$ is partitioned to $W_0,\ldots,W_{2m}$, this follows from restricting attention to $E(v)\cap \subarena$.
    \item[--] As $W_{2j}^p$ is a layer of an attractor, every $v\in W_{2j}^p\cap V_0$ for $p>0$ has some successor in $W_{2j}^{p-1}$.
    \item[--] As $W_{2j}$ is computed inside the sub-arena, no edges from $W_{2j}^p\cap V_1$ for $p>0$ can go to the $\mathsf{Residue}$. It follows that all edges inside the sub-arena must go to $W_{j'}$ for $j'<2j$ or for $W_{2j}^{p'}$ for $p'<p$. 
    \item[--] In line~11, $W_{2j+1}$ is computed in a sub-arena and, recursively, a similar partition of it is built. It follows that for every $v\in W_{2j+1}\cap V_0$ there exists some successor inside the same sub-arena.
    \item[--]
    Similarly, no edges from $W_{2j+1}\cap V_1$ can go to $\mathsf{Residue}$. It follows that all edges inside the sub-arena must go to $W_{j'}$ for $j'\leq {2j+1}$.
    \item[--] 
    We established that $\subarena = \bigcup_{j}W_{2j}$. 
    For all $j<j'$ we have $W_{2j}\cap W_{2j'}=\emptyset$.
    Indeed, $W_{2j'}$ is computed over a residue that does not contain $W_{2j}$. 
    Hence, $\sum_{j}|W_{2j}|=|\subarena|$.
    \end{itemize}
\end{proof}

Notice that when Algorithm~\ref{alg:Zielonka EL} is called with a leaf $t$, it immediately returns through line 18 and returns the entire sub-arena. In this case, the lemma holds vacuously. 

\subsubsection*{Signatures for EL Games}
Based on the partition of the winning region established by Algorithm~\ref{alg:Zielonka EL} we analyze the winning region in terms of the universal tree.

We use the ordering between the children of every node in the Zielonka tree $Z(\alpha)=\pair{T,\tau}$
and the induced lexicographic ordering between any two nodes in the tree, denoted by $<$.
In particular, every non-leaf in $T$ has a minimal child.
Recall that $m^+_\alpha$ and $m^-_\alpha$ are the amounts of memory required to win by the respective players. 

The \emph{signature} of a winning region in an EL game (relative to the sub-objective encoded by $t$, where $\tau(t)\models\alpha$) is a collection of $|\child(t)|$ different mappings 
into $U(n,k)\times (\grandchild(t)\cup \{\bot\})$.
The mapping for $t'\in \child(t)$ maps individual game vertices into leaves of $U(n,k)$ marked with $\bot$, the latter signaling that these nodes do not depend on further recursive winning regions. 
It maps recursively-computed winning regions to smaller universal trees, which are subsets of $U(n,k)$, and to a unique element of $\grandchild(t)$.
Recursively, each of these winning regions is associated with its own signature, corresponding to at most $m^+_\alpha$ mappings into the smaller universal trees and at most $m^-_\alpha$ values. 
This intuitive explanation is made formal in the rest of this section. 

Again we write $v \prec v'$ if $v\in W_{2j}^p$, $v'\in W_{2j'}^{q}$, and $j<j'$ or $j=j'$ and $p<q$.

\begin{lemmarep}
    When Algorithm~\ref{alg:Zielonka EL} returns from line 18 for sub-arena $\subarena$ and node $t$ and given a $(|\subarena|,k-|t|/2)$-hierarchical tree $U$, then for each $t'\in \child(t)$ there are mappings $s^1_{t'}:\Win_0 \rightarrow 2^{U}$ and $s^2_{t'}:\Win_0 \rightarrow (\child(t') \cup \{\bot\})$ s.t.:
    \begin{enumerate}
        \item there is a leaf $u_0^0\in U$ such that for all $v\in W_0^0$ we have $s^1_{t'}(v)=\{u_0^0\}$.
        \item for each $j$ and $p>0$ there is a leaf $u_{2j}^p\in U$ such that for all $v \in W_{2j}^p$ we have $s^1_{t'}(v)=\{u_{2j}^p\}$.
        \item for each $j>0$ there is a subtree $U_{2j}^0\subseteq U$ that is $(w_{2j}^0,k-|t|/2-1)$-hierarchical and for all $v \in W_{2j}^0$ we have $s^1_{t'}(v)=U_{2j}^0$. 
        \item for all $v,v'$ we have $v \prec v'$ implies $s^1_{t'}(v)<_1 s^1_{t'}(v')$.
        \item for each $j$ and every $v_1,v_2\in W_{2j}$ we have 
        $s^2_{t'}(v_1)=s^2_{t'}(v_2)$. Furthermore, $s^2_{t'}(v_1)=\bot$ iff $v_1\in W_{0}$ or if $t'$ has no children.
    \end{enumerate}
    \label{lem:el signature properties}
\end{lemmarep}

Notice that from 4 it follows that the leaves and the trees used above are all disjoint.
When $t$ does not have children, the algorithm returns the whole $\subarena$ and the lemma holds vacuously.

\begin{proof}
    We use $w_{2j+1}$ and $w_{2j}^p$ to denote $|W_{2j+1}|$ and $|W_{2j}^p|$, respectively.

    Consider a child $t'$ of $t$. 
    We build the signatures $s^1_{t'}$ and $s^2_{t'}$. 
    By construction $\sum_{j\geq 0}\sum_{p\geq 0}w_{2j}^p = |\subarena|$.
    It follows that these different values form a partition of $|\subarena|$.
    As $U$ is an $(|\subarena|,k-|t|/2)$-hierarchical tree, for every $j$ and $p$ we can associate a subtree $U_{2j}^p$ such that $U_{2j}^p$ is $(w_{2j}^p,\lceil(d_\alpha-|t|-2)/2)\rceil)$-hierarchical and if $j<j'$ or $j=j'$ and $p<p'$ we have $U_{2j}^p<U_{2j'}^{p'}$.
    
    If $j=0$ and $p=0$, in which case $W^0_0=\{v ~|~ \gamma(v) \cap (\tau(t)\setminus \tau(t'))\neq \emptyset\}$, then we put $s^1_{t'}(v)=\{u_0^0\}$ for all $v\in W^0_0$, where $u_0^0$ is the the minimal leaf in $U_0^0$. We further put $s^2_{t'}(v)=\bot$. 
    For $v\in W_0^p$, we put $s^1_{t'}(v)=\{u_0^p\}$, where $u_0^p$ is the minimal leaf in $U_0^p$.
    We further put $s^2_{t'}(v)=\bot$.
    Notice that in case that $t'$ has no children, then $W_0=\subarena$. It then follows that for every $v\in\subarena$ we set $s^2_{t'}(v)=\bot$. 
    Consider the case that $j>0$.
    If $p=0$, in which case $W_{2j}^0=W_{2j-1}$, then there is a node $t'' \in \child(t')$ such that $W_{2j-1}$ is computed by the recursive call to {\sc SolveEL} with $t''$ as the node from $\child(t')$ (chosen in line~7).
    In this case, we put $s^1_{t'}(v)=U_{2j}^0$ and $s^2_{t'}=t''$ for all $v\in W_{2j}^0$.
    If $p>0$, then we choose the minimal leaf $u_{2j}^p$ in $U_{2j}^p$ and put $s^1_{t'}(v)=\{u_{2j}^p\}$ and $s^2_{t'}(v)=t''$ for all $v\in W_{2j}^p$.

    We establish the items in the lemma. We note that $s^1_{t'}$ associates a different part of $U(n,k)$ with every different $W_{2j}^p$. In particular, for any two $v_1, v_2$ such that $v_1 \prec v_2$ we have $s^1_{t'}(v)\cap s^1_{t'}(v)=\emptyset$.
    \begin{enumerate}
        \item This is established by choosing for $s^1_{t'}$ a single leaf for the set $W^0_0$.
        \item This is established by choosing a single leaf for such a set $W^p_{2j}$ for $s^1_{t'}$. 
        \item This is established by the association of the entire sub-tree $U^0_{2j}$ with a set of the form $W_{2j}^0$ for $s^1_{t'}$ according to which $W_{2j-1}$ is computed recursively.
        \item $U_{2j}^p<U_{2j'}^{p'}$ for $j<j'$ or $j=j'$ and $p<p'$. 
        \item This is established by putting $s^2_{t'}=\bot$ for all $v\in W_0$ (which is also the case if $t'$ is childless) and by setting $s^2_{t'}$ to the relevant child $t''$ for all $v\in W_{2j}$ for $j>0$. 
    \end{enumerate}
\end{proof}
\section{\hspace{-3pt}Solving Emerson-Lei Games with EL-Rankings}
\label{sec:elranking}

We generalize the definition of Streett rankings from Section~\ref{sec:ranking} 
to Emerson-Lei rankings and establish that the mappings constructed in the previous section indeed constitute an Emerson-Lei ranking.
As before, a direct rank lifting algorithm solves Emerson-Lei games in time proportional to the size of the ranking.
Strategies arising from our ranking use at most the memory that is required for
the winning condition.

We start by detailing the role that Zielonka trees play in Emerson-Lei rankings.
Fix an Emerson-Lei condition $\alpha$.
A leaf $l=d_1\cdots d_o$ in the Zielonka tree $Z(\alpha)$ identifies a \emph{branch}
$t_0, t_1, \ldots, t_o$, where $t_0=\epsilon$, 
for every $i\geq 1$ we have $t_i=d_1\cdots d_{i}$, $t_{i}\in\child(t_{i-1})$ and $\tau(t_{i-1})\supsetneq \tau(t_{i})$.
Recall that the number of leaves (and thus also the number of branches) of a Zielonka tree for an objective with set $C$ of colors is bounded by $|C|!$.

In the signatures of an EL game in Section~\ref{sec:el}, the odd directions in a branch tell us which signature $(s_1,s_2)$ to look at and the even directions are the values given by the mappings $s_2$. 
When we compose these signatures together to a ranking, the number of mappings that should be considered per vertex in the game (replacing permutations in the Streett case) corresponds to the possible number of combinations of odd directions in branches.
The values that each mapping stores augment a leaf of the universal tree with even directions.
The number of mappings corresponds to the memory required to win the game as established by the Zielonka tree \cite{DBLP:conf/lics/DziembowskiJW97} (see Corrolary~\ref{cor:strategy memory EL}).

Given a branch $t_0,t_1,\ldots, t_j$ in the Zielonka tree,
we have by assumption that $\tau(t_{2i})\models \alpha$ while $\tau(t_{2i+1})\not\models \alpha$.
We use both odd and even directions in branches to define notations that
are similar to those in the ranking for Streett games.
We separate the leaf $t_j=d_1\cdots d_j$ into its odd directions 
$o_1, o_2, \ldots o_{\lceil j/2\rceil }$, where $o_i=d_{2i-1}$, and its even directions
$e_1,\ldots, e_{\lfloor j/2\rfloor}$, where $e_i=d_{2i}$.
Clearly, $o_1 \in \child(\epsilon)$, $o_1e_1\cdots e_i\in \child(o_1e_1\cdots o_i)$ and $o_1e_1\cdots o_{i+1}\in \child(o_1e_1\cdots e_i)$, $\tau(o_1e_1\cdots e_i)\models \alpha$, and $\tau(o_1e_1\cdots o_i)\not\models \alpha$. 
Note that $\lceil j/2\rceil -1 \leq \lfloor j/2 \rfloor \leq \lceil j/2\rceil$.
Let $\Pi_\alpha$ be the set of all possible subsequences of odd directions and 
let $\Psi_\alpha$ be the set of all possible subsequences of even directions.

\subsubsection*{Memory update for EL rankings}
As for Streett games, events in the game change the memory. 
For EL games, we use $\Pi_\alpha$ as the set of memory values and $\Psi_\alpha$ as 
part of the values that vertices in the game are mapped to.
The update of a memory value $\pi\in\Pi_\alpha$ increases (cyclically) one direction in $\pi$
and resets all subsequent directions to the minimal value 1.
Together with a relevant even sequence, a single memory value determines a branch/leaf in the Zielonka tree.

Let $\pi=o_1,\ldots, o_d$ and $\psi=e_1,\ldots, e_{d'}$ be the odd and even directions, respectively, in a branch of the Zielonka tree.
Their combination defines the current relative importance of visits to parts of $C$:
A (positive) visit to $\tau(o_1e_1\cdots e_i)\setminus \tau(o_1e_1\cdots o_{i+1})$ is more important than a (negative) visit to $\tau(o_1e_1\cdots o_{i+1})\setminus \tau(o_1e_1\cdots e_{i+1})$ and a (negative) visit to $\tau(o_1e_1\cdots o_{i})\setminus \tau(o_1e_1\cdots e_i)$ is more important than a (positive) visit to $\tau(o_1e_1\cdots e_i)\setminus \tau(o_1e_1\cdots o_{i+1})$.
Concentrating on positive goals, there are $\lceil c/2 \rceil$ levels of goals to pursue. 
The $i$-th level goal is to visit $\tau(o_1e_1\cdots e_{i-1})\setminus \tau(o_1e_1\cdots o_{i})$.

Winning according to a node in the Zielonka tree at an even level is achieved by avoiding any colors not contained in the label of the node, while for each of the node's children, also visiting colors not contained in their label.
With given odd and even sequences, this translates to the $l$-th level goal that is encoded by $o_{l}$ directing at the child $o_1e_1\cdots o_{l}$ of $o_1e_1\cdots e_{l-1}$ and identifying $\tau(o_1e_1\cdots e_{l-1})\setminus \tau(o_1e_1\cdots o_{l})$ as a set of colors to pursue.
Formally, this means that while pursuing the $l$-th level goal, no visits to $C\setminus \tau(o_1e_1\cdots e_{l-1})$ are allowed to occur.
Once a visit to the $l$-th level goal occurs, we change our focus to another child of $o_1e_1\cdots e_{l-1}$, making it our new $l$-th level goal.
That is, we aim to show that the game is not stuck in colors appearing below that child.
Notice that for the current memory value, these colors are less important than level $l$.
When we move to a new child, we organize the rest of the colors according to the minimal descendant
of this new child. 

To this end, we define a function $\incr$ that takes an index $l$ as parameter and updates memory values at level $l$.
Formally, let $\incr_l(\pi,\psi)$ be the sequence $o_1,\ldots, o_{l-1},o'_l,1_{l+1},\ldots,1_{d'}$, where $o'_l$ is defined to be $o_l+1$ if $o_l<|\child(o_1e_1\cdots e_{l-1})|$, and $1$ otherwise. 
Furthermore, the sequence is padded with $1$s by adding indices $1_{l+1},...,1_{d'}$, where $d'$ is the length of the longest 
leaf below $o_1e_1\cdots e_{l-1}o'_l$.
It can be the case that $\pi =\incr_l(\pi,\psi)$ (only if $o_1e_1\cdots e_{l-1}$ has just a single child). We also put $\incr_{d''}(\pi,\psi)=\pi$
for $d''>d$. 
Notice that many different even sequences can complete one odd sequence to a valid leaf in the Zielonka tree.

Next, we define the level at which a memory update occurs, which depends on the colors labeling visited vertices in the game.
Consider a vertex $v\in V$ and two sequences $\pi=o_1,\ldots, o_d$ and $\psi=e_1,\ldots, e_{d'}$. 
If there is an index $l$ such that $\gamma(v)\subseteq \tau(o_1e_1\cdots e_{l-1})$ but $\gamma(v)\not\subseteq \tau(o_1e_1\cdots o_{l})$, then we put $\ind(v,\pi,\psi)=l$.
Otherwise, either 
\begin{inparaenum}
    \item[(a)] there is an index $l$ such that $\gamma(v)\subseteq \tau(o_1e_1\cdots o_l)$ and $\gamma(v)\not\subseteq \tau(o_1e_1\cdots e_l)$,
    \item[(b)] $d'<d$ and $\gamma(v)\subseteq \tau(o_1e_1\cdots o_d)$, or
    \item[(c)] $\gamma(v)\subseteq \tau(o_1e_1\cdots e_{d'})$.
\end{inparaenum}
In all these cases we put $\ind(v,\pi,\psi)=d+1$. 

Then $\incr_{\ind(v,\pi,\psi)}(\pi,\psi)$ either updates $\pi$ at level $\ind(v,\pi,\psi)$, where $\gamma(v)\cap (\tau(o_1e_1\cdots e_{l-1})\setminus \tau(o_1e_1\cdots o_l))\neq \emptyset$, or leaves $\pi$ unchanged. 
Notice that, as for Streett games, in the case that $\gamma(v)\cap (\tau(o_1e_1\cdots e_{l-1})\setminus \tau(o_1e_1\cdots o_l))\neq \emptyset$ and $o_1e_1\cdots e_{l-1}$ has only the child $o_l$ we still say that $\pi$ is updated. 
For short, we write $\incr_v(\pi,\psi)$ for $\incr_{\ind(v,\pi,\psi)}(\pi,\psi)$.

\subsubsection*{Progress in EL games}
We measure progress towards goals (determined according to a current branch of the Zielonka tree)
using pairs from $U\times \Psi_\alpha$, where $U$ is a $k$-tree and values from $\Psi_\alpha$ are sequences of even branching directions.
The $l$-th level in $U$ and the $l$-prefix of $\Psi_\alpha$ are relevant for the $l$-th goal as determined by the branch.
Then progress is made between $(u_1,\psi)$ and $(u_2,\psi')$ at level $l$ if the $l{-}1$-prefix of $u_2$ is smaller 
than the $l{-}1$-prefix of $u_1$, or these prefixes are identical but the branching direction of $\psi'$ at level $l-1$ is smaller
than the branching direction of $\psi$ at level $l-1$.
Progress is strict at level $l$ if the $l$-prefix of $u_2$ is smaller than the $l$ prefix of $u_1$ or, these prefixes are identical but the branching direction of $\psi'$ at level $l$ is smaller than the branching direction of $\psi$ at level $l$. 
Together with a permutation, the colors that are visited by a game vertex identify the level at 
which progress should be made, where progress is required to be strict if the most relevant colors are negative.
This intuitive explanation is made formal in what follows.

As before, for a $k$-tree $U$, let $U^\infty$ denote the set $U\cup \{\infty\}$. 
We use the prefix ordering of nodes in $U$ as before and establish prefix ordering on elements of $\Psi_\alpha$ and pairs from $U^\infty \times \Psi_\alpha$. 
Let $\psi=e_1,\ldots, e_d$ and $\psi'=e'_1,\ldots, e'_{d'}$ be two even sequences in $\Psi_\alpha$.
Then, $\psi >_i \psi'$ if there is a $j\leq i$ such that for all $j'<j$ we have $e_{j'}=e'_{j'}$ and $e_j>e'_{j'}$ or $e_j$ does not exist (i.e., $\psi'$ is a prefix of $\psi$). 
We write $\psi\geq_i \psi'$ if $\psi>_i\psi'$ or $e_1,\ldots, e_i=e'_1,\ldots, e'_i$. 
For the combination, given $u_1,u_2\in U$ and $\psi_1,\psi_2\in \Psi_\alpha$, we write $(u_1,\psi_1)>_i (u_2,\psi_2)$ if there is $k<i$ such that $u_1=_k u_2$, $\psi_1=_k \psi_2$, and either $u_1>_{k+1} u_2$ or $u_1=_{k+1} u_2$ and $\psi_1>_{k+1} \psi_2$. 
We write $(u_1,\psi_1)\geq_i (u_2,\psi_2)$ if $(u_1,\psi_1)>_i(u_2,\psi_2)$ or $u_1=_iu_2$ and $\psi_1=_i\psi_2$.
Finally, we write $\geq$ and $>$ (without index) to denote that the relation holds for some index $i$.
Consider a branch $t_0, t_1,t_2,\ldots, t_k$ in the Zielonka tree, the partition of the leaf $t_k$ to its odd sub-sequence $\pi=o_1,\ldots, o_d$ and its even sub-sequence $\psi=e_1,\ldots, e_{d'}$, and a vertex $v\in V$. Recall that $t_0=\epsilon$. 
Given two values $u_1,u_2\in U$ and an additional even sub-sequence $\psi'=e'_1,\ldots, e'_{d'}$, we define 
$(u_1,\psi)>_{v,\pi}(u_2,\psi')\text{ iff }(u_1,\psi)\geq_{l-1} (u_2,\psi')$ where
$l$ is the minimal index such that 
$\gamma(v)\subseteq \tau(o_1e_1\cdots e_{l-1})$
but $\gamma(v)\not\subseteq \tau(o_1e_1\cdots e_{l-1}o_l e_{l})$ (and $d+1$ if no such index exists).
If $l\leq d$ but $\gamma(v)\subseteq \tau(o_1e_1\cdots e_{l-1}o_l)$ 
then we additionally require $(u_1,\psi)>_{l} (u_2,\psi')$
as in this case, the most relevant colors for $v,\pi$ and $\psi$ are negative.
In addition, we define $\infty >_{v,\pi} (u,\psi)$ for all $u\in U$ and $\psi\in \Psi_\alpha$,
where $\infty$ again indicates that one needs to make infinite progress in order to win. 

\subsubsection*{Rankings for EL games}

Consider a function $r:V\times \Pi_\alpha \rightarrow (U^\infty \times \Psi_\alpha)$.
If $r(v,\pi)=(u,\psi)$, we write $r_1(v,\pi)$ for $u$ and $r_2(v,\pi)$ for $\psi$; 
if $r_1(v,\pi)=\infty$ then we write $r(v,\pi)=r_2(v,\pi)=\infty$. 
Such a function is \emph{Zielonka-tree consistent} (with $Z(\alpha)=\pair{T,\tau}$) if it satisfies the following conditions:
\begin{compactitem}
    \item[--] 
    For every $v\in V$ and every $\pi\in \Pi_\alpha$ such that $\psi=r_2(v,\pi)\neq \infty$ we have $\pi$ and $\psi$ interleave to a leaf/branch in $T$. 
    This implies that $|\psi| \leq |\pi| \leq |\psi|+1$.
    \item[--]
    For every $v\in V$ and every two odd sequences $\pi_1,\pi_2\in \Pi_\alpha$ such that $r(v,\pi_1)\neq \infty$ and $r(v,\pi_2)\neq \infty$, if $\pi_1|_l=\pi_2|_l$ then $r_1(v,\pi_1)|_l=r_1(v,\pi_2)|_l$ and $r_2(v,\pi_1)|_l=r_2(v,\pi_2)|_l$, where $\cdot|_l$ takes the $l$-length prefix of a sequence. 
\end{compactitem}

\begin{definition}
An Emerson-Lei ranking over $V$ (and $U$) is a Zielonka-tree consistent function $r:V \times \Pi_\alpha \rightarrow (U^\infty\times \Psi_\alpha)$.
An Emerson-Lei ranking is \emph{good} if for every state $v\in V$ and all $\pi\in \Pi_\alpha$ such that $r_1(v,\pi)\neq \infty$ we have
\begin{compactitem}
    \item[--]
        if $v\in V_0$, there is some $v'\in E(v)$ such that $r(v,\pi) >_{v,\pi} r(v',\incr_{v}(\pi,\psi))$;
    \item[--]
        if $v\in V_1$, then for each $v' \in E(v)$ we have $r(v,\pi) >_{v,\pi} r(v',\incr_{v}(\pi,\psi))$.
\end{compactitem}
\end{definition}

\begin{example}\label{ex:elrank}
Coming back to Example~\ref{ex:el}, we consider a ranking for objective $\alpha_2$, 
using the tree $T$ in which the root
has two children, each of which has a single child; $T$ has just two leaves $u_{1},u_2$.
The branches in the Zielonka tree $Z(\alpha_2)$ (Example~\ref{ex:ziel}) are encoded by the sequences $\underline{1}1$, $\underline{1}2\underline{1}1$, $\underline{2}1$
of branching directions (with odd branching directions underlined), so we have
$\Pi_{\alpha_2}=\{\underline{1},\underline{1}\underline{1},\underline{2}\}$ and
$\Psi_{\alpha_2}=\{1,21\}$; we point out that not all combinations
combine to a branch in $Z(\alpha_2)$.
Then $\sigma_1$ can be extracted from the following ranking:
\begin{align*}
r(u,\underline{1})&=(u_{1},1) & r(v,\underline{1})&=(u_{2},1) & r(w,\underline{1})&=(u_{2},1) \\
r(u,\underline{1}\underline{1})&=(u_{1},21) & r(v,\underline{1}\underline{1})&=(u_{2},21) & r(w,\underline{1}\underline{1})&=(u_{2},21)\\
r(u,\underline{2})&=(u_{2},1) & r(v,\underline{2})&=(u_{2},1) & r(w,\underline{2})&=(u_{1},1)
\end{align*}
The ranking $r$ assigns $\infty$ to all other pairs. This ranking is good; for instance for $(u,\underline{1}\underline{1})$,
we have that $r_2(u,\underline{1}\underline{1})=21$ completes $\underline{1}\underline{1}$ to the branch
$\underline{1}2\underline{1}1$ in $Z(\alpha_2)$; also we have
$u\in V_0$, $v\in E(u)$ and
$
r(u,\underline{1}\underline{1})=(u_{1},21)>_{u,\underline{1}\underline{1}} (u_{2},1) = r(v,\underline{2}),
$
pointing out that $\incr_u(\underline{1}\underline{1},21)=\underline{2}$ since $c\in\gamma(u)$ (so that the memory changes
around the first position), and that $(u_{1},21)>_{u,\underline{1}\underline{1}} (u_{2},1)$ since
$\ind(u,\underline{1}\underline{1},21)=1$, $c\in\gamma(u)$ and $u_1=_0 u_2$.
Contrarily, we cannot assign any leaf from $T$ to, e.g., $(y,\underline{1})$ as from $y$, player $1$ can avoid color $c$ forever,
but force colors $a,d$ infinitely often.
\end{example}

Next, we
show that the
computation of good Emerson-Lei rankings 
is a sound way to establish winning regions and extract witnessing winning strategies in Emerson-Lei games.

\begin{lemmarep}
    Given a good EL ranking, player~0 wins the EL game from every state $v$ such that for some 
    sequence $\pi$ we have $r(v,\pi)\neq \infty$. 
    \label{lem:soundness el ranking}    
\end{lemmarep}
\begin{proofsketch}
    Player~0 uses the sequence $\pi=o_1,\ldots, o_d\in \Pi_\alpha$ as memory.
    As long as the memory value is $\pi$, player~0 uses the two parts of the ranking $r(\cdot,\pi)$ to determine her next move,
    intuitively trying to minimize the rank $r(\cdot,\pi)$.
    Whenever she reaches a vertex $v$ such that $\gamma(v)\cap (\tau(o_1e_1\cdots e_{l-1})\setminus \tau(o_1e_1\cdots o_{l})\neq\emptyset$ she chooses a successor $v'$ such that $r(v',\incr_v(\pi,r_2(v,\pi)))$ agrees with $r(v,\pi)$ on the $l-1$-prefix (in particular, it is required 
    that $r(v',\incr_v(\pi,r_2(v,\pi)))\neq \infty$), she updates her memory to $\incr_v(\pi,r_2(v,\pi))$ and continues playing (we say that she updates her memory even if $\pi'=\pi$). 
    Consider an infinite play where player~0 uses this strategy.

    Consider the case that the memory eventually stabilizes. 
    Then we have that for all $l$ the situation that $\gamma(v)\cap (\tau(o_1e_1\cdots e_{l-1})\setminus \tau(o_1e_1\cdots o_l))\neq \emptyset$ does not happen anymore.
    Every time that $\gamma(v)\cap (\tau(o_1e_1\cdots {o_l})\setminus \tau({o_1e_1\cdots e_l}))\neq \emptyset$ there is a decrease in $r_1(\cdot,\pi)$ or $r_2(\cdot,\pi)$ and these values are never incremented as the only way to increment the rank is if the memory is changed.
    So, eventually the play gets stuck in some $\tau(o_1e_1\cdots e_l)$ and is won by player $0$.
    
    Otherwise, there is a minimal point $l$ for which the memory changes around point $l$ infinitely often. 
    It follows that eventually always $\gamma(v)\subseteq \tau(o_1e_1\cdots e_{l-1})$. Furthermore,
    for every $o \leq |\child(o_1e_1\cdots e_{l-1})|$ we have that some color from $\tau(o_1e_1\cdots e_{l-1})\setminus \tau(o_1e_1\cdots e_{l-1}o)$ is visited infinitely often. 
    Hence, the set of colors visited infinitely often satisfies $\alpha$ by definition of the children of a node in the Zielonka tree.
\end{proofsketch}

\begin{proof}
    Consider a play starting in $v_0$ and let $\pi_0\in\Pi_\alpha$ be such that $r(v_0,\pi_0)\neq \infty$.
    We construct a play $v_0,\ldots$ and a corresponding sequence of memory values $\pi_0,\ldots$ such that $r(v_i,\pi_i)\neq \infty$.
    Let $\psi_i=r_2(v_i,\pi_i)$.

    Consider the case that $v_i\in V_0$. By goodness of the ranking, there is some $v_{i+1}\in E(v_i)$ such that $r(v_i,\pi_i)>_{v_i,\pi_i} r(v_{i+1},\pi_{i+1})$, where $\pi_{i+1}=\incr_{v_i}(\pi_i,\psi_i)$. 
    As $r(v_i,\pi_i)\neq \infty$ it cannot be the case that $r(v_{i+1},\pi_{i+1})=\infty$. 

    Consider the case that $v_i\in V_1$ and consider a successor $v_{i+1}$ of $v_i$ chosen by player~1. 
    By goodness of the ranking we have $r(v_i,\pi_i)>_{v_i,\pi_i} r(v_{i+1},\pi_{i+1})$, where $\pi_{i+1}=\incr_{v_i}(v,\pi_1,\psi_i)$.
    As $r(v_i,\pi_i)\neq \infty$ it cannot be the case that $r(v_{i+1},\pi_{i+1})=\infty$. 

    We have established two infinite sequences $v_0,\ldots$ and $\pi_0,\ldots$ such that for every $i$ we have
    $r(v_i,\pi_i)>_{v_i,\pi_i} r(v_{i+1},\pi_{i+1})$.
    Let $l_i$ be the minimal index such that 
$\gamma(v_i)\subseteq \tau(o_1e_1\cdots e_{l_i-1})$
but $\gamma(v_i)\not\subseteq \tau(o_1e_1\cdots e_{l_i-1}o_{l_i} e_{l_i})$.
    Then, either 
    \begin{inparaenum}
    \item[(a)]
    $\gamma(v_i)\cap (\tau(o_1e_1\cdots e_{l_i-1}) \setminus \tau(o_1e_1\cdots o_{l_i}) \neq \emptyset$ and the $l_i$th element of $\pi_{i+1}$ is updated,
    \item[(b)]
    $r_1(v_i,\pi_i)>_{l_i} r_1(v_{i+1},\pi_{i+1})$ and $\pi_{i+1}|_{l_i}=\pi_i|_{l_i}$, or
    \item[(c)]
    $r_1(v_i,\pi_i)=_{l_i} r_1(v_{i+1},\pi_{i+1})$, $r_2(v_i,\pi_i)>_{l_i} r_2(v_{i+1},\pi_{i+1})$, and $\pi_{i+1}|_{l_i}=\pi|_{l_i}$.
    \end{inparaenum}

    Let $l$ be the minimal index in $l_0,\ldots$ that appears infinitely often.
    Let $i_0$ be the index such that for every $i'>i_0$ we have $l_{i'}\geq l$.
    It follows that for all $i'>i_0$ we have $r_1(v_{i'},\pi_{i'})=_{l-1}r_1(v_{{i'}+1},\pi_{{i'}+1})$.
    It is also the case that $\pi_{i'}|_{l-1}=\pi_{{i'}+1}|_{l-1}$ and $\psi_{i'}|_{l-1}=\psi_{{i'}+1}|_{l-1}$.
    Let $e_1,\ldots, e_{l-1}$ be $\psi_{i'}|_{l-1}$ and let $o_1,\ldots, o_{l-1}$ be $\pi_{i'}|_{l-1}$.
    It follows that for every $i'>i_0$ we have $\gamma(v_i)\subseteq \tau(o_1e_1\cdots e_j)\cap \tau(o_1e_1\cdots o_j)$ for all $j<l$ (it is also the case that $\tau(o_1e_1\cdots o_je_j)\supset \tau(o_1e_1\cdots o_j)$).
    Consider the infinitely many positions greater than $i_0$ such that either $r_1(v_i,\pi_i)>_{l}r_1(v_{i+1},\pi_{i+1})$, $r_2(v_i,\pi_i)>_{l}r_2(v_{i+1},\pi_{i+1})$, or $\gamma(v)\cap (\tau(o_1e_1\cdots e_{l_{i}-1})\setminus \tau(o_1e_1\cdots o_{l_i}))\neq \emptyset$. Clearly, we cannot have an infinite suffix with only the first two happening.
    Whenever the third happens, if $\pi_i$ has $o$ as the child number of $o_1e_1\cdots e_{l-1}$ then $\pi_{i+1}$ has the next child of $o_1e_1\cdots e_{l-1}$.
    It follows, that for every $o\leq |\child(o_1e_1\cdots e_{l-1})|$ we have $inf(\gamma(v_0),\ldots) \not\subseteq \tau(o_1e_1\cdots e_{l-1}o)$. On the other hand, we have for every $i'\geq i_0$ that $\gamma(v_{i'})\subseteq \tau(o_1e_1\cdots e_{l-1})$. It follows that the play is winning for player~0.
\end{proof}

Notably, unlike reductions to parity games, which embed the memory required for both players into the arena, our approach results in a strategy with optimal memory usage.
The following corollary follows from the winning strategy constructed in the proof of Lemma~\ref{lem:soundness el ranking}.

\begin{corollary}
    A good EL ranking induces a winning strategy using at most $m^+_\alpha$ memory.
    \label{cor:strategy memory EL}
\end{corollary}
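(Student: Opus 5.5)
The plan is to read off a finite-memory strategy directly from the strategy built in the proof of Lemma~\ref{lem:soundness el ranking}, and then bound the number of memory values that strategy can actually reach by $m^+_\alpha$.

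\emph{Memory structure.} The strategy uses as memory a sequence $\pi\in\Pi_\alpha$. At a vertex $v\in V_0$ with memory $\pi$ and $\psi=r_2(v,\pi)$, player~0 moves to a successor $v'$ witnessing $r(v,\pi)>_{v,\pi}r(v',\incr_v(\pi,\psi))$. The memory is then updated to $\incr_v(\pi,\psi)$. The update depends only on the colors $\gamma(v)$ and on $\psi$, which is itself a function of $(v,\pi)$. So this is a memory structure whose states are the reachable elements of $\Pi_\alpha$. Note that $|\Pi_\alpha|$ could exceed $m^+_\alpha$, so we cannot simply bound by the size of $\Pi_\alpha$; we must restrict to memory values that are actually used.

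\emph{Normal form for used memory values.} A memory value $\pi$ is paired, along the play, with some even sequence $\psi=r_2(v,\pi)$. By Zielonka-tree consistency, $\pi$ and $\psi$ interleave to a branch of $Z(\alpha)$. Consider the tree $Z^+$ obtained from $Z(\alpha)$ by keeping, at every losing node (odd depth), only one child, chosen by $\psi$, while keeping all children at winning nodes. Its leaves are counted by the recursion defining $m^+_\alpha$: at winning nodes we sum over children, and at losing nodes we take the maximum. The number of leaves is therefore at most $m^+_\alpha$.

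\emph{Identifying memory with leaves of $Z^+$.} The odd sequence $\pi$ records exactly the choices made at winning nodes, namely which child $o_i$ of $o_1e_1\cdots e_{i-1}$ is pursued. The even choices are not stored in memory; they are recovered from the ranking. We would therefore identify a memory value with a path that selects a child at each winning node and is \emph{padded} with $1$s, as $\incr$ does. The key step is to show that for a fixed prefix of odd choices, the continuation is determined up to padding. Concretely, the number of distinct odd sequences extending a winning node $t$ is at most $\sum_{t'\in\child(t)}\max_{t''\in\child(t')}(\text{count at }t'')$. This matches $m^+_\alpha(t)$ by induction on the tree, provided distinct even children at a losing node do not contribute separately to the count. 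Odd sequences below different $t''$ that share the same direction indices coincide as sequences, so the count at a losing node is a maximum rather than a sum.

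\emph{Main obstacle.} The hard part is justifying this last identification. Distinct children $t''$ of a losing node can have different subtree shapes, so the same odd sequence $\pi$ may be read along different branches. We would handle this by defining the memory as the reachable set generated by $\incr$ from an initial value, and showing by induction over the Zielonka tree that this set embeds into the max-over-children structure. Any $\pi$ valid under some $t''$ is padded to the longest leaf length, so we take the union over $t''$ of the odd-sequence sets, whose size is bounded by the maximum. Should this embedding fail literally, the fallback is to quotient memory values by their behavior and appeal to the construction of \cite{DBLP:conf/lics/DziembowskiJW97}, whose memory size is $m^+_\alpha$.
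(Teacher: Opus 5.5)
Your plan is the paper's: the memory is the odd sequence $\pi$, and memory is counted by the recursion that sums over the children of winning nodes and takes a maximum over the children of losing nodes. The gap is in how you justify the maximum. You derive it from literal coincidence of odd sequences, i.e.\ from the claim that the union, over the children $t''$ of a losing node, of their sets of odd sequences has size at most the largest of these sets. That is false in general.

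Take the Muller condition over $\{g,x,y,z,w,p,q\}$ in which a set is winning iff it contains $g$, or equals $\{x,y\}$ or $\{p,q\}$, or is a subset of $\{x,y,z,w\}$ containing $w$. Its Zielonka tree is as follows. The root has the single losing child $L=\{x,y,z,w,p,q\}$. $L$ has the two winning children $S_1=\{x,y,z,w\}$ and $S_2=\{p,q\}$. Below $S_1$ we have $S_1\to\{x,y,z\}\to\{x,y\}$, and $\{x,y\}$ has the leaves $\{x\},\{y\}$. $S_2$ has the leaves $\{p\},\{q\}$.

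Here $m^+_\alpha=2$. The odd sequences, however, are $(1,1,1),(1,1,2)$ through $S_1$ and $(1,1),(1,2)$ through $S_2$: four distinct values. Padding all of them to length $3$ still leaves three. So your bound of the number of odd sequences below $t$ by $\sum_{t'}\max_{t''}$ already fails at the root. Restricting to reachable values does not help either, since a play may visit vertices whose ranks select $S_1$ and other vertices whose ranks select $S_2$.

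The missing idea is to \emph{share} memory states among the children of a losing node, rather than identify sequences literally, and let the ranking say which child is meant. This is what the paper's remark amounts to: at a losing node one only needs to represent the odd sequences below one of its children.

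Your $Z^+$ observation is the right per-vertex fact. By Zielonka-tree consistency, $r_2(v,\pi)|_l$ depends only on $\pi|_l$. Hence the sequences with finite rank at a fixed vertex $v$ lie in one tree with a single child per losing node, and there are at most $m^+_\alpha$ of them.

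To turn this into a memory structure of size $m^+_\alpha$, proceed as follows:
\begin{itemize}
\item give each losing node a pool of $\max$ states shared by its children;
\item decode the current state top-down using $r_2$ at the current vertex;
\item reinterpret the part of the memory below a losing node whenever the play reaches a vertex whose rank selects a different child there. When necessary, choose a continuation with finite rank; one exists because $\incr_v(\pi,\psi)$ itself has finite rank at the successor.
\end{itemize}

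You must then check that plays remain winning. A change of the selected child at level $j$ can only occur in a step that strictly decreases the rank, or increments the memory, at some level $\le j$. Therefore the prefixes tracked in the minimal-level argument of Lemma~\ref{lem:soundness el ranking} are unaffected.

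Your fallbacks do not replace this step. ``Quotienting by behaviour'' is left unspecified. Citing \cite{DBLP:conf/lics/DziembowskiJW97} only yields \emph{some} winning strategy with memory $m^+_\alpha$, not the strategy induced by the given ranking, which is what the corollary asserts.
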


\begin{proof}
    The strategy constructed in the proof of Lemma~\ref{lem:soundness el ranking} uses an entry $\pi\in \Pi_\alpha$ as memory and updates it.
    Consider how many memory values/odd sequences one needs to represent below a certain node $t$ of the Zielonka tree.

    For a node $t$ in the Zielonka tree, let $mem(t)$ denote the number of odd sequences that are extensions of the odd sequences that include this node.

    If $t\models \varphi$, then we need to represent all the odd sequences that start with all $t'$ for $t'\in \child(t)$. 
    So, $mem(t)=\sum_{t'\in \child(t)} mem(t')$.

    If $t\not\models \varphi$, then we need to represent the odd sequences that are below one of its children.
    So, $mem(t)=\max_{t'\in \child(t)} mem(t')$.

    If $t$ is a leaf, then one memory value is sufficient.

    It follows that the number of odd sequences is computed in exactly the same way as $m^+_\alpha$ and the two are equivalent.

    Notice that the number of even sequences is equivalent to $m^-(\alpha)$. 
\end{proof}

The results of Section~\ref{sec:el} can be used to show that EL rankings using the universal tree $U(n,d_\alpha/2)$ are also complete. 

\begin{lemmarep}
    For every EL game there exists a good EL ranking $r$ using $U(n,d_\alpha/2)$ such that for every state $v$ winning for player~0 there is some sequence $\pi$ such that $r(v,\pi)\neq \infty$.
    \label{lem:completeness el ranking}
\end{lemmarep}
\begin{proofsketch}
    The
    ranking is obtained by combining the signatures identified in Section~\ref{sec:el}.
    This time the signatures include two parts corresponding to values in $U(n,k)$ and to nodes in the Zielonka tree.
    As we go down the Zielonka tree, we collect the winning nodes in the Zielonka tree from the second part of the signatures.
    At some point we get to a location in the tree such that the first signature of the node identifies a single leaf in the universal tree.
    Then, we set the first ranking to this single leaf 
    and the second ranking to the even subsequence corresponding to the losing node in the Zielonka tree. 
    The recursive construction ensures that the ranking is Zielonka-tree consistent. 
    The properties in Lemma~\ref{lem:el signature properties} translate directly to the properties of the ranking. 
\end{proofsketch}

\begin{proof}
    We show that the combination of signatures for EL games established in Section~\ref{sec:el} are, in fact, the required ranking.

    We define the ranking by scanning the Zielonka tree from the root. 
    Consider a vertex $v$.
    Consider a child of the root $o_1$.
    Let $s^1_{o_1}$ and $s^2_{o_1}$ denote the two parts of the signature that is defined for $o_1$.
    Recall that $k=\lceil d_\alpha/2\rceil$ and let $U(n,k)$ be the hierarchical tree used by $s^1_{o_1}$.
    Extend $o_1$ to a branch in the Zielonka tree by induction.
    Suppose that $o_1,\ldots, o_l$ and $e_1,\ldots, e_{l-1}$ have been identified and for every $l'\leq l$ we have identified the signatures $s^1_{o_1,\ldots, o_{l'}}$ and $s^2_{o_1,\ldots, o_{l'}}$ such that $e_{l'}=s^2_{o_1,\ldots, o_{l'-1}}(v)$.

    Consider the case that $s^2_{o_1,\ldots, o_l}(v)=e_l\neq \bot$.
    In this case, $v$ is included in some recursively computed region $W_{2j}^0$ for $j>0$ or in the attractor $W_{2j}^p$ for $j\geq 0$ and $p>0$. 
    In the second case, $s^1_{o_1,\ldots, o_l}(v)$ is a unique leaf $u$ of $U(n,k)$. 
    Consider some leaf $t_k$ in the Zielonka tree such that $\pi=o_1,\ldots, o_d$ for some extension of $o_1,\ldots, o_l$ is its subsequence of odd directions and $\psi=e_1,\ldots, e_{d'}$ is its subsequence of even directions for some extension of $e_1,\ldots, e_{l}$.
    Then, for $\pi$ we set $r_1(v,\pi)=s^1_{o_1,\ldots, o_l}(v)$ (though, formally, $s^1_{o_1,\ldots, o_l}(v)$ is a set containing this unique value).
    We set $r_2(v,\pi)=\psi$.
    If $v\in W_{2j}^0$ (and in this case $j>0$), then $s^1_{o_1,\ldots, o_l}(v)$ is $U_{2j}^0$ for some $(|W_{2j}^0|,k-l-1)$-hierarchical tree. 
    Notice that in the case that $e_l$ has no children it is impossible that $s^2_{o_1,\ldots, o_l}(v)$ is $e_l$.
    Indeed, the loop in lines 8-13 iterates over an empty set of children in line 8 and uses $W_{2j+1}=\emptyset$ as set in line 7.
    Hence, in the case that $s^2_{o_1,\ldots, o_l}(v)=e_l$, for every child $o_1e_1\cdots e_lo_{l+1}$ of $o_1e_1\cdots e_l$, we consider the signature $s^1_{o_1,\ldots, o_{l+1}}$ using the $(|W_{2j}^0|,k-l-1)$-hierarchical tree $U_{2j}^0$.
    The recursive construction of the signature also carries $s^2_{o_1,\ldots, o_l}(v)=e_l$.

    Consider the case that $s^2_{o_1,\ldots, o_l}(v)=\bot$.
    This only happens when $v$ is in $W_0^0$ or when $o_1,e_1,\ldots, o_l$ does not have children. 
    In both cases, a single leaf in $U(n,k)$ is associated with $v$.
    Hence, $|s^1_{o_1,\ldots, o_l}(v)|=1$.
    Consider some leaf $t_k$ in the Zielonka tree such that $\pi=o_1,\ldots, o_d$ for some extension of $o_1,\ldots, o_l$ is its subsequence of odd directions and $\psi=e_1,\ldots, e_{d'}$ is its subsequence of even directions for some extension of $e_1,\ldots, e_{l-1}$.
    Then, for $\pi$ we set $r_1(v,\pi)=s^1_{o_1,\ldots, o_l}(v)$ (though, formally, $s^1_{o_1,\ldots, o_l}(v)$ is a set containing this unique value).
    Furthermore, we set $r_2(v,\pi)=\psi$. 
    We do the same for every leaf $t_k$ that extends $o_1e_1\cdots o_l$.
    This completes the definition of the ranking.
    Clearly, for every $v\in \Win_0$ and every odd sequence $o_1,\ldots, o_d$ we have $r(v,\pi)\neq \infty$. 

    Consider $v$ and $\pi_1$ and $\pi_2$ such that $r(v,\pi_1)\neq \infty$ and $r(v,\pi_2)\neq \infty$. 
    It is simple to see that $r_2(v,\pi_1)$ completes $\pi_1$ to a branch in the Zielonka tree.
    By the recursive construction of the ranking, we construct the prefixes of both parts of the ranking as we go down the tree.
    Thus, if $\pi_1|_l=\pi_2|_l$ then $r_1(v,\pi_1)|_l=r_1(v,\pi_2)|_l$ and $r_2(v,\pi_1)|_l=r_2(v,\pi_2)|_l$.
    Thus, the function is Zielonka-tree consistent. 
    
    We now establish that the ranking given by $r_1$ and $r_2$ is good.
    Fix $\pi=o_1,\ldots, o_d$ and $\psi=e_1,\ldots, e_{d'}$. 
    \begin{itemize}
        \item 
        We show that for every $v\in V_0\cap \Win_0$ such that $r_2(v,\pi)=\psi$ there is some edge $(v,v')$ such that $r(v,\pi)>_{v,\pi} r(v',\incr_v(\pi,\psi))$.

        Let $\pi=o_1,\ldots, o_d$ and let $\psi=e_1,\ldots, e_{d'}$.
        We iterate $i$ over $\{1,\ldots, d\}$.
        We establish, as we iterate over $i$, that the sub-arena $\subarena$ handled at stage $i$ ensures that for every $v'\in E(v)$ such that $v'\in \subarena$ and every odd sequence $\pi'$ such that $\pi'|_{i-1}=\pi|_{i-1}$ we have $r(v,\pi)\geq_{i-1} r(v',\pi')$.
        In particular, this holds for $\incr_v(\pi,\psi)$ as getting to iteration $i$ ensures that $\gamma(v) \subseteq \tau(o_1e_1\cdots o_i)$, which implies that $\incr_v(\pi,\psi)|_{i-1}=\pi_{i-1}$.
        Initially, {\sc SolveEL} is called with the entire winning region as the sub-arena, and hence, $E(v)\cap \Win_0\neq \emptyset$.
        It is also the case that for every $v'\in E(v)\cap \Win_0$ and every odd sequence $\pi'$ we have $r(v',\pi')\neq \infty$.
        It follows that the empty prefix of both $r_1$ and $r_2$ satisfy our requirements.

        Consider a call of {\sc SolveEL} handling the $i$-th element of $o_1,\ldots, o_d$. 
        This call corresponds to node $t_i=o_1e_1\cdots o_{i-1}e_{i-1}$ in the Zielonka tree and iterates over all children of $t_i$. Let $o_i$ be some child of $t_i$. 
        Let $W_{2j}^p$ be the set according to which $v$ is included in the sub-arena according to $o_i$. 
        
        In the case that $j=0$ and $p=0$, we have $\gamma(v)\cap (\tau(o_1e_1\cdots e_{l-1})\setminus \tau(o_1e_1\cdots o_i))\neq \emptyset$ and $\incr_v(\pi,\psi)$ is updated.
        Iterating over $i'<i$ we established that for every $v'\in E(v)$ that is in the sub-arena sent to {\sc SolveEL} we have $r(v,\pi) \geq_{i-1} r(v,\pi')$.
        Clearly, $\pi|_{i-1}=\incr_v(\pi,\psi)|_{i-1}$ and this is sufficient to establish that $r(v,\pi)>_{v,\pi}r(v',\incr_v(\pi,\psi))$. We stop the iteration. 

        In case that $p>0$, by the definition of attractors, there is some edge $(v,v')\in E$ such that $v'$ is contained in $W_{2j}^{p-1}$.
        Furthermore, $v\notin W_0^0$ so $\gamma(v)\cap (\tau(o_1e_1\cdots e_{l-1})\setminus \tau(o_1e_1\cdots o_l))=\emptyset$. 
        The shared prefix between $\pi$ and $\incr_v(\pi,\psi)$ includes $i$.
        Then, for every $\pi'$ that has the same $i$-prefix as $\pi$ we have that $r(v,\pi)$ and $r(v',\pi')$ are set using the same signature in such a way that $r(v,\pi)>_i r(v',\pi')$.
        This is particularly the case for $\incr_v(\pi,\psi)$ implying that $r(v,\pi)>_{v,\pi} r(v',\incr_v(\pi,\psi))$.
        We stop the iteration. 

        Consider the remaining case that $v\in W_{2j}^0$ for $j>0$. It again follows that $\gamma(v)\cap (\tau(o_1e_1\cdots e_{l-1})\setminus \tau(o_1e_1\cdots o_l))=\emptyset$.
        By construction $e_l \leq |\child(o_1e_1\cdots o_l)|$ is the child according to which $v$ is returned by the recursive call to {\sc SolveEL}.
        It follows that $\gamma(v)\cap (\tau(o_l)\setminus \tau(e_l))=\emptyset$.
        As $v$ is not contained in the set {\tt Avoid}, $v$ is not a dead end in the sub-arena.
        Consider now the recursive call of {\sc SolveEL}.
        For all nodes $v'$ that are computed by {\sc SolveEL} we have that their rank $r_1(v',\cdot)$ is in the same subtree of $U(n,k)$ such that $r_1(v,\pi)=_i r_1(v',\pi')$ and $r_2(v,\pi)=_i r_2(v',\pi')$ for all $\pi'$ such that $\pi|_i=\pi'|_i$. In particular, this holds for $\incr_v(\pi,\psi)$.
        Hence, in order to establish that $r(v,\pi)>_{v,\pi} r(v',\incr_v(\pi,\psi))$ we need to check the rest of the sequences. 

        The requirement for the iteration $i+1$ is that $r(v,\pi)=_i r(v',\pi')$, which we have just established.
        We continue in this iteration until either we get to a value of $i$ where the iteration stops and establishes that $r(v,\pi)>_{v,\pi} r(v',\incr_v(\pi,\psi))$ or we get to the end of the sequence.
        In this final case, $\gamma(v)\subseteq \tau(o_1e_1\cdots o_i)$ and $\incr_v(\pi)=\pi$.
        We have established that for every $v'\in E(v)\cap \subarena$ we have $r(v,\pi)\geq_{i-1} r(v',\pi)$, which shows that $r(v,\pi)>_{v,\pi} r(v,\incr_v(\pi,\psi))$.

        The case of $v\in V_1$, below, requires to consider also the case of choice of transitions outside the currently considered sub-arena.
        \item 
        We show that for every $v\in V_1\cap \Win_0$ and every edge $(v,v')$ we have $r(v,\pi)>_{v,\pi}r(v',\incr_v(\pi,\psi))$.

        Let $\pi=o_1,\ldots, o_d$ and let $\psi=e_1,\ldots, e_{d'}$.
        We iterate $i$ over $\{1,\ldots, d\}$.
        We establish, as we iterate over $i$, that the sub-arena $\subarena$ handled at stage $i$ ensures that for every odd sequence $\pi'$ such that $\pi'|_{i-1}=\pi|_{i-1}$ and all successors $v'$ of $v$ we have either (a) $v'$ is outside the sub-arena $SA$ and $r(v,\pi)>_{i-1} r(v',\pi')$ or (b) $v'$ is inside the sub-arena and $r(v,\pi)\geq_{i-1} r(v',\pi')$. 
        Initially, {\sc SolveEL} is called with the entire winning region as the sub-arena and the root of $Z_\alpha$ as the node.
        Hence, $E(v)\subseteq \Win_0$ and for every $v'\in E(v)$ and every $\pi'$ we have $r(v',\pi')\neq \infty$.
        It follows that the empty prefix for both $r_1$ and $r_2$ satisfy our requirements. 

        Consider a call of {\sc SolveEL} handling the $i$-th element of $o_1,\ldots, o_d$.
        This call corresponds to the node $t_i=o_1e_1\cdots o_{i-1}e_{i-1}$ in the Zielonka tree and iterates over all the children of $t_i$. Let $o_i$ be some child of $t_i$. Let $W_{2j}^p$ be the set according to which $v$ is included in the sub-arena according to $o_i$.

        Iterating over $i'<i$ we established that for every $v'\in E(v)$ that is not in the sub-arena we have that $r(v,\pi)>_{i-1} r(v',\incr_v(\pi,\psi))$, which is sufficient.

        So, in the rest of this proof, we care only about the case that $v'$ is in the sub-arena. In this case, we established that for every $v'\in E(v)$ that is in the sub-arena we have $r(v,\pi)\geq_{i-1} r(v',\pi')$ for every $\pi'$ such that $\pi'|_{i-1}=\pi|_{i-1}$.

        In case that $j=0$ and $p=0$, then $\gamma(v)\cap (\tau(o_1e_1\cdots e_{l-1})\setminus \tau(o_1e_1\cdots o_l))\neq \emptyset$. Then, $\incr_v(\pi,\psi)$ is updated.
        However, $r(v,\pi)\geq_{i-1}r(v',\incr_v(\pi,\psi))$ is sufficient to establish that $r(v,\pi)>_{v,\pi}r(v',\incr_v(\pi,\psi))$ and we are done. We stop the iteration. 

        In case that $p>0$, by the definition of attractors, for every edge $(v,v')\in E$ we have that $v'$ is contained in $W_{2j'}^{p'}$ for some $j'<j$ or $j=j'$ and $p'<p$.
        In both cases, $r(v,\pi)>_i r(v',\pi')$ for every $\pi'$ that agrees with $\pi$ on the $i$ prefix. 
        As $\gamma(v)\cap \tau(o_1e_1\cdots e_{l-1})\setminus \tau(o_1e_1\cdots o_l)=\emptyset$, we have that $\incr_v(\pi,\psi)$ agrees with $\pi$ on the $i$ prefix. 
        This implies that $r(v,\pi)>_{v,\pi} r(v',\incr_v(\pi,\psi))$. We stop the iteration.

        Consider the case that $v\in W_{2j}^0$ for $j>0$.
        It again follows that $\gamma(v)\cap (\tau(o_1e_1\cdots e_{l-1})\setminus \tau(o_1e_1\cdots o_l))=\emptyset$. 
        By construction $e_l \leq |\child(o_1e_1\cdots o_l)|$ is the child of $o_1e_1\cdots o_l$ according to which $v$ is returned by the recursive call to {\sc SolveEL}.
        It follows that $\gamma(v)\cap (\tau(o_1e_1\cdots o_l)\setminus \tau(o_1e_1\cdots e_l))=\emptyset$.
        Consider an edge $(v,v')\in E$ such that $v'$ is in the sub-arena.
        It can be the case that $v'\in W_{2j'}^p$ for $j'<j$ and some $p$.
        In this case $r_1(v,\pi)>_i r_1(v',\pi')$ for all $\pi'$ that agree with $\pi$ on the $i$ prefix. 
        This in particular holds for $\incr_v(\pi,\psi)$, which implies $r(v,\pi)>_{v,\pi} r(v',\incr_v(\pi,\psi))$.
        This includes all nodes $v'\in E(v)\cap \subarena$ such that $v'\notin W_{2j}^0$. 
        It follows that when we call {\sc SolveEL} on $W_{2j}^0$ for every successor $v'$ of $v$ that is not in $W_{2j}^0$ we just established that $r(v,\pi)>_i r(v',\pi')$.
        For every successor $v'$ of $v$ that is in $W_{2j}^0$ and all $\pi'$ agreeing with $\pi$ on the $i$ prefix, they are allocated a rank in the same subtree $U_{2j}^0$.
        It follows that they have $r_1(v,\pi)=_i r_1(v',\pi')$ and $r_2(v,\pi)=_i r_2(v',\pi')$. This establishes that when we iterate to $i+1$, the conditions required by the inductive proof indeed hold.
        We also note that $\pi|_i=\incr_v(\pi,\psi)|_i$ as colors from the set $(\tau(o_1e_1\cdots e_{l-1})\setminus \tau(o_1e_1\cdots o_l))$ are not visited in this region.
        Furthermore, in order to establish that $r(v,\pi)>_{v,\pi} r(v',\incr_v(\pi,\psi))$, we need to continue with the iteration. 

        The requirements for iteration $i+1$ were just established. We continue the iteration until either we get to a value of $i$ where the iteration stops and establishes that $r(v,\pi)>_{v,\pi}r(v',\incr_v(\pi,\psi))$ or we get to a node $t_i$ such that $t_i$ is a leaf. The case of leaves $t_i$ such that $t_i\not\models \alpha$ is handled through the case of $W_0^0$. It follows that $t_i\models\alpha$ and as $\gamma(v) \subseteq \tau(t_i)$ we have $\incr_v(\pi,\psi)=\pi$. 
        Furthermore, we established that for every $v'\in \subarena$ we have $r(v,\pi)\geq_i r(v',\incr_v(\pi,\psi))$, which shows that $r(v,\pi)>_{v,\pi}r(v',\incr_v(\pi,\psi))$.
    \end{itemize} 
\end{proof}

\subsubsection*{Rank-lifting algorithm for EL games.}
As in the case of Streett games, Emerson-Lei games can be solved by computing good rankings using a direct rank-lifting algorithm.

\begin{theoremrep}\label{thm:elqp}
    EL games with $n$ nodes, $m$ edges and $c$ colors 
    can be solved in time $O(mc(\log(c)+\log(n)) c! u(n,c/2))$ and space $O(nc!\min\{\log(n)\log(c),\log(n)+c\})$.     
\end{theoremrep}

To prove Theorem~\ref{thm:elqp} we propose a rank-lifting algorithm for EL games, show it to be correct and analyze its complexity. 

%    We note that $c\geq d_\alpha$. We use $U(n,d_\alpha/2)$ in the proof and replace $d_\alpha$ by $c$ in the upper bound. 
    We need to update the definition of the lift operator for Streett games so that it produces Zielonka-tree consistent rankings by the rank associated with multiple odd sequences simultaneously:
    when lifting the rank of a pair $(v,\pi)$, we need to ensure that all permutations $\pi'$ that share a prefix with $\pi$ are updated simultaneously. 
    We do this by updating the rank of affected odd sequences to the minimum acceptable that is larger than the prefix of the rank imposed by $\pi$. 
    For $v'\in V$:
    $$
    \begin{array}{l}
      \mathsf{update}(r,v,\pi,v') =       \textstyle\min_\leq (\{ \infty\} \quad \cup \\[4pt]\qquad
      \left \{ (u,\psi)%\in U(n,c/2)\times\Psi_\alpha 
      \left  |
      ~(u,\psi)>_{v,\pi} r(v',\incr_{v}(\pi,\psi))
      \right \} \right )
    \end{array}
    $$

    The rank-lifting algorithm for EL games then starts by initializing $r$ to be the minimal ranking $r_{\min}$ 
    which assigns, to each a pair $(v,\pi) \in V \times \Pi_\alpha$, the minimal leaf in $U(n,d_\alpha/2)$ and the minimal $\psi^\pi_{\min}\in\Psi_\alpha$ such that $\pi$ and $\psi^\pi_{\min}$ combine to a branch.
    That is, $\psi=1\cdots 1$ of an appropriate length. 
    It then repeatedly applies the lifting function to $r$ for some vertex $v$ and odd sequence $\pi$ until the ranking stabilizes.

Recall that $c\geq d_\alpha$. We use $U(n,d_\alpha/2)$ in the proof and replace $d_\alpha$ by $c$ in the upper bound. 
The proof of Theorem~\ref{thm:elqp} follows from the following Lemmata.

\begin{lemma}
    \label{lemma:rank-lifting-correct-el}
    Algorithm~\ref{alg:rank lifting streett} can compute a good EL ranking such that for every state $v$ winning for player~0 there is some sequence $\pi$ such that $r_1(v,\pi)\neq \infty$.
\end{lemma}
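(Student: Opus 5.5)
We mirror the proof of Lemma~\ref{lemma:rank-lifting-correct-streett}. The argument is a Knaster--Tarski fixpoint argument in three steps: lattice, monotone and expansive operator, and comparison of the least fixpoint with the ranking of Lemma~\ref{lem:completeness el ranking}.

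\textbf{Lattice.} We first fix the domain: the set of Zielonka-tree consistent functions $r:V\times\Pi_\alpha\to U^\infty\times\Psi_\alpha$. We order it pointwise by $r\sqsubseteq r'$ iff $r(v,\pi)\le r'(v,\pi)$ for all $v,\pi$. Here $\le$ compares pairs $(u,\psi)$ through the combined prefix order $\geq_i$ at full length, with $\infty$ on top. For a fixed $\pi$, the values $(u,\psi)$ with $\psi$ completing $\pi$ to a branch are totally ordered by this comparison. So pointwise $\max$ and $\min$ exist and $r_{\min}$ is the bottom element. The domain is finite, so the chain height is finite.

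The delicate point is that consistency must be preserved. If $r,r'$ are consistent and $\pi_1|_l=\pi_2|_l$, then $r(v,\pi_1)$ and $r(v,\pi_2)$ agree on $l$-prefixes, and so do $r'(v,\pi_1)$ and $r'(v,\pi_2)$. Hence the lexicographic comparison of $r(v,\pi_j)$ against $r'(v,\pi_j)$ is decided identically on the first $l$ levels for $j=1,2$. Therefore $\max$ and $\min$ also agree on $l$-prefixes. I expect this check, together with the analogous one for $\mathsf{Lift}$, to be the main obstacle.

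\textbf{Lift operator.} As described before the statement, $\mathsf{Lift}(r,v,\pi)$ raises $r(v,\pi)$ to $\max(r(v,\pi),\mathsf{lift}(r,v,\pi))$. Here $\mathsf{lift}$ is the $\min$ (for $V_0$) or $\max$ (for $V_1$) of $\mathsf{update}(r,v,\pi,v')$. It then raises every $(v,\pi')$ sharing a prefix with $\pi$ to the least consistent value above the new prefix. The steps are:
\begin{compactitem}
\item[--] Show $\mathsf{Lift}$ is expansive. This is immediate since it only takes maxima.
\item[--] Show $\mathsf{Lift}$ is monotone. The relation $>_{v,\pi}$ is monotone in its right argument, and $\incr_v(\pi,\psi)$ depends on $\psi$ only through prefixes up to $\ind$, which are shared by consistent rankings. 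Hence $\mathsf{update}$ is monotone, and the propagation to prefix-sharing $\pi'$ is a monotone closure.
\item[--] Show the result is again Zielonka-tree consistent.
\end{compactitem}

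\textbf{Fixpoint and comparison.} Let $\mathsf{LIFT}=\bigsqcup_{v,\pi}\mathsf{Lift}(\cdot,v,\pi)$. The argument of Lemma~\ref{lemma:rank-lifting-correct-streett} then applies verbatim: any iteration from $r_{\min}$ terminates, because $\mathsf{Lift}$ is expansive and the height is finite. By induction every iterate lies below the least fixpoint $r_\mu$, so the iteration ends exactly at $r_\mu$.

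Next, $r_\mu$ is good. If $r_1(v,\pi)\neq\infty$ at a fixpoint, then $r(v,\pi)\ge\mathsf{update}(r,v,\pi,v')$ for a suitable successor $v'$ (for $V_0$) or for all successors (for $V_1$), which gives $r(v,\pi)>_{v,\pi}r(v',\incr_v(\pi,r_2(v,\pi)))$.

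Finally, any good ranking $r^*$ satisfies $\mathsf{Lift}(r^*,v,\pi)\sqsubseteq r^*$. Taking $r^*$ to be the ranking of Lemma~\ref{lem:completeness el ranking} over $U(n,d_\alpha/2)$, minimality yields $r_\mu\sqsubseteq r^*$. So $r_\mu(v,\pi)\ne\infty$ whenever $r^*(v,\pi)\ne\infty$, which covers every $v\in\Win_0$.
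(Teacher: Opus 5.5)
Your proposal is essentially the paper's proof. It uses the same lattice of Zielonka-tree consistent rankings with the same prefix argument for $\sqcap$ and $\sqcup$, the same monotone expansive $\mathsf{Lift}$ propagating via $\mathsf{up}$ to sequences sharing a prefix with $\pi$, termination of arbitrary iteration at the least fixpoint of $\mathsf{LIFT}$, goodness of fixpoints, and a comparison with the good ranking of Lemma~\ref{lem:completeness el ranking}, which the paper leaves implicit in its last sentence.

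Two points need filling in. First, the step you only list, consistency of $\mathsf{Lift}(r,v,\pi)$, is settled in the paper by comparing $\mathsf{max\_pref}(\pi,\pi')$ with the first level at which $r(v,\pi)$ changes, which is essentially your $\sqcup$ argument again. Second, because of this propagation a fixpoint only yields $r(v,\pi)\ge\mathsf{update}(r,v,\pi,v')$; the paper asserts equality, but the propagation does not guarantee it. So your goodness step needs a short check, using consistency of $r$ at $v'$, that $>_{v,\pi}$ survives raising the left argument, since both the critical level and $\incr_v$ depend on its $\Psi_\alpha$-component.
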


\begin{proof}
    Just like in the Streett case, the order on $U(n,d_\alpha/2)^\infty\times \Psi_\alpha$ induces a complete lattice on the space of ranking functions. 
   
   We use $\leq$ to denote the lexicographic order on $\Psi_\alpha$, and let $\psi^\pi_{\min}$ denote the minimal element of $\Psi_\alpha$ according to this order that combines to a branch with $\pi$. 
   That is, $\psi^\pi_{\min}=1\cdots 1$ of an appropriate length. 
   We use $\geq$ and $>$ also for pairs $(u,\psi)\in U(n,d_\alpha/2)\times \Psi_\alpha$ as defined earlier.
   This is the lexicographic order on the interleaving of $u$ and $\psi$, starting with elements from $u$. 

   Given two EL rankings $r,r'$, we then define 
    \begin{itemize}
      \item $r\sqsubseteq r'$ iff for all $v\in V$ and $\pi\in \Pi_\alpha$ we have $r(v,\pi')\leq_{v,\pi'} r'(v,\pi'')$,
      \item $(r\sqcap r')(v,\pi)=\min_\leq (r(v,\pi),r'(v,\pi))$, 
      \item $(r\sqcup r')(v,\pi)=\max_\leq(r(v,\pi),r'(v,\pi))$, 
    \end{itemize}

    It is possible to see that $r\sqcap r'$ and $r\sqcup r'$ are Zielonka-tree consistent.
    As $\psi=(r\sqcap r')_2(v,\pi)$ is either $r_2(v,\pi)$ or $r'_2(v,\pi)$ it follows that $\psi$ completes $\pi$ to a branch in $T$. 
    Consider two sequences $\pi_1$ and $\pi_2$ that agree up to level $l$. 
    Then, $r_1(v,\pi_1)$ and $r_1(v,\pi_2)$ agree up to level $l$ and $r'_1(v,\pi_1)$ and $r'_1(v,\pi_2)$ agree up to level $l$.
    Hence, if $r_1(v,\pi_1)\leq r'_1(v,\pi_1)$ and the difference is before level $l$, then $r_1(v,\pi_2)\leq r'_1(v,\pi_2)$ as well. 
    If the difference occurs after level $l$ then it is still the case that up to level $l$ all ranks agree and the result is consistent. 
    The same holds for $r_2$ and $r'_2$ and for $\sqcup$ as well. 

    Then the set of EL rankings, partially ordered by $\sqsubseteq$, forms a complete lattice with join $\sqcup$ and meet $\sqcap$
    and the according minimal and maximal elements $r_{\min}$ and $r_{\max}$, respectively.

    Next, we define a lifting function that updates EL rankings.
    For two sequences $\pi$ and $\pi'$, let $\mathsf{max\_pref}(\pi,\pi')$ denote the length of the joint prefix between $\pi$ and $\pi'$, namely, the maximal index $l$ such that $\pi|_l=\pi'|_l$.
    For an EL ranking $r$, $v\in V$, and $\pi\in \Pi_\alpha$, we define
    $\mathsf{Lift}(r,v,\pi)$ to be the function that for $u\neq v$ and $\pi'$ sets $(\mathsf{Lift}(r,v,\pi))(u,\pi')=r(u,\pi')$, and for $\pi'$ sets $(\mathsf{Lift}(r,v,\pi))(v,\pi')$ as defined in Figure~\ref{fig:EL lift}.
    This time, a $\mathsf{Lift}$ according to $v$ and $\pi$ could change the ranking for $v$ and other permutations $\pi'$ (in the second line). 
    This is required when the rank of $v$ and $\pi$ changes in a location that is shorter than the joint prefix of $\pi$ and $\pi'$ ($\mathsf{max\_pref}(\pi,\pi')$).
    In this case, we need to ensure that the new rank of $v$ and $\pi'$ is at least as large as the relevant prefix of the rank of $v$ and $\pi$ ($\mathsf{up}(\cdot)$).
    \begin{figure*}[bt]
    $$
    \begin{array}{l}
      (\mathsf{Lift}(r,v,\pi))(v,\pi)  = \textstyle\max_\leq \{r(v,\pi),\mathsf{lift}(r,v,\pi)\}, \mbox{ and}\\
      (\mathsf{Lift}(r,v,\pi))(v,\pi')  =
      \textstyle\max_\leq \left \{ r(v,\pi'), \mathsf{up}\Big (\mathsf{lift}(r,v,\pi),\mathsf{max\_pref}(\pi,\pi')\Big ) \right \}, \pi'\neq \pi, \mbox{ where}\\
      \mathsf{up}((u,\psi),l) = 
        \textstyle\min_\leq \left \{
        (u',\psi') \left | u|_l=u'|_l \mbox{ and } \psi|_l=\psi'|_l
\right. \right \} 
    \\
    \mathsf{lift}(r,v,\pi) = 
    \left \{
    \begin{array}{l l}
    \textstyle\min_\leq \big \{ \mathsf{update}(r,v,\pi,v') \big \} & v \in V_0 \\
    \textstyle\max_\leq \big \{ \mathsf{update}(r,v,\pi,v') \big \} & v \in V_1 
    \end{array}
    \right .\\
    \mathsf{update}(r,v,\pi,v') = \\
    \multicolumn{1}{c}{
    \textstyle\min_\leq \left ( 
    \{ \infty\} \cup 
      \left \{ (u,\psi)\in U(n,c/2)\times\Psi_\alpha ~\Big |~
      (u,\psi)>_{v,\pi} r(v',\incr_{v}(\pi,\psi)) \right \}  \right )}
    \end{array}
    $$
    \caption{\label{fig:EL lift}The functions $\mathsf{Lift}$ and $\mathsf{lift}$ for $v\in V$ and $\pi\in \Pi_\alpha$.}
    \end{figure*}

    Consider a Zielonka-tree compatible ranking $r$. Let $r'=\mathsf{Lift}(r,v,\pi)$. We show that $r'$ is also Zielonka-tree compatible.
    For $u\neq v$ and for every $\pi'$ we have $r(u,\pi')=r'(u,\pi')$ so that compatibility is obvious for $u$. 
    We concentrate on $r'(v,\cdot)$.
    If $r'(v,\pi)=r(v,\pi)$ then compatibility is obvious again.
    Consider the case that $r'(v,\pi)\neq r(v,\pi)$ and let $l'$ be the index such that $r'(v,\pi)$ is different from $r(v,\pi)$ in index $l'$.
    Consider a sequence $\pi'$ that agrees with $\pi$ for a shorter prefix than $l'$, then $r(v,\pi')$ does not change.
    If $\pi'$ agrees with $\pi$ on a prefix of length $l'$ or longer, then due to the increase of $r'(v,\pi)$ compared to $r(v,\pi)$ we have that $r'(v,\pi')$ is the minimal value that agrees with $r'(v,\pi)$ up to length $l$ and is larger than $r(v,\pi')$. 
    Consider now two sequences $\pi'$ and $\pi''$. 
    If the joint prefix of $\pi'$ and $\pi''$ extends their joint prefix with $\pi$ then both are increased in the same way in $r'$ and are still compatible.
    If the joint prefix of $\pi'$ and $\pi''$ is shorter than their joint prefix with $\pi$, then the joint prefix of $\pi'$ and $\pi''$ does not change and they are still compatible. 

    Thus, $\mathsf{Lift}$ with respect to $v$ and $\pi$ is a monotone expansive operator. That is, we have $\mathsf{Lift}(r)\sqsubseteq\mathsf{Lift}(r')$ for any two
    EL rankings $r,r'$ such that $r\sqsubseteq r'$ and $r \sqsubseteq \mathsf{Lift}(r,v,\pi)$. 
    Let $\mathsf{LIFT}(r)=\displaystyle\bigsqcup_{v,\pi} \mathsf{Lift}(r,v,\pi)$, which is also monotone and hence has a least fixpoint. 

    Crucially, the definition of $\mathsf{LIFT}$ encodes the property that is required for an EL ranking to be good. 
    Hence any EL ranking $r$ that is a fixpoint of $\mathsf{LIFT}$ (that is, for which we have $\mathsf{Lift}(r,v,\pi)=r$ for every $v$ and $\pi$) is a good EL ranking: 
    Let $r$ be an EL ranking with $\mathsf{Lift}(r,v,\pi)=r$ for every $v$ and $\pi$, and let $v\in V_0$ and $\pi\in \Pi_\alpha$
    such that $r(v,\pi)\neq\infty$; also let $\psi$ denote $r_2(v,\pi)$. 
    Then there is a suitable $v'\in E(v)$ such that $r(v,\pi)=\textstyle\min_\leq\{(u,\psi)\in U(n, c/2)\times\Psi_\alpha\mid (u,\psi)>_{v,\pi}r(v',\pi')\}$ for $\pi'=\incr_{v}(\pi,\psi)$.
    It follows that $r(v,\pi)>_{v,\pi}r(v',\pi')$, as required. The case for $v\in V_1$ is analogous.

    The least fixpoint of $\mathsf{LIFT}$ can be computed by arbitrary iteration of the different $\mathsf{Lift}(r,v,\pi)$ operators:
    Clearly, every fixpoint of $\mathsf{LIFT}$ is a fixpoint for $\mathsf{Lift}(\cdot,v,\pi)$ for every $v$ and $\pi$ and vice versa. 
    Furthermore, the height of the lattice is finite and every $\mathsf{Lift}$ is monotone and expansive.
    Hence, every arbitrary iteration of the different operators terminates.
    Let $r_{min}=r_0,\ldots$ denote such arbitrary iteration order with $r_\infty$ its limit 
    and let $r_\mu$ denote the least fixpoint of $\mathsf{LIFT}$.
    Clearly, $r_{min}\sqsubseteq r_\mu$. 
    By induction, if $r_i\sqsubseteq r_\mu$, then, for every $v$ and $\pi$ we have
    $\mathsf{Lift}(r_i,v,\pi) \sqsubseteq \mathsf{Lift}(r_\mu,v,\pi)=r_\mu$. 
    Hence, $r_\infty\sqsubseteq r_\mu$ and as both are fixpoints of $\mathsf{LIFT}$ they are equivalent. 

    The rank-lifting algorithm for EL games then starts by initializing $r$ to be the minimal ranking $r_{\min}$, assigning the minimal leaf in $U(n,d_\alpha/2)$ and $\psi^\pi_{\min}\in\Psi_\alpha$
    to all pairs 
    $(v,\pi)\in V\times\Pi_\alpha$, and then repeatedly applies the lifting function to $r$ until the ranking stabilizes.
    The algorithm itself is identical to Algorithm~\ref{alg:rank lifting streett} with ranking $r$ ranging over EL-rankings, $\pi$ ranging over odd sequences rather than permutations, and $r_{min}$ as explained here. 

    The rank lifting algorithm computes the least fixpoint of the join of all the different $\mathsf{Lift}$ functions. 
    In particular, for every good EL ranking $r$, and every $v$ and $\pi$ we have
    $\mathsf{Lift}(r,v,\pi)\sqsubseteq r$. 
    Hence the result of the algorithm is a good EL ranking such that for every state $v$, $v$ is winning for player $0$ if and only if there is some
    permutation $\pi$ such that the computed ranking assigns $(u,\psi)$ to the pair $(v,\pi)$, where $u$ is a leaf in $U(n,d_\alpha/2)$ (and not $\infty$). 
\end{proof}

    This shows the correctness of the algorithm. 
    We now turn to an efficient implementation of the algorithm and its complexity.

\begin{lemma}
\label{lem:rank-lifting-complexity-el}
Algorithm~\ref{alg:rank lifting streett} runs in time $O(m c(\log(c)+\log(n)) z_\alpha \cdot u(n,d_\alpha/2))$ and in space $O(nz_\alpha \min\{\log(n)\log(d_\alpha), \log(n)+d_\alpha\})$.%\np{Replace $c$ by $d_\alpha$.}
\end{lemma}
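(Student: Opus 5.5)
The proof follows the structure of Lemma~\ref{lem:rank-lifting-complexity-streett}. We count lift applications, bound the bookkeeping per application by amortising over predecessor dependencies, and multiply by the cost of a single comparison. The key replacement is $k!$ by $|\Pi_\alpha|$, and $|\Pi_\alpha|\le z_\alpha$, since every odd sequence is determined by at least one leaf of $Z(\alpha)$.

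\emph{Number of lifts.} By Lemma~\ref{lemma:rank-lifting-correct-el}, every application of $\mathsf{Lift}(\cdot,v,\pi)$ that changes the ranking strictly increases $r(v,\pi)$ in the order on $U(n,d_\alpha/2)^\infty\times\Psi_\alpha$. The propagated updates via $\mathsf{up}$ also only increase values. The $\Psi_\alpha$ component cannot multiply the count: by Zielonka-tree consistency, $r_2(v,\pi)$ must complete $\pi$ to a branch of $Z(\alpha)$. The interleaved value $(u,\psi)$ therefore ranges over at most $|\leaf(U(n,d_\alpha/2))|$ times the number of even completions of $\pi$. Summed over all $\pi$, the number of such (odd, even) combinations equals the number of leaves, so the total number of distinct values over all pairs $(v,\pi)$ is $O(n\, z_\alpha\, u(n,d_\alpha/2))$. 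This needs a little care to state precisely, and I would phrase it per $v$ as ``at most $z_\alpha\cdot u(n,d_\alpha/2)$ increments over all $\pi$''.

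\emph{Cost per lift and amortisation.} As in the Streett case, for each $(v,\pi)$ we keep the best successor rank and its multiplicity, together with a worklist of pairs needing a lift. A lift of $(v,\pi)$ scans the out-edges of $v$ to recompute $\mathsf{lift}$. It then scans, for each predecessor ${}^\backprime v$, the odd sequences ${}^\backprime\pi$ with $\incr_{{}^\backprime v}({}^\backprime\pi,\cdot)=\pi$. The main obstacle is bounding this dependency fan-in. I would reuse the cut argument: arrange $\Pi_\alpha$ as the tree of odd sequences. For fixed ${}^\backprime v$, the level $\ind({}^\backprime v,\cdot)$ defines a cut in that tree, and all sequences below a cut node map to one updated sequence. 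By the same induction as in the Streett proof, the total number of dependencies attached to the edge $({}^\backprime v,v)$ over all $\pi$ is at most $|\Pi_\alpha|\le z_\alpha$, with sums at winning nodes and maxima at losing nodes, as in Corollary~\ref{cor:strategy memory EL}. The simultaneous $\mathsf{up}$ updates to other $\pi'$ sharing a prefix with $\pi$ are handled the same way: each is itself a strict increase, so it is charged to the lift count above. Hence the total work is $O(m z_\alpha)$ edge and dependency visits per ``rank level'', giving $O(m z_\alpha u(n,d_\alpha/2))$ elementary steps.

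\emph{Comparison cost and space.} Each comparison or computation of $\mathsf{update}$ manipulates a sequence $\pi$ of length $O(d_\alpha)\le O(c)$ with entries of size $O(\log c)$ (children of Zielonka nodes), and an interleaved rank of a universal tree leaf (total bit-length $O(\log n)$ spread over $O(c)$ entries) and $\psi$. This costs $O(c(\log c+\log n))$, and multiplying gives the time bound. For space, we store for each of the $n z_\alpha$ pairs the rank, the best-successor rank and a counter. A leaf of $U(n,d_\alpha/2)$ is encoded in $O(\min\{\log n\log d_\alpha,\log n+d_\alpha\})$ bits, exactly as in Lemma~\ref{lem:rank-lifting-complexity-streett}. $\psi$ is recoverable from $\pi$ plus the leaf index, or is absorbed in the same bound, and the worklist needs one bit per pair.
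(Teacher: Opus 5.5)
Your skeleton matches the paper's: bound the increments by $O(n z_\alpha u(n,d_\alpha/2))$, keep best-successor ranks and counters, amortise predecessor notifications with a cut argument, and multiply by the comparison cost. Your increment count is correct. Each $r(v,\pi)$ increases strictly in a total order on $\leaf(U)\times\{\psi : \psi \mbox{ completes } \pi\}$, and these sizes sum over $\pi$ to $z_\alpha\cdot|\leaf(U)|$. That is arguably cleaner than the paper's wording.

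The gap is the step ``$O(m z_\alpha)$ edge and dependency visits per rank level, giving $O(m z_\alpha u(n,d_\alpha/2))$''. Your dependency bound is per odd sequence. For a fixed edge $({}^\backprime v,v)$, write $\mathrm{fanin}(\pi)$ for the number of ${}^\backprime\pi$ with $\incr_{{}^\backprime v}({}^\backprime\pi,\cdot)=\pi$; these sum to $|\Pi_\alpha|$.

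Multiplying that sum by $u(n,d_\alpha/2)$ assumes, as in the Streett case, that each $(v,\pi)$ is lifted at most $u(n,d_\alpha/2)$ times. By your own count, however, $(v,\pi)$ can be lifted up to $u(n,d_\alpha/2)\cdot\mathrm{comp}(\pi)$ times, where $\mathrm{comp}(\pi)$ is the number of even completions of $\pi$. Your accounting therefore only gives $u(n,d_\alpha/2)\sum_\pi \mathrm{fanin}(\pi)\,\mathrm{comp}(\pi)$ per edge, and that is not $O(z_\alpha u(n,d_\alpha/2))$.

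For a concrete case, let the root have losing children $\underline{1}$ and $\underline{2}$. Let $\underline{1}$ have one winning child with $F$ leaf children, and let $\underline{2}$ have $M$ winning leaf children, so $z_\alpha=F+M$. A vertex ${}^\backprime v$ with a color outside $\tau(\underline{1})$ sends all $F$ sequences $\underline{1}\,\underline{j}$ to $\underline{2}$. Meanwhile $r(v,\underline{2})$ ranges over about $u\cdot M$ values, so naive notification costs $\Theta(uFM)$.

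The paper avoids this mismatch by doing both counts at the granularity of Zielonka leaves. It encodes $r(v,\cdot)$ as a map $r_v$ on the nodes of $Z(\alpha)$ and argues that each leaf of $Z(\alpha)$ receives at most $u(n,d_\alpha/2)$ lifts. It also sets up dependencies leaf to leaf: for a cut node at even depth $l$, every leaf below $t_{l+1}$ depends on the single leaf $t_0\cdots t_l t'_{l+1}1\,1\cdots$. The cut argument then bounds the per-edge dependencies by $z_\alpha$ at the same granularity as the lift bound, and the two multiply to $O(z_\alpha u(n,d_\alpha/2))$ per edge.

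To repair your version you need a comparable step. One option is to adopt that leaf-level accounting. Another is to use the fact that a dependent whose memory is updated at level $l$ only checks $\geq_{l-1}$. It therefore needs re-examination only when the $(l-1)$-prefix of the target's rank changes, not at every increment, and you would then bound the number of such prefix changes.
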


\begin{proof}
    In order to perform the rank lifting efficiently, we proceed as for the Streett ranking.
    For every pair $(v,\pi)$ we keep track of the rank of the ``best'' successors as well as the number of successors with this rank (the number is required only for vertices in $V_0$).
    We keep a queue of pairs $(v,\pi)$ that need to be lifted.
    Initially, we add to this queue all vertices and odd subsequences $(v,\pi)$ such that $r(v,\pi)$ needs lifting as $\gamma(v)\cap (\tau(o_1e_1\cdots e_{l}) \setminus \tau(o_1e_1\cdots o_{l+1}))\neq\emptyset$ for some index $l$. 
    This can be analyzed in time proportional to $O(nc\log(c)z_\alpha)$ ($c\log(c)$ is required to explore the colors in $c$ according to the order dictated by the leaf, $z_\alpha$ is the number of leaves of $Z(\alpha)$).
    We maintain in the queue only vertices that definitely need lifting. 
    When lifting a pair $(v,\pi)$, we first explore all successors of $v$, compute the new rank $r(v,\pi)$, and update the number of successors it depends on.
    Suppose that $r(v,\pi)$ was updated on the $l$-th point in one of the sequences. Then, when updating $r(v,\pi)$, we affect all odd sequences that agree with $\pi$ for longer than the $l$ prefix. 
    In terms of complexity, we consider lifting all of them as ``work'' performed by the algorithm and count them separately. 
    We then explore all predecessors $^\backprime v$ of $v$ and all odd sequences $^\backprime \pi$ such that $r(^\backprime v,^\backprime \pi)$ depends on $r(v,\pi)$ in order to remain the same. 
    For all such pairs, we update their counts and if they need a lift due to their reliance on $(v,\pi)$, they are added to the queue. 
    As in the case of Streett games, the number of such dependencies is potentially large.
    We analyze the total number of such updates corresponding to the edge $(^\backprime v,v)$.
    We show that the total number of increments per node in the Zielonka tree is $U(n,d_\alpha/2)$ and that the amortized cost of an increment is proportional to the number of edges connecting the vertex $v$.
    Thus, the total complexity is as stated. 

    We now turn to consider the details.
    Recall the notations $|C|=c$, $Z(\alpha)=\pair{T,r}$, $z_\alpha$ the number of leaves of $T$. Let $b$ be the maximum branching width in $T$.
    Given a vertex $v$, we suggest an encoding of the ranking $r(v,\cdot)$ as a map $r_v:T\rightarrow (U(n,d_\alpha/2) \times ([1..b]\cup \set{\bot})) \cup \{\downarrow,\infty\}$.
    We show that using this encoding it is possible to update efficiently the rank as it is incremented and that the amortized cost per update is proportional to the number of edges that are connected to a vertex.
    The function $r_v$ satisfies the following conventions.
    Consider a leaf $t_k$ in $T$ composed from the odd and even sequences $o_1,\ldots, o_d$ and $e_1,\ldots, e_{d'}$. 

    We use $r_v(\epsilon)=\infty$ to signify that for every odd sequence $\pi$ we have $r(v,\pi)=\infty$.
    Otherwise, $r_v(t)\neq \infty$ for every $t\in T$ and $r_v(t)=\downarrow$ for every node $t$ in even depth of $T$.

    Consider a node in odd depth $t=o_1e_1\cdots e_{l-1}o_l$ (i.e, $\tau(t)\not\models \alpha$).
    We have $r_v(t) =(u,\mathsf{dir}) \in U(n,d_\alpha/2) \times ([1..b]\cup \set{\bot})$ such that 
    $|u|=l$ and $\mathsf{dir}$ is either $\bot$ or $\mathsf{dir}\leq |\child(t)|$.
    That is, $r_v$ associates $t$ with a node in $U(n,d_\alpha/2)$ in the same depth as the length of $o_1,\ldots, o_l$ and with the direction of a child of $t$ or with $\bot$ (no child). 

    Furthermore, we have the following:
    \begin{compactitem}
        \item 
        If $r_v(t)=(u,\bot)$ for $u\in U(n,d_\alpha/2)$, then 
        this represents the situation that for all extensions $\pi'$ of the odd sequence $o_1,\ldots, o_{l}$ and the even sequence $\psi'=e_1,\ldots, e_{l-1},1,1,1,\ldots$, we have $r(v,\pi')=(u',\psi')$, where $u'$ is the least leaf under $u$. 

        In this case, the values that $r_v$ associates with nodes that appear below $t$ in the tree are not relevant. 
        \item 
        If $r_v(t)=(u,e_l)$ for $u\in U(n, d_\alpha/2)$ and $e_l \leq |\child(t)|$, then  
        this represents the situation that for every child $te_lo$ of $te_l$, the rank of extensions of $o_1,\ldots, o_l,o$ are determined by the value that $r_v$ associates with nodes further down in the tree. 

        In particular, in this case we maintain that $r_v(te_lo)=(u',\mathsf{dir})$ for a child $u'$ of $u$ in $U(n,d_\alpha/2)$. 
    \end{compactitem}
    Notice that in the case that $t$ has no children, it is impossible to  recursively find winning regions according to $t$ in line $11$ of Algorithm~\ref{alg:Zielonka EL}.
    In such a case, the entire computation is the partition of a region to $W_0$, meaning that encoding a node in $U(n,d_\alpha/2)$ is sufficient for the ranking.
    In such a case, $r_v$ associates with $t$ a value tagged by $\bot$. 

    In the case that $te_l$ does not have children, the entire relevant region of the game is returned immediately in line $18$ of Algorithm~\ref{alg:Zielonka EL}.
    In such a case, $r_v$ associates a value with the grandfather of $t$, which determines the ranking.

    We notice that for every odd sequence $o_1,\ldots, o_d$, the above representation allows exactly one even sequence $e_1,\ldots, e_{d'}$ to be associated with it. 
    Notice also that $r_{\min}$ under this notation is the function $r_v$ that associates every child of the root $\epsilon$ of $T$ with the pair $(u,\bot)$, where $u$ is the minimal child $u$ of the root of $U(n,d_\alpha/2)$.
    Indeed, this associates every odd sequence $\pi$ with the least leaf in $U(n,d_\alpha/2)$ and the even sequence $1,1,1,\ldots$ that completes $\pi$ to a branch in $T$. 

    For every leaf of the tree $T$ there can be at most $U(n,d_\alpha/2)$ lift operations.
    Indeed, once a certain node $t$ at odd depth points to a certain child $e$ (in even depth) and this changes to the next child $e'$, the only way to get back and increment in the subtree below $e$ is if the node of $U(n,d_\alpha/2)$ associated with $t$ changes or something changes in one of the ancestors of $t$.
    It follows that the total number of increments over all nodes is $O(nz_\alpha u(n,d_\alpha/2))$.
    Notice that we consider each leaf as lifted separately, even though a single lift can affect multiple leaves together. 
    We now compute the amortized cost of one increment.

    Consider the pair $(v,\pi)$ such that $r(v,\pi)=(u,\psi)$ needs to be incremented. 
    For every successor $v'\in E(v)$ we need to compute $\mathsf{update}(r,v,\pi,v')$.
    This essentially amounts to checking $(u,\psi) >_{v,\pi} (u',\psi')$, where $r(v',\incr_v(\pi,\psi))=(u',\psi')$.
    However, the representation $r_{v'}$ includes exactly one candidate for $\psi'$.
    This candidate can be found by traversing $r_{v'}$ from the root of $T$ gradually building $\pi'$ and $\psi'$.
    This search is therefore linear in $d_\alpha$.

    Once we determine the new value of $r(v,\pi)$, we need to consider the predecessors $^\backprime v\in E^{-1}(v)$ and branches of the tree (corresponding to a combination $^\backprime \pi$ and $^\backprime \psi$) such that $\pi$ is $\incr_{^\backprime v}(^\backprime \pi,^\backprime \psi)$.

    Consider the edge $(^\backprime v,v)\in E$. 
    Consider a branch $t_0,\ldots, t_k$ in the Zielonka tree.
    The vertex $^\backprime v$ identifies a level $l$ that is the minimal such that $\gamma(^\backprime v) \cap (\tau(t_l) \setminus \tau(t_{l+1}))$. 
    If $l$ is even and $t'_{l+1}$ is the next sibling of $t_{l+1}$, then every branch/leaf that starts with $t_{l+1}$ depends on the branch $t_0,\ldots,t_l,t'_{l+1},1,1,1,1,\ldots$.
    If $l$ is odd, then every branch/leaf that starts with $t_0,\ldots, t_{l+1}$ depends only on itself. 
    The nodes in the tree that correspond to the minimal levels $l$ with respect to $^\backprime v$ as defined above form a cut in the tree. 
    A node in the cut in even level $l'$ in the tree identifies a dependency of all its descendants on the same branch in the tree.
    The total number of such dependencies is bounded by the number of leaves of the subtree under $t_l$. 
    We show this by induction on the depth of the subtree. 
    For a subtree of depth $1$ or depth $0$, clearly the maximal number of dependencies is 1. 
    Consider a subtree of depth $p+1$ and a cut in it. 
    If the cut is at the root of the subtree $t$, then the cut contains exactly the root and can have at most all leaves as dependents. 
    Otherwise, the cut defines a cut of each of the subtrees rooted at the children of the root of our subtree.
    By induction, the number of dependencies is bounded by the sum of dependencies in each subtree, which corresponds to the total number of leaves. 
    Overall, the number of dependencies that correspond to the edge $(^\backprime v,v)$ is at most $z_\alpha$ for all possible rank updates.
    It follows that a certain edge $(^\backprime v,v)$ is affected by at most $O(z_\alpha u(n,d_\alpha/2))$ lifting operations.
    Overall, the amortized cost of an edge per lift is hence constant. 
    
    To summarize, given a vertex $v$ and an odd sequence $\pi$, a lift operation applied to $(v,\pi)$ explores all the outgoing edges of $v$ and all its incoming edges and the odd sequences that depend on $\pi$. 
    Overall, when iterating over all lifting operations applied to $v$, number of ``edge crossings'' is proportional to $O((|E(v)|+ |E^{-1}(v)|)z_\alpha u(n,d_\alpha/2))$. 
    Every comparison of the ranking takes time proportional to the depth of the Zielonka tree, scanning the colors according to their order, and the length of the rank, $O(c(\log(c)+\log(n)))$. 
    It follows that the total time complexity is $O(mc(\log(c)+\log(n))z_\alpha u(n,d_\alpha/2))$.

    The space requirement follows from the need to store $n z_\alpha$ values in $U(n,d_\alpha/2)$.
    As before, space $\min{\{\log(n)\log(\lceil d_\alpha/2\rceil),\log(n)+\lceil d_\alpha/2\rceil\}})$ suffices to store a node in the universal tree.
    Therefore, we obtain the following overall space requirement: $O(n z_\alpha \min{\{\log(n)\log(\lceil d_\alpha/2\rceil),\log(n)+\lceil d_\alpha/2\rceil\}})$.
\end{proof}

\begin{remark}\label{rem:complexity}
The statement in Theorem~\ref{thm:elqp} uses Lemma~\ref{lem:zielonkaSize} to bound the number of leaves of the Zielonka trees by $c!$;
this factor dominates the overall solution complexity.
Hence the complexity decreases for Emerson-Lei conditions with significantly smaller Zielonka trees.
In particular, the proposed rank-lifting algorithm is quasi-polynomial for classes of games in which $|Z_\alpha|$ grows polynomially with $c$
(such as parity games).
We note that this is also the case in reductions that use the full structure of the Zielonka tree to 
reduce EL games to parity games \cite{DBLP:journals/theoretics/CasaresCFL24}. However, it is \emph{not}
the case if one uses the standard LAR-construction \cite{DBLP:conf/stoc/GurevichH82} to reduce EL to parity games
(cf. Theorem~\ref{theorem:existing results}).
In both reductions to parity games, the memory usage is not optimal as it incorporates the memory of both players into the parity game.

\end{remark}

\section{Symbolic Solution using Universal Trees}\label{sec:symbolic}

We now show how hierarchical (hence universal) trees can be used to bound the recursive descent in Zielonka-McNaughton's algorithm
for Emerson-Lei games.
This generalizes previous quasi-polynomial Zielonka-style algorithms for parity games \cite{DBLP:conf/mfcs/Parys19,DBLP:journals/lmcs/LehtinenPSW22}
to EL games
by adding additional recursive calls according to the structure of the condition's Zielonka tree.

Consider an EL game $G=\pair{A,\alpha}$ with set $V$ of vertices, where $\alpha$ is over set $C$ of colors and $\gamma:V\to 2^C$ is the coloring function.
Let $Z(\alpha)=\langle T,\tau\rangle$ be the Zielonka tree of $\alpha$.
The mutually recursive procedures in Algorithm~\ref{alg:ELQPZ0} take as input a (sub)game (also denoted $G$), a node $t\in T$ and two hierarchical trees $U^0$ and $U^1$.
We prove that, for hierarchical trees with suitable parameters, these procedures correctly compute the winning regions of the respective players (Corollary~\ref{cor:symbolic algorithms correct}).
To this end, we first recall standard notions about closed sets and dominions in games and specialize them to EL games. 

Let $i\in \set{0,1}$ denote one of the players in $G$.
A set $X\subseteq V'\subseteq V$ is \emph{$i$-closed} in $V'$ if for all $v\in X\cap V_i$, we have $E(v)\cap X\neq\emptyset$ and
for all $v\in X\cap V_{1-i}$, we have $(E(v)\cap V')\subseteq X$; that is, when restricting the arena to $V'$, player $1-i$ cannot move outside of $X$ and player $i$ has a (memoryless) strategy to stay within $X$.
An $i$-\emph{dominion} in $V'\subseteq V$ is a set $D\subseteq V'$ of vertices 
such that for all $v\in D$, player $i$ has a (general) strategy to win according to $\alpha$ while staying in $D$ (that is, a strategy with which every play starting at $v$ and played on $V'$ stays within $D$, and furthermore is winning for player $i$); in particular, $i$-dominions in $V'$ are $i$-closed in $V'$.

\begin{algorithm}[bt]
	\caption{\label{alg:ELQPZ0}Symbolic solution of EL games with Universal Trees}
	\DontPrintSemicolon

    \textbf{procedure }\textsc{SolveELEven}($G,t,U^0,U^1$):
    
    \textbf{let }$U^1=\langle U^1_1,\ldots, U^1_k\rangle$
    
    $G_1\gets G$
    
    \For{$i\gets 1$\upshape\textbf{ to }$k$}{
    
    $S_i\gets \emptyset$

    \ForAll{$(s \in \child(t))$}{
		$N_{s} \gets \{v\in G_i \mid \gamma(v)\cap (\tau(t)\setminus \tau(s))\neq \emptyset\}$ \;
		
		$G'_i \gets G_i \setminus\attr^{G_i}_0(N_s)$\;

        $W_s \gets \textsc{SolveELOdd}(G'_i,s,U^0,U_i^1)$\;
        
        $S_i\gets S_i\cup W_s$

	}
    $G_{i+1} \gets G_i \setminus \attr^{G_i}_{1}(S_i)$\; 

    }
	\Return{$G_{k+1}$}\medskip
    
    \textbf{procedure }\textsc{SolveELOdd}($G,t,U^0,U^1$):
    
    \textbf{let }$U_0=\langle U^0_1,\ldots, U^0_k\rangle$
    
    $G_1\gets G$
    
    \For{$i\gets 1$\upshape\textbf{ to }$k$}{
    
    $S_i\gets \emptyset$

    \ForAll{$(s \in \child(t))$}{
		$N_{s} \gets \{v\in G_i \mid \gamma(v)\cap (\tau(t)\setminus \tau(s))\neq \emptyset\} $\;
		
		$G'_i \gets G_i \setminus\attr^{G_i}_1(N_s)$\;
        $W_s \gets \textsc{SolveELEven}(G'_i,s,U_i^0,U^1)$\;
        
        $S_i\gets S_i\cup W_s$

	}
    $G_{i+1} \gets G_i \setminus \attr^{G_i}_{0}(S_i)$\;

    }
	\Return{$G_{k+1}$}
    
\end{algorithm}

We collect various properties of dominions.

\begin{lemmarep}\label{lemma:domclosdom}
Let $X\subseteq V'\subseteq V$, let $D\subseteq V'$ be an $i$-dominion in $V'$, and let $X$ be $(1-i)$-closed in $V'$. Then $D\cap X$ is an $i$-dominion in $X$.
\end{lemmarep}
\begin{proof}
Let $v\in D\cap X$. We have to show that 
there is a strategy for player $i$ such that every play played on $X$ that starts at $v$ and follows that strategy stays within $D\cap X$ and is won by player $i$.
Since $D$ is an $i$-dominion in $V'$, there is a strategy $\sigma$ for player $i$ such that
every play played on $V'$ that starts at $v$ and follows $\sigma$ stays within $D$ and is won by player $i$.
Play on $X$ according to $\sigma$, starting from $v$. 
Since $D$ is $i$-closed in $V'$, any move 
by player $1-i$ stays within $D$ (hence in $D\cap X$).
Since $X$ is $(1-i)$-closed in $V'$,
all moves by player $i$, in particular the ones prescribed by $\sigma$, also stay within $D\cap X$. 
Every play on $X\cap D$ resulting from this interaction is compatible with strategy $\sigma$ and hence
won by player $i$. Hence $D\cap X$ indeed is an $i$-dominion in $X$.
\end{proof}

\begin{lemmarep}\label{lemma:domclosdomi}
Let $V'\subseteq V$, let $D\subseteq V'$ be an $i$-dominion in $V'$, and let $X\subseteq D$ be $i$-closed in $V'$. Then $D\setminus \attr^{V'}_{i}(X)$ is an $i$-dominion in $V'\setminus \attr^{V'}_{i}(X)$.
\end{lemmarep}
\begin{proof}
Let $v\in D'=D\setminus \attr^{V'}_{i}(X)$. It suffices to show that 
there is a strategy for player $i$ such that every play played 
on $V'\setminus \attr^{V'}_{i}(X)$ that starts at $v$ and follows that strategy stays within $D'$ and is won by player $i$.
Since $D$ is an $i$-dominion in $V'$, there is a strategy $\sigma$ for player $i$ such that
every play played on $V'$ that starts at $v$ and follows $\sigma$ stays within $D$ and is won by player $i$.
To see that $D'$ is an $i$-dominion, play on $D'$ according to $\sigma$, starting from $v$ and note that any move by player $1-i$ that stays
within $V'\setminus \attr^{V'}_{i}(X)$ also stays within $D'=D\setminus \attr^{V'}_{i}(X)$ since
$D$ is an $i$-dominion. Every move by player $i$ stays within 
$D\setminus \attr^{V'}_{i}(X)$ as well: since any node that belongs to player $i$ and has an edge to $\attr^{V'}_{i}(X)$ is contained in $\attr^{V'}_{i}(X)$,
no node in $D'$ that belongs to player $i$ has an edge to $\attr^{V'}_{i}(X)$. In particular,
the moves prescribed by $\sigma$ remain within $D'$.
Hence every play starting in $D'$ and played according to $\sigma$ stays within $D'$ and is won by player $i$ (since
$\sigma$ is a winning strategy for player $i$). Thus $D'$ is an $i$-dominion in $V'\setminus \attr^{V'}_{i}(X)$.
\end{proof}

Next, we show that non-empty $i$-dominions whose sets of occurring colors do not satisfy the objective of
player $i$ contain non-empty sub-dominions that do not contain at least one of these colors. In more detail,
given a non-empty dominion $D$ of player $i$ and a Zielonka tree node $t$ such that $t$ is not winning for that player (that is, 
$\tau(t)\models \alpha$ iff $i=1$) and such that the label $\tau(t)$ of $t$ contains all colors that occur in the dominion,
there is a non-empty sub-dominion $D'\subseteq D$ for player $i$ such that $D'$ does not contain at least one of the colors from $\tau(t)$.
The latter can be formalized by stating that there is a child $s$ of the Zielonka node $t$ such that all colors that occur in 
$D'$ are contained in $\tau(s)$ (which is a strict subset of $\tau(t)$).
Intuitively, this means that within their dominions, players have winning strategies that eventually avoid dangerous colors.

\begin{lemmarep}\label{lem:subdominion}
Let $V'\subseteq V$, let $D\subseteq V'$ be a non-empty $i$-dominion in $V'$ and let $t\in T$ be a node in $Z(\alpha)$ such that $\tau(t)\models \alpha$
iff $i=1$ and such that all colors that occur in $D$ are contained in $\tau(t)$. Then there is $s\in \child(t)$ such that 
$D$ contains a non-empty $i$-dominion $D'\subseteq D$ (in $V'$) and the colors that occur in $D'$ are contained in $\tau(s)$. 
\end{lemmarep}
\begin{proof}
Let $v\in D$. As $D$ is an $i$-dominion, there is a strategy for player $i$ such that
every play that starts at $v$ and follows that strategy stays within $D$ and is won by player $i$.
Fix this strategy.
Pick a vertex $v'\in D$ such that $v'$ can be reached from $v$ by a finite play $\rho_{v'}$ with the strategy and 
such that $A_{v'}$ is winning for player $i$, where $A_{v'}$ is the set of colors that can be visited from
$v'$ by extending $\rho_{v'}$ following the strategy. 
We point that the strategy fixes only the moves of player $i$.
Hence there may be many ways to extend $\rho_{v'}$ according to the fixed strategy.
We will show that $A_{v'}$ then is contained in the label of some $s\in \child(t)$.

To see that such a vertex exists, assume towards a contradiction that for every vertex 
$v'\in D$ such that $v'$ can be reached from $v$ by a finite play $\rho_{v'}$ with the strategy, the set $A_{v'}$ of colors is winning for player $1-i$.
Then we can construct a play $\rho$ in $D$ that follows the fixed strategy and is winning for player $1-i$, in contradiction to the fixed
strategy being a winning strategy for player $i$. We construct $\rho$ by following the strategy at vertices belonging to 
player $i$ and by fairly exploring all edges at vertices belonging to player $1-i$. For every vertex $v''$ visited infinitely often by $\rho$,
every color from $A_{v''}$ is visited infinitely often as well, since $\rho$ visits every vertex that can be visited from
$v''$ by extending $\rho_{v''}$ following the strategy.

Fix some such vertex $v'$ and
define $D'$ to be the set of game vertices that can be reached from $v'$ by extending $\rho_{v'}$ using the fixed strategy.
By definition, $D'$ is $i$-closed and furthermore an $i$-dominion. We have shown that the set $A_{v'}$ of colors that
occur in $D'$ is winning for player $i$.
Thus there is a child $s$ of $t$ such that $A_{v'}\subseteq \tau(s)$, as required.
\end{proof}

\subsubsection*{Partition of dominions}
We show that the partition of winning regions described in Section~\ref{sec:el} also partitions dominions.
This further refines the insight of Lemma~\ref{lem:subdominion} as follows.
Let $D$ be a non-empty $i$-dominion in a subgame $G$, let $n_D$ denote the number of vertices in $D$, 
let $t\in T$ be a node in the Zielonka tree such that $\tau(t)\models\alpha$ 
iff $i=1$ (so that player $i$ loses plays that visit all colors from $\tau(t)$), and
assume that all colors that occur in $D$ are contained in $\tau(t)$. 
Let $\sigma$ denote a strategy with which player $i$ wins from every vertex in $D$.

Intuitively, player $i$ does not have complete control over the colors visited by playing according to their winning strategy
$\sigma$ in $D$. 
However, at each point, they may pick a subobjective $s\in \child(t)$
such that $\sigma$ plays optimally for that subobjective, attempting to satisfy it. Player $1-i$ may accept 
that and lose by staying in the region for that subobjective forever, or they may be able to spoil the subobjective $s$ by forcing a visit to some vertex 
that sees some color from $\tau(t)$ that is not contained in $\tau(s)$. However, the latter situation may happen only finitely often since otherwise,
player $1-i$ would have a strategy to win in $D$.

We inductively define subsets $D_j\subseteq D$ (for $0\leq j \leq n_D$). 
Put $D_{0}=\emptyset$.
For $j>0$, let $A_j$ denote the set of vertices $v\in D$ such that $v\notin D_{j'}$ for any $j'<j$,
and such that there is $s\in \child(t)$ such
that for every play $\rho$ starting at $v$ and following the strategy $\sigma$,
either all colors visited by $\rho$ are contained in $\tau(s)$, or
$\rho$ visits a color from $\tau(t)\setminus \tau(s)$ and the first vertex in $\rho$ that does so is contained in $D_{j'}$ for some $j'<j$. 
Put $D_{j}=\attr^{D}_{i}(A_j)$. For $v\in D_j$, we say that $v$ has \emph{degree} $j$.

We point out that for $v\in V_i\cap A_{j}$, $E(v)\cap D_{j'}=\emptyset$
for all $j'<j$. 
This holds since $v\in A_j$ implies $v\notin D_{j'}=\attr^{D}_{i}(A_{j'})$ for all $j'<j$.

Crucially, the sets $D_j$ (with $1\leq j\leq n_D$) partition $D$:

\begin{lemma}\label{lemma:dominiondegrees}
Let $D$ be an $i$-dominion of size $n_D$ and let $t\in T$ such that $\tau(t)\models \alpha$ iff $i=1$
and such that the colors that occur in $D$ are contained in $\tau(t)$. Then $D=D_1\cup \ldots \cup D_{n_D}$.
\end{lemma}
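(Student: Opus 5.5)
We need to show that every vertex of $D$ receives some degree $j\le n_D$. The plan has two parts. First, as long as the already-ranked part $R_{j-1}=D_1\cup\cdots\cup D_{j-1}$ is a proper subset of $D$, the next set $A_j$ is non-empty. Second, the sets $D_j$ are pairwise disjoint, so the count is bounded by $n_D$.

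\textbf{Disjointness.} For $j'<j$, $A_j$ excludes vertices of $D_{j'}$ by definition. However $D_j=\attr^D_i(A_j)$ could in principle re-absorb vertices of earlier $D_{j'}$, so I would read $D_j$ as taking only the new vertices, i.e.\ $D_j=\attr^{D}_i(A_j)\setminus R_{j-1}$. With this reading the sets are disjoint. Since each non-empty $A_j$ adds at least one new vertex, $R_{j}$ strictly grows while $R_{j-1}\neq D$. After at most $n_D$ steps, $R_{n_D}=D$.

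\textbf{Non-emptiness of $A_j$.} This is the core. Suppose $A_j=\emptyset$ while $D'=D\setminus R_{j-1}\neq\emptyset$. I would argue that $D'$ is then an $i$-dominion in $D\setminus R_{j-1}$ on which every $s\in\child(t)$ is violated, and derive a contradiction with Lemma~\ref{lem:subdominion}.

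Closure comes first. $R_{j-1}$ is a union of $i$-attractors inside $D$, so its complement in $D$ is $(1-i)$-closed in $D$. Because $D$ is $i$-closed in $V'$, Lemma~\ref{lemma:domclosdom} gives that $D'$ is an $i$-dominion in $D'$ under the same strategy $\sigma$.

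Next, emptiness of $A_j$ means that every $v\in D'$ and every $s\in\child(t)$ admit a $\sigma$-play from $v$ that leaves $\tau(s)$ at a first vertex outside $R_{j-1}$. Up to that vertex the play stays in $D'$: leaving $D'$ means entering $R_{j-1}$, and that can only happen after the play has already hit a color outside $\tau(s)$. Hmm — that is not quite guaranteed, since the play could enter $R_{j-1}$ while still inside the colors of $\tau(s)$. This would be the main obstacle.

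To handle it, I would instead apply Lemma~\ref{lem:subdominion} directly to the $i$-dominion $D'$ with node $t$. The colors of $D'$ lie in $\tau(t)$ and $t$ is losing for player $i$. The lemma yields a non-empty $i$-dominion $D''\subseteq D'$ in $D'$ together with a child $s$ such that all colors of $D''$ lie in $\tau(s)$.

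It then remains to show $D''\cap A_j\neq\emptyset$. Take the vertex $v'$ and the sub-strategy of $\sigma$ from the proof of Lemma~\ref{lem:subdominion} (restricted to $D'$). Any $\sigma$-play from $v'$ either stays in $D''$, and so sees only colors of $\tau(s)$, or exits $D'$ into $R_{j-1}$. Exiting is only possible for player $1-i$, because player $i$'s vertices in $D'$ have no edges into the attractor $R_{j-1}$. The delicate point is the requirement in the definition of $A_j$ that the first bad-color vertex lies in $R_{j-1}$. I expect to settle it by noting that a play which enters $R_{j-1}$ continues there under $\sigma$, and showing that the first bad-color vertex after entering is in $R_{j-1}$. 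For this I would use induction on degree: the $i$-attractor structure of each $D_{j'}$ forces plays downward to $A_{j'}$ and then on to lower degrees. Handling this ordering carefully is where I expect most of the work to be.
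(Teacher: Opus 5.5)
Your counting argument is fine, but the closure step on which the non-emptiness of $A_j$ rests is false. A union of $i$-attractors is not $i$-attractor-closed: a vertex of player $1-i$ whose $D$-successors are split between $D_1$ and $D_2$ lies in neither. So $D\setminus R_{j-1}$ need not be $(1-i)$-closed in $D$, and Lemma~\ref{lemma:domclosdom} does not apply. Under your reading $D_j=\attr^D_i(A_j)\setminus R_{j-1}$, the claim ``$A_j\neq\emptyset$ while $R_{j-1}\neq D$'' actually fails. Take $\alpha=FG\,\neg a\vee FG\,\neg b$, let $t$ be the root of $Z(\alpha)$ with $\tau(t)=\{a,b\}$ and children labelled $\{a\}$ and $\{b\}$, and let $i=0$. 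Let $D=\{w,x,y,u\}$ with $w,y\in V_1$ and $x,u\in V_0$. The colours are $\gamma(w)=\{a,b\}$, $\gamma(x)=\gamma(u)=\{a\}$ and $\gamma(y)=\{b\}$. The edges are $w\to x$, $w\to y$, $y\to y$, $y\to u$, plus self-loops at $x$ and $u$. This is a $0$-dominion with a positional $\sigma$, and $A_1=D_1=\{x,u\}$, $A_2=D_2=\{y\}$. However, $w$ is its own first vertex with a colour outside $\tau(s)$ for both children $s$, so it lies in no $A_j$. It is also attracted neither by $\attr^D_0(A_1)$ nor by $\attr^D_0(A_2)$, so it never receives a degree. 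The attractors must be cumulative, $R_j=\attr^D_i(R_{j-1}\cup A_j)$ (equivalently, attract to $A_j$ inside $D\setminus R_{j-1}$). This is what the residue-based attractors of Algorithm~\ref{alg:Zielonka EL} do, and only then is $D\setminus R_{j-1}$ $(1-i)$-closed.

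Even after this repair, the point you leave open is a missing idea rather than bookkeeping. Lemma~\ref{lem:subdominion} produces a vertex $v'$ \emph{together with a history} $\rho_{v'}$, whereas $A_j$ quantifies over $\sigma$-plays started afresh at $v$. For a strategy with memory these differ, and for an arbitrary winning $\sigma$ the vertex you need may not exist. With the same $\alpha$, take player-$0$ vertices $p,q$ with $\gamma(p)=\{a\}$, $\gamma(q)=\{b\}$ and all four edges, and let $\sigma$ first move to the other vertex and then loop. The plays $pq^\omega$ and $qp^\omega$ are winning, yet $p,q\notin A_1$, so every $D_j$ is empty. No induction on degree can close this for a fixed but arbitrary $\sigma$.

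The paper does not build the $A_j$ from $\sigma$ at all. It reads the degree partition off the Zielonka--McNaughton decomposition of Lemma~\ref{lemma:partition} (for $i=1$, that of the dual condition $\neg\alpha$), identifying $A_{j+1}$ with $D\cap W_{2j+1}$ and $D_{j+1}$ with $D\cap W_{2j+2}$. There the attractors are taken in the residue, hence cumulatively, and the $W_{2j+1}$ are recursively computed winning regions. The covering is therefore inherited from $\sum_j|W_{2j}|=|\subarena|$. Implicitly, the partition is that of the decomposition, with $\sigma$ a strategy compatible with it, and your argument would have to be set up the same way.
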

\begin{proof}
We point out that $D$ is contained in the winning region of player $i$. 
If $i=0$,
then the partition of $D$ according
to the degree of vertices corresponds directly to the partition of the winning region that is considered in
Section~\ref{sec:el}: the vertices $v\in D\cap W_{2j+1}$ 
correspond to the set $A_{j+1}$ while vertices $v\in D\cap W_{2j+2}$ correspond 
to $\attr^{D}_{0}(A_{j+1})=D_{j+1}$, where $0\leq j\leq n_D$; here, we do not make use of the additional partitions of the attractors $W_{2j'}$
($1\leq j'\leq n_D$) described in Section~\ref{sec:el}. 
This partition satisfies the requirements regarding
degrees by Lemma~\ref{lemma:partition}, that is, we indeed have $D_{j+1}=D\cap W_{2j+2}$.

If $i=1$, then we relate $D$ to the partition of the winning region according to the dual condition $\neg\alpha$ (which is an Emerson-Lei condition as well),
and again use Lemma~\ref{lemma:partition} to show that this gives rise to the partition $D_1\cup \ldots \cup D_{n_D}$.
\end{proof}

\begin{theoremrep}\label{theorem:dominions}
Let $G$ be an Emerson-Lei game, $t$ a node in the Zielonka tree such that all colors that are in $G$ are 
contained in $\tau(t)$, $U^0$ an $(n_0,k)$-hierarchical tree and $U^1$ an $(n_1,k)$-hierarchical tree,
where $|\tau(t)|\leq 2k$. Then 
\begin{itemize}
\item[--] the set computed by $\textsc{SolvELEven}(G,t,U^0,U^1)$ is $0$-closed in $G$, contains all $0$-dominions of size up to $n_0$ in $G$ and does not intersect
with any $1$-dominion of size up to $n_1$ in $G$;
\item[--] the set computed by $\textsc{SolvELOdd}(G,t,U^0,U^1)$ is $1$-closed in $G$, contains all $1$-dominions of size up $n_1$ in $G$ and does not intersect
with any $0$-dominion of size up to $n_0$ in $G$.
\end{itemize}
\end{theoremrep}
\begin{proof}
The proof is by induction on the sum $\mathsf{level}(t)+n_0+n_1$.
We consider just the first item, the proof of the second item is dual. 

We first show that $\textsc{SolvELEven}(G,t,U^0,U^1)$ is $0$-closed in $G$.
To this end, we show by induction that for all $0\leq i \leq k$, the set $G_{i+1}$ is $0$-closed in $G$.
For $i=0$, $G_1=G$ trivially is $0$-closed in $G$.
For $i>1$, first consider the case of a vertex $v\in G_{i+1}\cap V_1$. We have to show that $E(v)\cap G\subseteq G_{i+1}$. This follows from the induction hypothesis since
whenever a vertex from $E(v)$ is contained both in $G_i$ and in $\attr^{G_i}_{1}(S_i)$ (meaning that the vertex is removed in the current iteration of the loop), 
then $v$ is also contained in $\attr^{G_i}_{1}(S_i)$
(and removed in the same iteration of the loop). Next, consider $v\in G_{i+1}\cap V_0$. We have to show that $E(v)\cap G_{i+1}\neq\emptyset$. 
By the inductive hypothesis, $G_i$ is $0$-closed, that is, $E(v)\cap G_{i}\neq\emptyset$.
It follows from $v\in G_{i+1}$ that $E(v)$ is not a subset of $\attr^{G_i}_{1}(S_i)$. Hence there is some $w\in E(v)\cap G_{i+1}$ that is not 
contained in $\attr^{G_i}_{1}(S_i)$ so that $w\in E(v)\cap G_{i+1}\neq\emptyset$, as required.

Next we show that $\textsc{SolvELEven}(G,t,U^0,U^1)$ contains all $0$-dominions in $G$ that have size at most
$n_0$. Let $D$ be such a $0$-dominion. Initially, we have $D\subseteq G_1 = G$. It suffices to show that $D$ does not intersect with
$\attr^{G_i}_{1}(S_i)$ for any $i$ since then $D$ is contained in $G_i$ for all $i$; in particular, $D$ then
is contained in the computed result $G_{k+1}$. We continue to show the claim that for all $i$, $D\cap\attr^{G_i}_{1}(S_i)=\emptyset$. 
So let $1\leq i\leq k$. By the induction hypothesis, for each $s\in \child(t)$, the set $W$ 
(computed in line $8$) does
not intersect with any $0$-dominion in $G'_i$ that has size at most $n_0$. Assume towards a contradiction
that there is some $v\in D\cap \attr^{G_i}_{1}(S_i)$. 
We note that $\attr^{G_i}_{1}(S_i)$ is $1$-closed. Consider the play that is obtained from
letting the winning strategy for player $0$ to win within the dominion compete against the player $1$ strategy
with which they attract to $S_i$. Clearly, this play stays within $D$ but eventually reaches $S_i$.
It follows that $S_i$ intersects with $D$, which in turn implies that 
there is some $s\in \child(t)$ for which the set $W$ (computed in line $8$) does intersect with $D$.
The corresponding set $G'_i$ is $1$-closed in $G_i$ so that
by Lemma~\ref{lemma:domclosdom}, $D\cap G'_i$ is a $0$-dominion in $G'_i$,
so that $W$ does intersect with a $0$-dominion in $G'_i$. The contradiction follows since $|D\cap G'_i|\leq|D|\leq n_0$.

It remains to show that $\textsc{SolvELEven}(G,t,U^0,U^1)$ does not intersect with any $1$-dominion in $G$ that has size at most
$n_1$. Let $D$ be such a $1$-dominion in $G$. We note that $\tau(t)\models \alpha$ and that all colors that occur in $G$, hence also all colors that occur in $D$,
are contained in $\tau(t)$. By Lemma~\ref{lemma:dominiondegrees}, every vertex in $D$ has degree at most $n_1$ (in $D$)
and there is some $r\leq n_1$ such that $D$ is partitioned into pairwise disjoint, non-empty sets $D_1, \ldots ,D_{r}$,
where, for $1\leq j\leq r$, all vertices in the set $D_j$ have degree $j$. 
Here, we can pick $r\leq n_1$ since while we have $D_1\neq\emptyset$ by Lemma~\ref{lem:subdominion}, 
there may be some $j'$ such that $D_{j'}=\emptyset$ (and then $D_{j''}=\emptyset$ for all $j''\geq j'$); pick $r=j'$.

Put $d_j:=|D_{j}|$. We point out that $\sum_i d_i\leq n_1$. 
Recall that $U^1$ is a hierarchical $(n_1,k)$-universal tree. Hence 
there are indices $i_1<\ldots <i_r$ such that $U^1_{i_j}$ is a hierarchical $(d_j,k-1)$-universal tree.
To conclude the proof, we show by induction on $j$ that at the end of the $i_j$-th iteration of the loop at line $3$ in the procedure $\textsc{SolvELEven}(G,t,U^0,U^1)$,
the remaining game $G_{i_j+1}$ does not intersect with any set $D_{j'}$ such that $1\leq j'\leq j$, that is, all vertices from $D$ that have degree at most $j$ have
been removed in $G_{i_j+1}$. As $n_1\leq k$, this will imply that $G_{k+1}$ does not intersect with $D$ as required.

So let $1\leq j \leq r$. Since $G_{i_j+1}=G_{i_j}\setminus \attr^{G_{i_j}}_{1}(S_{i_j})$, 
it suffices to show that every vertex from 
$D_{j}\cap G_{i_j}$ is contained in $\attr^{G_{i_j}}_{1}(S_{i_j})$.
So let $v\in D_{j}\cap G_{i_j}$. As $v$ has degree $j$, 
we have $v\in\attr^{D}_{1}(A_j)$ where $A_j$ is the set of vertices
$v\in D$ such that there is $s\in R(t)$ such that
strategy $\sigma$ ensures that from $v$ on, only colors from $\tau(s)$ are visited.
We point out that $G_{i_j}$ is obtained from $G$ by repeatedly removing
sets of the form $\attr^{G_h}_{i}(S_i)$ for some $h$ and $S_i$. 
Also, the set $D''=A_j\cup D_{j-1}\cup \ldots \cup D_0$ is a $1$-dominion in
$G$ since $D$ is a $1$-dominion in $G$.
By repeated application
of Lemma~\ref{lemma:domclosdomi}, starting with the dominion $D''$,
the set $D'=A_j\cap G_{i_j}$ forms a $1$-dominion in $G_{i_j}$.
Intuitively, $D'$ is the part of $A_j$ that still remains in $G_{i_j}$ (that is, the part that has not been removed during 
the first ${i_j}$ iterations of the loop).
Also, $|A_j|\leq d_j$ so that $D'$ has size at most $d_j$.
Next, we show that $D'\subseteq S_{i_j}$, where $S_{i_j}$ is the set that has been
computed at the end of the $i_j$-the iteration
of the loop. So consider $w\in D'$ and let $A^w_j$ denote the set of all
vertices reachable from $w$ by following the strategy $\sigma$. Since $D'$ is an
$i$-dominion in $G_{i_j}$ that has size at most $d_j$, $A^w_j\subseteq D'$ is an
$i$-dominion in $G_{i_j}$ of size at most $d_j$.
In the induction base case, we have $j=1$ so that 
no vertex with degree less than $j$ is reachable from any vertex in $A_j$.
If $j>1$, then $G_{i_j}$ does not contain a vertex of degree less than $j$
by the inner inductive hypothesis
so that again
no vertex with degree less than $j$ is reachable from any vertex in $A_j$.
As $w\in A_j$ and $A_j$ does not contain any vertices
of degree less than $j$, there is $s\in \child(t)$ such that
all colors that occur in $A^w_j$ are contained in $\tau(s)$, that is,
no color from $N_s=\tau(t)\setminus \tau(s)$ occurs in $A^w_j$.
It follows that $A^w_j$ also is an
$i$-dominion in $G'_{i_j}=G_{i_j}\setminus \attr^{G_{i_j}}_{0}(N_s)$ (line $7$ of the algorithm).
As $U^1_{i_j}$ is a hierarchical $(d_j,k-1)$-universal tree, the outer induction
hypothesis gives us that the dominion $A^w_j$ is contained in $\textsc{SolvELOdd}(G'_{i_j},s,U^0,U^1_{i_j})$. Hence 
$A^w_j$ is also contained in $S_{i_j}$ at the end of the loop that starts at line $5$. As $w\in A^w_j$, $w\in S_{i_j}$, as claimed. Hence $D'\subseteq S$.
As player $1$ can attract from $v$ to $A_j$ in $D$ by assumption, they can attract from $v$ to $D'=A_j\cap G_{i_j}$ in $G_{i_j}$. Thus $v\in\attr^{G_{i_j}}_{1}(D')\subseteq \attr^{G_{i_j}}_{1}(S_{i_j})$, as required.
\end{proof}

\begin{corollaryrep}
Let $G$ be an EL game with $n$ vertices and objective $\alpha$ using $c$ colors, let $t\in T$ be the root of $Z(\alpha)$ and assume $\tau(t)\models\alpha$. Then 
$\textsc{SolvELEven}(G,t,U(n,c/2),U(n,c/2))$ computes the winning region of player~$0$ in $G$.
\label{cor:symbolic algorithms correct}
\end{corollaryrep}
\begin{proof}
The result follows directly from Theorem~\ref{theorem:dominions} as the winning region of player $0$ is the union of all $0$-dominions in $G$.
\end{proof}

Algorithm~\ref{alg:ELQPZ0} computes just the winning regions and does not directly yield winning strategies. This entails
improved space complexity in comparison to the rank-lifting algorithm from Section~\ref{sec:elranking} (that also computes strategies):

\begin{lemma} Algorithm~\ref{alg:ELQPZ0} decides Emerson-Lei games with $n$ vertices, $m$ edges, $c$ colors and
Zielonka trees of size $z_\alpha$ in time
$O(mnz_\alpha u(n,c/2)^2)$ and space $O(m+nc)$. 
\end{lemma}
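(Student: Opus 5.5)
Correctness is already settled by Corollary~\ref{cor:symbolic algorithms correct}, so only the running time and space need to be bounded. I will bound the number of procedure calls in the recursion tree generated by $\textsc{SolveELEven}(G,\epsilon,U(n,c/2),U(n,c/2))$. Each call performs work linear in the arena apart from its recursive calls: attractor computations in lines~8 and~12 (and their duals) cost $O(m)$, and there are $k\cdot|\child(t)|$ of them per call. I will charge this local work to the recursive calls it spawns, so that the total time is $O(m)$ times the number of calls, up to the branching overhead.

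The central step is to count calls by a recurrence on the triple $(t,U^0,U^1)$. Let $N(t,U^0,U^1)$ be the number of calls in the recursion rooted at a call with these parameters. For an even call,
$$N(t,U^0,U^1)\le 1+\sum_{i=1}^{k}\sum_{s\in\child(t)} N(s,U^0,U^1_i),$$
where $U^1=\langle U^1_1,\ldots,U^1_k\rangle$ lists the root children. Odd calls satisfy the dual recurrence, splitting $U^0$ instead. Unfolding this along a branch, each step descends one level in the Zielonka tree and descends one level in exactly one of $U^0$ and $U^1$, alternating between them. I will then prove by induction on the depth of $t$ the following claim. For each fixed Zielonka branch, the number of call sequences following that branch is at most the number of pairs $(\ell_0,\ell_1)$, where $\ell_0$ is a root-to-node path in $U^0$ and $\ell_1$ is one in $U^1$. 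This bounds it by $|U^0|\cdot|U^1|$. Leaves of the Zielonka tree at the bottom, and empty tree children at the top, end the recursion. Summing over branches multiplies by the number of leaves, which is at most $z_\alpha$. This gives $O(z_\alpha\,u(n,c/2)^2)$ calls, with $|U(n,c/2)|=O(u(n,c/2))$.

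The cost of each call is where I expect the main obstacle. The naive per-call cost is $O(m\cdot k\cdot|\child(t)|)$, which would add an extra factor. The plan is to count each attractor computation together with the recursive call it precedes, so that each call costs $O(m)$ plus one $O(m)$ attractor in line~12 per loop iteration. The per-iteration attractors number $k\le n$ per call, and they are paid for by the $k$ children, contributing the extra factor $n$ in the stated bound. This gives $O(mn\,z_\alpha u(n,c/2)^2)$. I will verify that the inductive claim in the previous paragraph is robust enough to absorb the sum over $s\in\child(t)$ without an extra factor; this is exactly what the cut-counting argument over Zielonka leaves in Lemma~\ref{lem:rank-lifting-complexity-el} provides, and I will reuse it.

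For space, the recursion depth is at most $d_\alpha\le c$ by Lemma~\ref{lem:zielonkaSize}. Each frame stores $O(n)$ bits of vertex sets ($G_i$, $G'_i$, $S_i$), an index into the universal tree, and a child index. The trees $U(n,c/2)$ are never materialised: only the current subtree identifiers are kept, and these are implicitly navigable from the recursive definition. Together with $O(m)$ for the arena, this yields $O(m+nc)$.
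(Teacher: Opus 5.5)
Your proposal is correct and follows essentially the same route as the paper's proof: $O(m)$ per attractor, with at most $k+1=O(n)$ attractors per call (one immediately before the call, $k$ in the outer loop), a number of calls bounded by the Zielonka-tree factor times $|U^0|\cdot|U^1|$, and space $O(m+nc)$ from recursion depth at most $c$ with a constant number of vertex sets per frame. Your per-branch injection of calls into pairs of universal-tree nodes is a slightly sharper version of the paper's sum-of-sizes recurrence, since it gives the number of Zielonka leaves rather than the number of nodes. It already absorbs the sum over $\child(t)$, so the appeal to the cut-counting argument of Lemma~\ref{lem:rank-lifting-complexity-el} is unnecessary. Likewise, the outer-loop attractors should simply be bounded per call rather than "paid for by the children", because calls at Zielonka leaves have no children to charge them to.
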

\begin{proof}
For each call, the algorithm computes at most $k+1$ attractors, one immediately before the call, and $k$
within the outer loop. This yields the factor $mn$ in the
time complexity, as $k\leq n$ and a single attractor can be computed in time $O(m)$.
Furthermore, each call has $k\cdot |\child(t)|$ recursive calls with reduced parameters.
The sum of the sizes of the Zielonka trees considered in the recursive calls
is exactly the size of the tree below $t$ minus $1$, yielding the overall factor $z_\alpha$.
For each $s\in \child(t)$, the sum of the sizes of the universal
trees in the recursive calls with parameter $s$ is just $|U_1|-1$ ($|U_0|-1$, respectively), yielding the overall factor $u(n,c/2)^2$.

Regarding space complexity, we note that the depth of the recursion is at most $c$, and at each level of the recursion, it suffices to keep
a constant number of sets of size at most $n$ in memory; in particular, the Zielonka tree can be computed on the fly
by computing, in each recursive call, the children $s$ of the current node $t$ as well as their labels.
Thus no additional memory is required to store the full Zielonka tree.
\end{proof}

\section{Conclusion}
We have introduced notions for formally ranking progress in Streett games and, more generally, Emerson-Lei games,
and we have shown that the existence of a finite ranking characterizes winning in such games.
The proposed rankings use Zielonka trees to incorporate memory that winning strategies may require, yet crucially they use universal trees 
as domain for the ranking. Therefore, the rankings combine Zielonka trees and universal trees,
lifting the breakthrough quasi-polynomial method devised for parity games to the more general Streett and Emerson-Lei games in a natural way.

Technically, these results hinge on the employed universal trees being hierarchical in a precise sense; however, we show that
standard universal trees, including the quasi-polynomial sized ones, are hierarchical (see Lemma~\ref{lemma:hierarchical}).

We propose both an asymmetric rank-lifting algorithm for computing winning regions and strategies in Streett and Emerson-Lei games, as well as a symmetric (and symbolic) attractor-based algorithm for the computation of winning regions in Emerson-Lei games, all building on the notions of universal-tree-based rankings introduced before.
By using quasi-polynomial sized universal trees as ranking domain, the proposed algorithms improve both the time and the space complexity for solving Streett and Emerson-Lei games 
in algorithms that allow for the extraction of winning strategies.

For future work, we mark the usage of alternating-cycle decompositions~\cite{DBLP:journals/theoretics/CasaresCFL24} instead of Zielonka trees within the proposed algorithms, and the reduction of Emerson-Lei alternating automata to weak alternating automata, based on Emerson-Lei rankings that use universal trees.

\clearpage
\bibliographystyle{plain}
\bibliography{bib}

\end{document}